
%
%
	\documentclass[
		paper=a4,11pt,american,pagesize,%
		abstract=true,DIV=12,headinclude=true,headlines=0,footinclude=true,footlines=-5,twoside=semi,%
	]{scrartcl}
	\def\docclass{koma}
%
%
%
%
%
%
%
%
%
%
%
%
%
%

	\def\version{arxiv}

	\def\draftmode{false} 

%
\usepackage[utf8]{inputenc}
\makeatletter
%

\usepackage{xifthen}

\newcommand\iflipics[2]{\ifthenelse{\equal{\docclass}{lipics}}{#1}{#2}}
\newcommand\ifkoma[2]{\ifthenelse{\equal{\docclass}{koma}}{#1}{#2}}
\newcommand\ifieee[2]{\ifthenelse{\equal{\docclass}{ieee}}{#1}{#2}}
\newcommand\ifsiam[2]{\ifthenelse{\equal{\docclass}{siam}}{#1}{#2}}
\newcommand\ifsiamsingle[2]{\ifthenelse{\equal{\docclass}{siam-single}}{#1}{#2}}
\newcommand\ifmysiam[2]{\ifthenelse{\equal{\docclass}{my-siam}}{#1}{#2}}
\newcommand\ifacm[2]{\ifthenelse{\equal{\docclass}{acm}}{#1}{#2}}
\newcommand\ifdcc[2]{\ifthenelse{\equal{\docclass}{dcc}}{#1}{#2}}
\newcommand\ifspringerjournal[2]{\ifthenelse{\equal{\docclass}{springer-journal}}{#1}{#2}}
\ifthenelse{ \equal{\docclass}{lipics} \OR \equal{\docclass}{koma} \OR \equal{\docclass}{ieee} \OR \equal{\docclass}{siam} \OR \equal{\docclass}{siam-single} \OR \equal{\docclass}{my-siam} \OR \equal{\docclass}{acm} \OR \equal{\docclass}{dcc} \OR \equal{\docclass}{springer-journal} }{
	\PackageInfo{paper}{Building paper with docclass = \docclass} 
}{
	\PackageWarning{paper}{docclass = "\docclass", but must be one of "lipics", "koma", "ieee", "siam", "siam-single", "my-siam", "acm", "dcc", "springer-journal"}
}

\newcommand\ifmanuscript[2]{\ifthenelse{\equal{\version}{manuscript}}{#1}{#2}}
\newcommand\ifarxiv[2]{\ifthenelse{\equal{\version}{arxiv}}{#1}{#2}}
\newcommand\ifsubmission[2]{\ifthenelse{\equal{\version}{submission}}{#1}{#2}}
\newcommand\ifproceedings[2]{\ifthenelse{\equal{\version}{proceedings}}{#1}{#2}}
\ifthenelse{ 
	\equal{\version}{manuscript} 
	\OR \equal{\version}{arxiv} 
	\OR \equal{\version}{submission} 
	\OR \equal{\version}{proceedings} 
}{
	\PackageInfo{paper}{Building paper version = \version} 
}{
	\PackageWarning{paper}{version = "\version", but must be one of "manuscript", "arxiv", "submission", "proceedings"}
}

\newcommand\ifdraft[2]{\ifthenelse{\equal{\draftmode}{true}}{#1}{#2}}
\ifthenelse{ \equal{\draftmode}{true} \OR \equal{\draftmode}{false} }{
	\PackageInfo{paper}{Building paper with draftmode = \draftmode} 
}{
	\PackageWarning{paper}{draftmode = "\draftmode", but must be "true" or "false"}
}


\usepackage[T1]{fontenc}
\ifsiam{
	\usepackage{lmodern}
	\usepackage{slantsc}
}{}
\ifsiamsingle{
	\usepackage{lmodern}
	\usepackage{slantsc}
}{}
\ifmysiam{
	\usepackage{lmodern}
	\usepackage{slantsc}
}{}
\ifkoma{
	\usepackage{lmodern}
	\usepackage{slantsc}
}{}
\iflipics{
	\usepackage{lmodern}
	\usepackage{slantsc}
}{}
\ifdcc{
	\usepackage{lmodern}
	\usepackage{slantsc}
}{}
\ifspringerjournal{
	\usepackage{lmodern}
	\usepackage{slantsc}
	\usepackage{xcolor}
}{}

\usepackage{babel}

\usepackage{array,multicol}
\ifieee{
	\usepackage[cmex10]{amsmath,mathtools}
	\usepackage{amsfonts,amssymb}
}{}
\ifkoma{
	\usepackage{amsmath,amsfonts,amssymb,mathtools}
}{}
\iflipics{
	\usepackage{amsmath,amsfonts,amssymb,mathtools}
}{}
\ifsiam{
	\usepackage{amsmath,amsfonts,amssymb,mathtools}
}{}
\ifsiamsingle{
	\usepackage{amsmath,amsfonts,amssymb,mathtools}
}{}
\ifmysiam{
	\usepackage{amsmath,amsfonts,amssymb,mathtools}
}{}
\ifacm{
	\usepackage{mathtools}
}{}
\ifdcc{
	\usepackage{amsmath,amsfonts,amssymb,mathtools}
}{}
\ifspringerjournal{
	\usepackage{amsmath,amsfonts,amssymb,mathtools}
}{}

\usepackage{mleftright}\mleftright 
\usepackage{relsize,xspace,booktabs,adjustbox,needspace,pbox,relsize}
\ifieee{
		
		\usepackage{enumitem}
}{
	\ifspringerjournal{
		\usepackage{enumitem}
		\setlist{topsep=\medskipamount}
	}{
		\usepackage{enumitem}
	}
}
\usepackage{graphicx}

\ifacm{}{
	\usepackage{colonequals}
}

\usepackage[american]{wref}

\ifsiam{
	\usepackage{ltexpprt}
}{}
\ifsiamsingle{
	\usepackage{ltexpprt}
}{}

%

\newdimen\makeboxdimen

\newcommand\plaincenter[1]{%
	\mbox{}\hfill#1\hfill\mbox{}%
}


\ifthenelse{\equal{\docclass}{koma} \OR \equal{\docclass}{my-siam}}{
	\setlength\parindent{1.5em}
	\usepackage[headsepline]{scrlayer-scrpage}
	\pagestyle{scrheadings}
	\clearscrheadfoot
	\AtBeginDocument{%
		\automark[section]{}%
	}
	\ohead{\pagemark}
	\rehead{\mytitle}
	\lohead{\headmark}
	\addtokomafont{caption}{\sffamily\small}
	\addtokomafont{captionlabel}{\sffamily\textbf}
	\setcapmargin{2em}
}{}
\ifmysiam{
	\setcapmargin{1em}
	\setcapindent{0em}
}{}
\ifmysiam{
	\setlength\parskip{0pt}
	\RedeclareSectionCommand[
		beforeskip=-1.25\baselineskip,
		afterskip=0.75\baselineskip,
	]{section}
	\RedeclareSectionCommand[
		beforeskip=-1\baselineskip,
		afterskip=-1.5em,
	]{subsection}
	\RedeclareSectionCommand[
		beforeskip=-1\baselineskip,
		afterskip=-1.5em,
	]{subsubsection}
	\RedeclareSectionCommand[
		beforeskip=-.25\baselineskip,
		indent=1.5em,
		afterskip=-1em,
	]{paragraph}
}{}
\ifdcc{
	\ifproceedings{}{
		\pagestyle{plain}
		\setlength{\footskip}{5ex}
	}
}{}

\AtBeginDocument{%
	\let\mytitle\@title%
}


\ifsiam{
	\usepackage[bibtex-alpha]{url-doi-arxiv}
}{}
\ifsiamsingle{
	\usepackage[bibtex-alpha]{url-doi-arxiv}
}{}
\ifmysiam{
	\usepackage[bibtex-alpha]{url-doi-arxiv}
}{}
\ifkoma{
	\usepackage[bibtex-alpha]{url-doi-arxiv}
}{}
\iflipics{
	\usepackage[bibtex]{url-doi-arxiv}
}{}
\ifdcc{
	\usepackage[bibtex-alpha]{url-doi-arxiv}
}{}
\ifspringerjournal{
	\usepackage[bibtex-alpha]{url-doi-arxiv}
}{}

\let\oldthebibliography\thebibliography
\renewcommand\thebibliography[1]{%
	\oldthebibliography{#1}%
	\pdfbookmark[1]{References}{}%
}

%

\usepackage{lscape} 

\ifkoma{
	\usepackage{float}
	\floatstyle{plain}

}{}
\iflipics{
	\usepackage{newfloat}
	\DeclareFloatingEnvironment[%
			name=Algorithm,%
			placement=thb,%
		]{algorithm}
}{}
\ifmysiam{
	\usepackage{newfloat}
	\DeclareFloatingEnvironment[%
			name=Algorithm,%
			placement=thb,%
		]{algorithm}
}

\ifdcc{
	\usepackage{float}
	\usepackage[font={small},labelfont=bf]{caption}
}{}
\ifmysiam{
	\setcounter{topnumber}{3}
	\setcounter{bottomnumber}{3}
	\setcounter{totalnumber}{3}     
	\setcounter{dbltopnumber}{3}    

}{}
\ifsiam{

	\setcounter{topnumber}{3}
	\setcounter{bottomnumber}{3}
	\setcounter{totalnumber}{3}     
	\setcounter{dbltopnumber}{3}    


}{}
\ifsiamsingle{

	\setcounter{topnumber}{3}
	\setcounter{bottomnumber}{3}
	\setcounter{totalnumber}{3}     

}{}

\usepackage{dcolumn}



\usepackage{textcomp} 
\usepackage{listings}

\lstset{
	columns=flexible,
	basewidth={.5em,.5em}, 
	aboveskip=0.5\baselineskip,
	belowskip=0.5\baselineskip,
	tabsize=4,
	emptylines=*1,
	breaklines,
	breakatwhitespace,
	breakindent=30pt,
	prebreak=\raisebox{-1ex}{\hbox{$\hookleftarrow$}},
	upquote,
	showstringspaces=false,
	basicstyle=\small\ttfamily,
	commentstyle=\ttfamily\itshape{},
	identifierstyle=\ttfamily\slshape{},
	keywordstyle=\bfseries,
	numbers=left,
	numberstyle=\ttfamily\tiny{},
	escapechar=|,
	numberblanklines=false,
	captionpos=t,
	numberbychapter=false
}






%
%

\usepackage{tikz}

\usetikzlibrary{positioning,arrows.meta,fit}
\usetikzlibrary{backgrounds,calc,trees,graphs}
\usetikzlibrary{shapes.geometric,shapes.misc}
\usetikzlibrary{decorations.pathreplacing}

\pgfdeclarelayer{background}
\pgfsetlayers{background,main}

\usetikzlibrary{external}
\tikzexternalize[mode=list and make]
\tikzsetexternalprefix{pics/externalized/}
\tikzset{
	external/system call={%
		lualatex \tikzexternalcheckshellescape -halt-on-error %
			-interaction=batchmode -jobname "\image" "\texsource"%
	},
}
\tikzset{external/export=false} 

\newcommand{%
	\begin{tikzpicture}%
	\node[inner sep=0pt] {\input{}};
	\end{tikzpicture}
}[1]{%
	\begin{tikzpicture}%
	\node[inner sep=0pt] {\input{#1}};
	\end{tikzpicture}
}



%

%


\iflipics{
	\newtheorem{fact}[theorem]{Fact}

	\newenvironment{proofof}[1]{%
		\begin{proof}[{{Proof of #1{}}}]%
	}{%
		\end{proof}%
	}
}{}
\ifacm{
	\AtEndPreamble{
		\theoremstyle{acmdefinition}
		\newtheorem{remark}[theorem]{Remark}
		\newtheorem{fact}[theorem]{Fact}
	}
	
	\newenvironment{proofof}[1]{%
		\begin{proof}[{{Proof of #1{}}}]%
	}{%
		\end{proof}%
	}
}{}
\ifsiam{
	\newtheorem{remark}{Remark}
	\newenvironment{proofof}[1]{%
			\begin{proof}[{{#1{}}}]%
		}{%
			\end{proof}%
		}
}{}
\ifsiamsingle{
	\newenvironment{proofof}[1]{%
			\begin{proof}[{{#1{}}}]%
		}{%
			\end{proof}%
		}

	\newtheorem{theorem}{Theorem}[section]
	\newtheorem{proposition}[theorem]{Proposition}
	\newtheorem{lemma}[theorem]{Lemma}
	\newtheorem{conjecture}[theorem]{Conjecture}
	\newtheorem{corollary}[theorem]{Corollary}
	\newtheorem{definition}[theorem]{Definition}
	\newtheorem{remark}[theorem]{Remark}
	\newtheorem{claim}[theorem]{Claim}
	\newtheorem{question}[theorem]{Question}
}{}
\ifthenelse{%
		\equal{\docclass}{lipics} \OR \equal{\docclass}{siam} \OR 
		\equal{\docclass}{siam-single} \OR \equal{\docclass}{acm}%
}{}{
	\usepackage[amsmath,hyperref,thmmarks]{ntheorem}
	
	\theorembodyfont{\slshape}
	\theoremseparator{:}
	\newtheoremstyle{proofstyle}%
	  {\item[\theorem@headerfont\hskip\labelsep ##1\theorem@separator]}%
	  {\item[\theorem@headerfont\hskip\labelsep ##3\theorem@separator]}
	
	\theorempreskip{\topsep} 

	\theoremsymbol{\adjustbox{scale=.8}{$\triangleleft\mkern-1mu$}}
	
	\newtheorem{theorem}{Theorem}[section]
	
	\theoremstyle{plain}
	\theorempreskip{\topsep}

	\newtheorem{lemma}[theorem]{Lemma}
	\newtheorem{conjecture}[theorem]{Conjecture}
	\newtheorem{corollary}[theorem]{Corollary}
	\newtheorem{definition}[theorem]{Definition}
	\newtheorem{question}[theorem]{Question}
	
	\theoremstyle{plain}
	\theorembodyfont{\upshape}

	\newtheorem{remark}[theorem]{Remark}
	
	\newtheorem{claim}[theorem]{Claim}
	
	\theoremsymbol{\raisebox{-.25ex}{$\Box$}}
	\qedsymbol{\raisebox{-.25ex}{$\Box$}}
	
	\theoremstyle{proofstyle}
	\newtheorem{proof}{Proof}
	\newenvironment{proofof}[1]{%
		\begin{proof}[{{Proof of #1{}}}]%
	}{%
		\end{proof}%
	}

}

\iflipics{
		\newenvironment{thmenumerate}[2][]{%
			\begin{enumerate}[
				label={\textsf{\textbf{\color{darkgray}{\makebox[\widthof{(a)}][c]{\textup{(\alph*)}}}}}},
				ref={\ref{#2}\kern.1em--\kern.1em(\alph*)},
				itemsep=0pt,
				topsep=.5ex,
				leftmargin=1.75em,
				#1
			]%
		}{%
			\end{enumerate}%
		}
}{
	\ifspringerjournal{
		\newenvironment{thmenumerate}[2][]{%
			\begin{enumerate}[
				label={\makebox[\widthof{(a)}][c]{\textup{(\alph*)}}},
				ref={\ref{#2}\kern.1em--\kern.1em(\alph*)},
				itemsep=0pt,
				topsep=\smallskipamount,
				leftmargin=1.75em,
				#1
			]%
		}{%
			\end{enumerate}%
		}
	}{
		\newenvironment{thmenumerate}[2][]{%
			\begin{enumerate}[
				label={\makebox[\widthof{(a)}][c]{\textup{(\alph*)}}},
				ref={\ref{#2}\kern.1em--\kern.1em(\alph*)},
				itemsep=0pt,
				#1
			]%
		}{%
			\end{enumerate}%
		}
	}
}

%

\newcommand*\ie{\mbox{i.\hspace{.2ex}e.}}
\newcommand*\eg{\mbox{e.\hspace{.2ex}g.}}

%

\newcommand\N{\mathbb N}

\usepackage{fixmath}




\newcommand{\ESymbol}{\mathbb{E}}

\newcommand{\ProbSymbol}{\ensuremath{\mathbb{P}}}

\DeclarePairedDelimiterXPP\Prob[1]{\ProbSymbol}[]{}{%
	#1%
}
\DeclarePairedDelimiterXPP\E[1]{\ESymbol}[]{}{%
	#1%
}
\DeclarePairedDelimiterXPP\Eover[2]{\ESymbol_{#1}}[]{}{%
	#2%
}
\DeclarePairedDelimiterXPP\ProbIn[2]{\ProbSymbol_{#1}}[]{}{%
	#2%
}
\providecommand{\Prob}{} 
\providecommand{\ProbIn}{} 
\providecommand{\E}{} 
\providecommand{\Eover}{} 



\newcommand{\surroundedmath}[3]{
	\mathchoice{
		#1{#2{#3}#2}%
	}{
		#1{#3}%
	}{
		#1{#3}%
	}{
		#1{#3}%
	}%
}

\newcommand\wrel[1]{\surroundedmath{\mathrel}{\;}{#1}}
\newcommand\wwrel[1]{\surroundedmath{\mathrel}{\;\;}{#1}}

\ifthenelse{%
		\equal{\docclass}{lipics} \OR \equal{\docclass}{siam-single}%
}{}{
	\makeatletter
	\let\oldalign\align
	\let\endoldalign\endalign
	\renewenvironment{align}{%
		\begingroup%
		\let\oldhalign\halign
		\def\halign{%
			\let\oldbreak\\%
			\def\nonnumberbreak{\nonumber\oldbreak*}%
			\def\\{%
				\@ifstar{\nonnumberbreak}{\oldbreak}%
			}%
			\oldhalign%
		}
		\oldalign%
	}{%
		\endoldalign%
		\endgroup%
	}
}
\newcommand*\numberthis[1][]{\stepcounter{equation}\tag{\theequation}}


\allowdisplaybreaks[3]

\newcommand\splitaftercomma[1]{%
  \begingroup
  \begingroup\lccode`~=`, \lowercase{\endgroup
    \edef~{\mathchar\the\mathcode`, \penalty0 \noexpand\hspace{0pt plus .25em}}%
  }\mathcode`,="8000 #1%
  \endgroup
}



\def\mydots{\xleaders\hbox to.5em{\hfill.\hfill}\hfill}
\newlength\tmpLenNotations

\ifdraft{%
	\iflipics{
		\usepackage{lineno} 
	}{
		\usepackage[switch]{lineno} 
	}
	\linenumbers
	\overfullrule=6mm
	
	\usepackage[color,notref,notcite]{showkeys}
	\definecolor{refkey}{gray}{.99}
	\colorlet{labelkey}{green!60!black!60}
	
	\usepackage[inline,nolabel]{showlabels}

	\showlabels{cite}
	\showlabels{citealt}
	\showlabels{citealp}
	\showlabels{citet}
	\showlabels{Citet}
	\showlabels{citep}
	\showlabels{citeauthor}
	\showlabels{Citeauthor}
	\showlabels{citefullauthor}
	\showlabels{citeyear}
	\showlabels{citeyearpar}
	\showlabels{wref}
	\showlabels{wpref}
	\showlabels{wtpref}
	\showlabels{wildref}
	\showlabels{wildpageref}
	\showlabels{wildtpageref}
}{}

\iflipics{
	\ifmanuscript{\hideLIPIcs}{}
	\ifarxiv{\hideLIPIcs}{}
	\ifsubmission{}{\nolinenumbers}
}{}

\ifdraft{}{%
	\usepackage{microtype}
}

\hypersetup{
	final,
	unicode=true, 
	bookmarks=true,
	bookmarksnumbered=true,
	bookmarksdepth=2,
	bookmarksopen=true,
	breaklinks=true,
	hidelinks,
}

%
\newsavebox\tmpbox

\ifdcc{
	\renewcommand\paragraph{\@startsection{paragraph}{4}{\parindent}
	                                      {\smallskipamount}
	                                      {-1em}%
	                                      {\normalfont\normalsize\bfseries}}
}{}
\iflipics{
	\let\oldparagraph\paragraph
	\renewcommand\paragraph[1]{%
		\oldparagraph*{#1}
	}
}{
	\let\oldparagraph\paragraph
	\renewcommand\paragraph[1]{%
		\oldparagraph{#1.}
	}
}

\ifmysiam{
	\let\oldsubsection\subsection
	\renewcommand\subsection[1]{%
		\oldsubsection{#1.}%
	}
	\let\oldsubsubsection\subsubsection
	\renewcommand\subsubsection[1]{%
		\oldsubsubsection{#1.}%
	}
}{}
\ifsiam{
	\let\oldsubsection\subsection
	\renewcommand\subsection[1]{%
		\oldsubsection{#1.}%
	}
	\let\oldsubsubsection\subsubsection
	\renewcommand\subsubsection[1]{%
		\oldsubsubsection{#1.}%
	}
}{}
\ifsiamsingle{
	\let\oldsubsection\subsection
	\renewcommand\subsection[1]{%
		\oldsubsection{#1.}%
	}
	\let\oldsubsubsection\subsubsection
	\renewcommand\subsubsection[1]{%
		\oldsubsubsection{#1.}%
	}
}{}

\let\epsilon\varepsilon

\raggedbottom

\def\myacknowledgements{}
\ifkoma{
	
}{}
\ifieee{
	
}{}
\ifsiam{
	
}{}
\ifsiamsingle{
	
}{}
\ifmysiam{
	
}{}
\ifdcc{
	
}{}
\ifacm{
	
}{}
\ifspringerjournal{
	
}{}

\makeatother

%
%
%

\usepackage{hyperref}

\usepackage{comment}

\newcommand\Gstar{\ensuremath{G^*}}

\usepackage{mycommands}
\usepackage{benscommands}
	\usepackage{algorithm}%
\usepackage{algpseudocode} 

%

%
%
%
%
%
%
%
		\title{Simple approximation algorithms for Polyamorous Scheduling}
	\newcommand\email[1]{\texttt{#1}}
	\author{%
		Yuriy Biktairov%
		\footnote{University of Southern California, USA, \email{biktairo\,@\,usc.edu}}
	\and
		Leszek Gąsieniec%
		\footnote{University of Liverpool, UK, \email{\{lechu,\,n.namrata,\,b.m.smith,\,wild\}\,@\,liverpool.ac.uk}}
	\and
		Wanchote Po Jiamjitrak%
		\footnote{University of Helsinki, Finland, \email{wanchote.jiamjitrak\,@\,helsinki.fi}}
	\and
		Namrata\footnotemark[2]
	\and
		Benjamin Smith\footnotemark[2]
	\and
		Sebastian Wild%
			\footnote{University of Marburg, Germany, \email{wild\,@\,informatik.uni-marburg.de}\\[.5ex]
				Wanchote Po Jiamjitrak is supported by the Research Council of Finland grant No.\ 346968.\\
				Namrata and Sebastian Wild are supported by the Engineering and Physical Sciences Research Council of the UK grant EP/X039447/1.
			}
			\footnotemark[2]
	}
	
	\date{\small\today}

\begin{document}

\ifacm{}{\maketitle} %

\begin{abstract}
In Polyamorous Scheduling, we are given an edge-weighted graph and must find a periodic schedule of matchings in this graph which minimizes the maximal weighted waiting time between consecutive occurrences of the same edge. 
This NP-hard problem generalises Bamboo Garden Trimming and is motivated by the need to find schedules of pairwise meetings in a complex social group.
We present two different analyses of an approximation algorithm based on the Reduce-Fastest heuristic, from which we obtain first a 6-approximation and then a 5.24-approximation for Polyamorous Scheduling.
We also strengthen the extant proof that there is no polynomial-time $(1+\delta)$-approximation algorithm for the Optimisation Polyamorous Scheduling problem for any $\delta < \frac1{12}$ unless P = NP to the bipartite case.
The decision version of Polyamorous Scheduling has a notion of density, similar to that of Pinwheel Scheduling, where problems with density below the threshold are guaranteed to admit a schedule (cf.\ the recently proven 5/6 conjecture, Kawamura, STOC 2024). 
We establish the existence of a similar threshold for Polyamorous Scheduling and give the first non-trivial bounds on the poly density threshold.
\end{abstract}

\ifacm{%
	\maketitle%
}{}

%
%
%
%
%
%
%
%
%

%
%
%
%
%
%
%
%
%
%
%
%
%
%
%
%
%
%
%
%
%
%
%
%
%
%
%
%
%
%
%
%
%
%
%
%
%
%
%
%
%
%
%
%
%
%
%
%
%
%
%
%
%
%
%
%
%
%
%
%
%
%
%
%
%
%
%
%
%
%
%
%
%
%
%
%
%
%
%
%
%
%
%
%
%
%
%
%
%
%
%
%
%
%
%
%
%
%

%
%
%
%
%
%

%
%
%
%
%
%
%
%
%
%
%
%
%
%
%
%
%
%
%
%
%
%
%
%
%
%
%
%
%
%
%
%
%
%


\section{Introduction}
\label{sec:intro}
We prove that simple heuristics from Bamboo Garden Trimming can be generalized to Polyamorous Scheduling and that they yield a constant-factor approximation;  we complement this with hardness-of-approximation and structural results about the problem.

Polyamorous Scheduling is a recently introduced class of periodic scheduling problems.
\defi{Polyamorous Scheduling} problems each consider a ``polycule'', \ie{} a set of individuals connected by a set of pairwise relationships, each characterized by a value indicating its neediness, importance, or emotional weight. 
Their given objective is to devise a periodic schedule of pairwise meetings between couples that minimizes the maximum weighted waiting time between such meetings, subject to the constraint that each individual can meet with at most one person on any given day (see \wref{sec:preliminaries} for formal definitions).

Polyamorous Scheduling (Poly Scheduling in short) was first introduced by Gąsieniec, Smith and Wild in~\cite{GasieniecSmithWild2024}. 
We first consider the optimisation variant of Poly Scheduling namely, the \defi{Optimisation Polyamorous Scheduling} (OPS) problem where the objective is to find a schedule that minimizes the ``\emph{heat}'', \ie{} the worst pain of separation ever felt by any couple.
Unlike simpler periodic scheduling problems, such as the Bamboo Garden Trimming (BGT) problem~\cite{GasieniecEtAl2017}, Poly Scheduling is known to be hard to approximate: No poly-time $(\frac43-\varepsilon)$-approximation exists unless $P = \mathit{NP}$.

Prior to this work, the asymptotically best approximation ratio is achieved by the ``layering algorithm'' from~\cite{GasieniecSmithWild2024}, which gives a $3\lg(\Delta+1)$-approximation where $\Delta$ is the maximal number of partners of any person in the input.
We will show in this paper that this layering algorithm is best possible up to constant factors in its class of algorithms, namely all methods that work by scheduling relationships as series of \emph{disjoint} matchings, but 
that outside of this class, constant-factor approximations are possible.
Indeed, we show that a simple generalization of the \emph{Reduce-Fastest} strategy (with $x=2.89$) achieves a $5.24$-approximation.

We also consider \defi{Decision Polyamorous Scheduling} (DPS), where each couple has a maximum duration between meetings and a schedule must either be found or shown not to exist. Pinwheel Scheduling, the well-studied decision variant of Bamboo Garden Trimming, is a special case of DPS where the polycule is a star graph (similarly, BGT is the set of OPS polycules on star graphs). 
The recent proof of the long-standing $\frac{5}{6}$-density conjecture of Pinwheel Scheduling by~\cite{Kawamura2024} admits a strong sufficient criterion to certify feasibility for this class of polycules; prior to this work, this was lacking for Poly Scheduling.
More precisely, \cite{GasieniecSmithWild2024} presents the notion of ``poly density'' as a generalization of Pinwheel density, but the original definition based on an exponential-size linear program leaves open how to compute the poly density of an instance and hence doesn't provide an efficiently testable sufficient criterion.
In this paper, we show that poly density is always within a constant factor of the largest local Pinwheel density and can hence be well approximated.

A consequence of the above observation is that from the perspective of constant-factor approximations, any OPS instance's cost is dominated by a single BGT instance embedded in it.
This insight was the key ingredient to designing and analyzing a generalization of the Reduce-Fastest approximation~\cite{GasieniecEtAl2017} for Poly Scheduling.

A well-motivated special case of Poly Scheduling considers a \emph{bipartite} graph of relationships (Bipartite Poly Scheduling).
In the original motivation for scheduling meetings of polyamorous people, bipartite polycules correspond to a group of heterosexuals; it also models periodic scheduling scenarios with two explicit classes or hierarchies, such as the members of a sports team who each need to regularly train on a shared set of machines or with a shared set of trainers.
Apart from applications, there are strong indications in~\cite{GasieniecSmithWild2024} that bipartite instances may be more well behaved: All examples of infeasible decision polycules where showing infeasibility was nontrivial involve odd cycles. Moreover, the hardness-of-approximation proof in~\cite{GasieniecSmithWild2024} reduces the Chromatic Index problem to unweighted DPS instances; the latter is trivial on bipartite graphs (since they can always be $\Delta$-edge coloured).
We show that bipartiteness does indeed \emph{not} make the problem easier:  It remains NP-hard to find even approximate solutions to Bipartite Poly Scheduling.  As a consequence, the hardness of Poly Scheduling is intrinsically linked to the frequencies resp.\ weights of edges, not simply the embedded edge colouring problem.\

\subsection{Related Work}
\label{sec:further}
Poly scheduling problems belong to the broader class of periodic scheduling problems, many of which have attracted considerable attention lately~\cite{MR4470881, MR4640744, Kawamura2024, MR4387330}.

Arguably, the simplest periodic scheduling problem is~\defi{Pinwheel Scheduling}, which has been studied extensively since it was introduced in 1989~\cite{holte1989pinwheel}. In Pinwheel Scheduling we are given~$n$ positive integer frequencies~$f_1 \leq f_2 \leq \cdots \leq f_n$. 
The goal is to find a Pinwheel schedule, \ie{} an infinite sequence of tasks $1, \ldots, n$ such that any contiguous time window of length~$f_i$ contains at least one occurrence of~$i$ or to show that no such schedule exists. The \emph{density} of a Pinwheel Scheduling instance is $d := \sum_{i=1} ^{n} \frac{1}{f_i}$. It is easy to observe that $d \leq 1$ is a necessary condition for an instance to be schedulable, but this is not sufficient; in fact, there exists a threshold~$d^{\ast}$ such that~$d \leq d^{\ast}$ implies schedulability. 
A long sequence of works~\cite{holte1989pinwheel, MR1212158, MR1171436, MR1470043, chan1992general, fishburn2002pinwheel, ding2020branch, gkasieniec2022towards} successively improved bounds on~$d^{\ast}$, culminating in the very recent proof of the best possible bound of~$\frac{5}{6}$ by Kawamura’s~\cite{Kawamura2024}~-- resolving the long-standing conjecture of Chan and Chin from~1992~\cite{MR1171436}. Generalizations of Pinwheel Scheduling have also been studied, \eg, with jobs of different lengths~\cite{MR2226970, han1992scheduling}.

We will extensively use the natural optimisation variant of Pinwheel Scheduling, known as the \defi{Bamboo Garden Trimming} (BGT) problem, throughout the paper.  In BGT, a garden~$G$ of~$n$ bamboos~$b_1, b_2, \ldots, b_n$ with known respective daily growth rates~$g_1 \geq g_2 \geq \ldots g_n > 0$ is given. 
Initially, the height of each bamboo is set to zero. A gardener maintains the garden by trimming each bamboo to height zero according to some schedule. The objective of BGT is to design such a schedule, minimizing the maximum observed height of the tallest ever bamboo, subject to the constraint that the gardener can cut only one bamboo at the end of each day and is not allowed to attend the garden at any other time. This problem was introduced by Gąsieniec, Kalsing, Levcopoulos, Lingas, Min, and Radzik in 2017~\cite{GasieniecEtAl2017} and has since attracted a lot of attention.  
This original paper proved that there exists a natural lower bound of~$H:= g_1 + g_2 + \cdots g_n$ on the maximum height of a bamboo. 
One particularly popular line of research into BGT concerns constant-factor approximation ratios for offline scheduling. Gąsieniec et al ~\cite{GasieniecEtAl2017} introduced a 2 approximation, but this has since been improved to the current best of $\frac{4}{3}$ by Kawamura~\cite{Kawamura2024}.
For online scheduling, various greedy approaches were taken. Some of the most natural strategies are to trim the highest bamboo, aka, ~\defi{Reduce-Max}~\cite{alshamrani2015reduce}, to trim the fastest growing bamboo whose height is above a certain threshold, aka,~\defi{Reduce-Fastest}~\cite{GasieniecEtAl2017}, and to trim the bamboo above a certain threshold that will soonest achieve the desired height, aka,~\defi{Deadline-Driven}~\cite{kuszmaul2022bamboo}. 

The first approximation ratio to be proved for Reduce-Max was ~$\mathcal{O}(\log H)$ ~\cite{GasieniecEtAl2017}, but after extensive computer experiments~\cite{MR3945421}, it was conjectured that Reduce-Max keeps the maximum bamboo height within~$\mathcal{O}(H)$. In fact, the authors conjectured a much stronger bound of $2H$ for Reduce-Max. However, this cannot be arbitrarily low. The authors of ~\cite{gkasieniec2024perpetual} constructed an input BGT instance using which they proved that the approximation ratio of Reduce-Max cannot be less than~9/8. On the other hand, Bil\`o et al.~\cite{MR4387330} proved a bound of $9H$ on the maximum bamboo height under the Reduce-Max algorithm, which was later improved by Kuszmaul~\cite{kuszmaul2022bamboo} to $4H$ -- the best bound to be proved so far.

Another popular algorithm is Reduce-Fastest($x$), in which the fastest-growing bamboo with a height of at least~$xH$ is trimmed each day.
The bounds known for Reduce-Fastest are a bit better than that of Reduce-Max. The first approximation ratio of 4 was proved for Reduce-Fastest(2) in~\cite{GasieniecEtAl2017}. The same paper~\cite{MR3945421} that conjectured the $2H$ bound of Reduce-Max, also conjectured that Reduce-Fastest(2) and Reduce-Fastest(1) achieve the maximum height of $3H$ and $2H$, respectively. Subsequently, Bil\`o et al.~\cite{MR4387330} proved that for~$x = 1 + 1/\sqrt{5} \approx 1.45$, the maximum bamboo height achieved  is within the approximation ratio of~$\frac{(3+\sqrt{5})}{2} \approx 2.62$. Kuszmaul~\cite{kuszmaul2022bamboo} showed recently that for any~$x \geq 2$, Reduce-Fastest($x$) keeps all bamboos below~$(x+1)H$, thereby proving one of the conjectures from earlier~\cite{MR3945421}. Kuszmaul also proved that no matter the value of~$x$, Reduce-Fastest($x$) does not give a bound better than $(2.01)H$, thus disproving the conjecture that Reduce-Fastest($1$) keeps all bamboos below~$2H$.

Kuszmaul~\cite{kuszmaul2022bamboo} also proved an approximation ratio of 2 for the simple Deadline-Driven algorithm. The algorithm was first introduced by Liu and Layland~\cite{MR0343898} in the context of a related scheduling problem in the early 1970s.

The Bamboo Garden Trimming problem is closely related to \emph{Cup Games}, which share the mechanic of growth and cutting and are used to model similar applications~\cite{backlog_multi_cup_games, vanilla_cup_games}. Single processor cup games follow a similar daily routine in which an adversarial filler distributes 1 unit of water freely between a set of cups, then an emptier selects a cup and removes up to 1 unit of water from it. Multi-processor versions are also of interest, relating to the multi-gardener version of Bamboo Garden Trimming~\cite{kuszmaul2022bamboo}. As with Bamboo Garden Trimming, greedy algorithms perform well.

Further periodic scheduling problems with less direct connections to Poly Scheduling have also been studied.  
\emph{Patrolling problems} typically involve periodic schedules: for example,~\cite{MR0343898} finds schedules for a fleet of $n$ identical robots which patrol points in a metric space; 
the \emph{Continuous BGT Problem}~\cite{gkasieniec2024perpetual} sends a single robot to points with different frequencies requirements; 
the \emph{Point Patrolling Problem} studied in ~\cite{MR4143586}  assigns one of $n$ workers a single, daily recurring task subject to the constraint that each worker $i$ requires a break of $a_i$ days between work days; 
the \emph{Replenishment Problems with Fixed Turnover Times}, studied in~\cite{MR4467817}, considers vertices in a graph which must be visited with given frequencies, constrained by the goal of minimizing the length of the tour used to visit them.

\subsection{Our Results}
\label{sec:our_results}
Despite being an interesting combinatorial optimisation problem, Poly Scheduling has only been formally studied in~\cite{GasieniecSmithWild2024}. 
This paper introduced several approximations, the notion of poly density, and the first inapproximability result for any periodic scheduling problem; we introduce better approximations, bound and exploit poly density, and extend this inapproximability result to the bipartite case.

An OPS instance~$\mathcal{P}_o = (P, R, g)$ consists of an undirected graph~$(P, R)$ of people $P$ connected by relationships $R$,
along with a desire growth rate~$g_e : R \mapsto \mathbb{R}_{>0}$ for each relationship $e\in R$. 
The \emph{current heat} of an edge is the total desire grown since the last time it was scheduled;
every day, the scheduler must choose a set of all-day meetings such that each person $v\in P$ is not double-booked, 
with the goal of minimising the highest heat ever experienced.

In \wref{sec:approxes}, we define a simple generalization of the Reduce-Fastest($x$) algorithm introduced by \cite{GasieniecEtAl2017}:
Broadly speaking, Reduce-Fastest$(x)$ is an online scheduling method which chooses a matching to schedule each day;
the matching with the fastest-growing edges whose current heat is above some user-selected threshold~$x$.
In \wref{sec:5.24-approx}, we introduce a simple analysis that shows that Reduce-Fastest$(4)$ gives a 6-approximation, while more in-depth analysis shows that Reduce-Fastest$(2 + \frac{2\sqrt{5}}{5} \approx 2.89)$ gives a $3+\sqrt{5} \approx 5.24$ approximation.
These both substantially improve on the~$O(\log \Delta)$ approximation ratio in~\cite{GasieniecSmithWild2024}.
Formal definitions of OPS and Reduce-Fastest$(x)$ for OPS are introduced in \wref{sec:preliminaries}.

When looking for exact solutions for certain OPS instances, it is well established that schedules with exponentially long periods are inevitable, and whether any feasible instance of Pinwheel Scheduling admits a succinctly encodable schedule remains an open problem.
We show that the same is not true for \emph{approximate} solutions. In \wref{sec:length}, we show that for any arbitrary OPS schedule, we can construct a new periodic schedule with polynomial length and a maximum heat of at most four times that of the original schedule; we thus never have to deal with superpolynomial schedules for constant-factor approximations.

\wref{sec:ops-density} revisits the idea of poly density introduced by \cite{GasieniecSmithWild2024} and shows that it is narrowly sandwiched by the maximum personal growth rate of any person in the polycule:
$\Gstar = \max_{v\in P} \sum_{e\in R : v\in e} g(e)$.

\begin{theorem}[Poly Density Approximation]
    \label{thm:OPS_density_combined}
    For an OPS instance with poly density $\Bar{h}^*$ and maximum personal growth rate $\Gstar$, we have $\Gstar \leq \Bar{h}^* < \frac32 \Gstar$.
\end{theorem}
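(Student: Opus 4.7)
The two inequalities require rather different arguments. The lower bound $G^*\le\bar h^*$ is a vertex-local LP-relaxation argument, while the upper bound $\bar h^*<\tfrac32 G^*$ is a constructive statement that ultimately reduces to a weighted Shannon--Vizing-type bound for the fractional chromatic index of the polycule graph.

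\textbf{Lower bound.} I would interpret $\bar h^*$ via the natural matching-polytope LP underlying poly density: variables $\lambda_M\ge0$ assigning a rate to each matching $M$ of the polycule, with $\sum_M\lambda_M=1$ and $\sum_{M\ni e}\lambda_M\ge g(e)/\bar h^*$ for every edge $e$. Fix a vertex $v^\star$ attaining $G^*$ and sum the edge constraints incident to $v^\star$:
\[
\frac{G^*}{\bar h^*}
\;=\;\sum_{e\ni v^\star}\frac{g(e)}{\bar h^*}
\;\le\;\sum_{e\ni v^\star}\sum_{M\ni e}\lambda_M
\;=\;\sum_M\lambda_M\,\bigl|M\cap\delta(v^\star)\bigr|
\;\le\;\sum_M\lambda_M
\;=\;1,
\]
since any matching meets $v^\star$ in at most one edge. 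Rearranging gives $\bar h^*\ge G^*$. Conceptually this just says that, locally at any vertex, a fractional Poly schedule restricts to a fractional Pinwheel solution, whose density can never exceed $1$.

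\textbf{Upper bound and main obstacle.} For the upper bound I would exhibit an explicit convex combination of matchings whose total weight stays strictly below $\tfrac32G^*$. By LP duality this is equivalent to bounding the weighted fractional chromatic index $\chi'_f(G,g)$, which by Edmonds' matching-polytope description equals
\[
\chi'_f(G,g)\;=\;\max\!\Bigl(G^*,\ \max_{\substack{S\subseteq P\text{ odd}\\ |S|\ge 3}}\frac{2\,g(E(S))}{|S|-1}\Bigr).
\]
A standard double-count on edge endpoints yields $g(E(S))\le\tfrac12|S|\,G^*$ (each edge in $E(S)$ contributes to two vertices of $S$, each of total weight at most $G^*$), so every odd-set ratio is bounded by $\tfrac{|S|}{|S|-1}G^*$, maximised at $|S|=3$ with value $\tfrac32G^*$. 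The delicate step is upgrading this $\le$ to a strict inequality: equality at $|S|=3$ would force each vertex of a triangle $S$ to saturate its entire $G^*$-budget using only the three triangle edges, a highly rigid configuration that leaves slack elsewhere in the LP. Ruling this boundary case out -- either by producing a better-than-LP schedule on the offending triangle or by a perturbation argument around it -- is where I expect the bulk of the technical work to lie.
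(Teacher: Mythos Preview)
Your lower bound is correct and is essentially the paper's argument seen from the primal side: the paper works in the dual, exhibiting the weighting $z_e=g(e)/G^*$ on the edges incident to $v^\star$ (and $0$ elsewhere) and observing that every matching contributes at most $1/G^*$ to $\sum_{e\in M}z_e/g(e)$. Your summation of the primal edge constraints around $v^\star$ is the same computation.

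For the upper bound your route genuinely differs. The paper does \emph{not} invoke Edmonds' matching polytope; instead it replaces each edge $e$ by $g(e)$ parallel edges (first for integer weights), applies Shannon's multigraph edge-colouring bound $\chi'\le\lfloor 3\Delta/2\rfloor$, and reads off a fractional schedule by putting mass $1/C$ on each colour class. Non-integer weights are then handled by rational approximation. Your argument via $\chi'_f(G,g)=\max\bigl(G^*,\max_{|S|\text{ odd}}2g(E(S))/(|S|-1)\bigr)$ together with $g(E(S))\le\tfrac12|S|G^*$ is more direct and avoids the approximation step entirely; the price is relying on Edmonds' theorem rather than the more elementary Shannon bound.

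Where you correctly sense trouble is the strict inequality, and here no amount of technical work will save you: a triangle with unit growth rates has $G^*=2$, every matching is a single edge, and the LP forces $\bar h^*=3=\tfrac32 G^*$ exactly. Your own odd-set analysis already shows why: at $|S|=3$ the bound $\tfrac{|S|}{|S|-1}G^*$ is tight precisely when all of each vertex's weight sits on the triangle edges, which is what happens here. The paper's own Lemmas in fact only establish $\bar h^*\le\tfrac32 G^*+\epsilon$ for every $\epsilon>0$, i.e.\ the non-strict $\bar h^*\le\tfrac32 G^*$; the strict ``$<$'' in the theorem statement is a minor slip. So your proposal for the non-strict bound is complete (and arguably cleaner than the paper's), but you should abandon the plan to sharpen it.
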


We then turn our attention to the poly density of Decision Polyamorous Scheduling. 
Each instance of DPS, or \emph{DPS polycule} $\mathcal{P}_d = (P,R,f)$ consists of a set $P$ of people, a set $R$ of relationships, and a set $f$ of “meetup frequencies” $f_e$. The goal is to either find a schedule of pairwise meetings such that each couple $e=\{i, j\}$ meets at least every $f_e$ days or to report that no such schedule exists (subject to the constraint that each person can attend at most one meeting per day). 
A DPS polycule can naturally be represented as a graph of people with the edges representing their relationships; since each person can attend at most one meeting per day, the edges scheduled on any given day must form a matching in this graph.

\wref{sec:dps-density} shows that the poly density of DPS generalizes the density of Pinwheel Scheduling and establishes a poly-density threshold.

\begin{theorem}[Density Threshold]
\label{thm:densityThreshold}
    Any  DPS instance $\mathcal{P}_d = (P, R, f)$ with poly density $d \le \frac14$ is schedulable.
\end{theorem}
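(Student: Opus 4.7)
The plan is to reduce DPS to OPS, translate the density hypothesis into a bound on $\Gstar$ via Theorem~\ref{thm:OPS_density_combined}, and then invoke the Reduce-Fastest analysis of \wref{sec:5.24-approx} to produce a schedule of heat at most $1$.

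Given the DPS instance $(P,R,f)$ with poly density $d$, I would form the OPS instance $(P,R,g)$ with $g_e := 1/f_e$. Since the heat of an edge $e$ after $\tau$ days of waiting is exactly $\tau/f_e$, a DPS-valid schedule (every $e$ meets within its $f_e$-window) corresponds precisely to an OPS schedule of maximum heat at most $1$. Under this substitution the two poly-density LPs are identical, so the OPS poly density equals $d$, and Theorem~\ref{thm:OPS_density_combined} then gives
\begin{equation*}
  \Gstar \;\le\; \bar h^* \;=\; d \;\le\; \tfrac14 .
\end{equation*}

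It remains to exhibit an OPS schedule of heat at most $1$. My route is to reuse the simple Reduce-Fastest calculation from \wref{sec:5.24-approx}, read as an absolute bound in $\Gstar$ rather than as an approximation ratio: at the threshold $x=2$ --- the Poly analogue of BGT's Reduce-Fastest$(2)$, which is known since \cite{GasieniecEtAl2017} to keep every bamboo below $4H=4\Gstar$ --- the algorithm produces a schedule whose maximum heat is at most $4\Gstar$. Combined with the density bound this yields max heat $\le 4\cdot\tfrac14=1$, which is exactly the DPS feasibility criterion, so the original DPS polycule is schedulable.

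The main obstacle is porting the $4\Gstar$ BGT bound to Poly, where the matching constraint can block Reduce-Fastest's chosen fastest-tall edge because one of its endpoints is simultaneously committed to another edge that is also tall. The needed technical ingredient is a local version of the potential argument that underlies the BGT proof: at any single vertex $v$, the aggregate growth rate of all edges at $v$ that are simultaneously above threshold is bounded by $\Gstar(v)\le\Gstar$, so each deferral caused by an endpoint conflict is paid for by heat already accumulated locally and is amortised within a bounded additive number of days. Making this amortisation quantitative delivers exactly the additive slack needed to absorb the matching constraint into the constant $4$, matching the $1/4$ threshold in the statement.
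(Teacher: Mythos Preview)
Your reduction to OPS and the deduction $\Gstar\le d\le\tfrac14$ are exactly right and match the paper. The gap is in the second half: you invoke a $4\Gstar$ heat bound for Reduce-Fastest$(2)$ in the OPS setting, but no such bound is available. The paper's analyses of Reduce-Fastest for OPS require $x\ge 4$ (\wref{thm:6-approx-ratio}, yielding $x+2$) or $x>2$ (\wref{thm:5.24-approx-ratio}, whose bound $x+\tfrac{x^2}{4(x-2)}$ diverges as $x\downarrow 2$); the best constant actually proved is $5.24$, which with $\Gstar\le\tfrac14$ only gives heat $\le 1.31$, not $\le 1$. The factor you are trying to wish away is structural: an OPS edge $e=\{u,v\}$ can be blocked at \emph{both} endpoints, so the total growth rate of its blockers can be as large as $2(1-g(e))$ rather than $1-g(e)$ as in BGT. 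This is precisely why the BGT ``$x+1$'' becomes ``$x+2$'' in the paper's proof and why the $x\ge 4$ hypothesis appears in \wref{cla:inequality-growth-rates}. Your last paragraph sketches a local potential argument but does not carry it out, and the lower bound of \wref{thm:rf-lower-bound} shows there is no slack to spare at $x=2$.

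The paper takes a different and more robust route once $\Gstar\le\tfrac14$ is established: round every $f_e$ down to the nearest power of two, which at most doubles each $1/f_e$ and hence gives local densities $\sum_{e\ni v}1/f'_e\le\tfrac12$; then run the constructive \textsf{PolyGreedy} algorithm (\wref{thm:poly1/2}), which for power-of-two frequencies with local density $\le\tfrac12$ places each edge periodically without conflict. This avoids Reduce-Fastest entirely and yields an explicit periodic schedule for $\mathcal P_d$.
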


Finally, in \wref{sec:inapproximability} we strengthen the inapproximability result introduced by \cite[Theorem 1.3]{GasieniecSmithWild2024} by proving that it holds even in the bipartite case:

\begin{theorem}[SAT Hardness of approximation]
\label{thm:sat-inapprox}
	Unless P = NP, there is no polynomial-time $(1+\delta)$-approximation algorithm 
	for the bipartite Optimisation Poly Scheduling problem for any $\delta < \frac1{12}$.
\end{theorem}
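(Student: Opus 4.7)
The plan is to strengthen the hardness reduction of \cite{GasieniecSmithWild2024} by constructing a gadget-based reduction whose underlying polycule graph is bipartite. Since bipartite graphs are always $\Delta$-edge-colourable by K\"onig's theorem, the original route through the Chromatic Index problem is no longer available; the hardness must come entirely from the weighted (growth-rate) structure. I would therefore reduce directly from a constant-gap SAT variant, e.g.\ a suitable version of Max-3-SAT where distinguishing fully satisfiable instances from instances in which at most a $(1-\eta)$-fraction of clauses can be satisfied simultaneously is NP-hard for some constant $\eta > 0$.

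First, I would design a \emph{variable gadget} for each variable $x_i$: a small bipartite subpolycule containing two ``literal'' edges $e_i^{+}$ and $e_i^{-}$ with equal growth rates, attached to a variable-centre vertex through bipartite auxiliary structure. The rates are chosen so that in any schedule that comes within $(1+\delta)$ of optimum, exactly one of $e_i^{+}, e_i^{-}$ can be scheduled at high frequency while the other must be scheduled at low frequency; this binary choice encodes the truth assignment of $x_i$. All endpoints are placed consistently in the two colour classes so that every literal edge is bipartite.

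Second, I would construct a \emph{clause gadget} for each clause $C_j = \ell_1 \vee \ell_2 \vee \ell_3$, consisting of a shared clause vertex on one side of the partition that is connected via a bipartite helper structure to the three corresponding literal edges $\ell_1, \ell_2, \ell_3$. The growth rates on these helper edges are tuned so that if at least one of the three incident literals is ``high-frequency'' (i.e.\ true under the encoded assignment), then the clause gadget's maximum heat stays at some baseline $H$; but if all three incident literals are ``low-frequency'' (the clause is violated), then the helper edges conflict in the matching constraint badly enough to push the heat up to at least $\tfrac{13}{12}H$. Accumulating $\Theta(m)$ clauses and arguing that even a constant fraction of violated clauses suffices to amplify the local gap into a global factor-$\tfrac{13}{12}$ gap then yields the claimed inapproximability bound $\delta < \tfrac{1}{12}$.

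The main obstacle will be tuning the gadgets so that the gap is \emph{exactly} $\tfrac{13}{12}$ while every edge respects a fixed bipartition. The original reduction exploited odd cycles to create colouring obstructions, producing a cleaner $4/3$ gap; here the gap must be manufactured purely through growth-rate asymmetries, which loses a constant factor. A second delicate point is ruling out ``cheating'' schedules that use very long periods or unusual matchings to defeat the gadget analysis; as in the original hardness proof, this requires an extremal/averaging argument that exploits \wref{thm:OPS_density_combined} and the near-optimality assumption to reduce near-optimal schedules on the gadget to structured schedules to which the combinatorial accounting applies.
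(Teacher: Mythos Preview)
Your proposal has the right high-level shape (gadget reduction with variable and clause gadgets, bipartite by design) but misses the actual mechanism the paper uses and contains an internal inconsistency.

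First, the paper reduces from plain 3SAT, not a gap version of Max-3-SAT. It constructs a bipartite \emph{decision} polycule $\mathcal P_{d\varphi}$ that is feasible iff $\varphi$ is satisfiable, then invokes the DPS-to-OPS conversion (\wref{lem:dps-ops}): if feasible, the corresponding OPS has optimal heat $\le 1$; if infeasible, optimal heat $\ge (F+1)/F$ where $F$ is the largest frequency in $\mathcal P_{d\varphi}$. The constant $\tfrac{13}{12}$ is not ``tuned'' into the gadgets---it drops out because the construction uses only frequencies in $\{3,6,9,12\}$, so $F=12$. Your proposal leaves the origin of this constant unexplained.

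Second, your mechanism is self-contradictory. You first say a single violated clause pushes heat to $\tfrac{13}{12}H$, then also speak of ``accumulating $\Theta(m)$ clauses'' and needing ``a constant fraction of violated clauses'' to amplify the gap. Since heat is a \emph{maximum} over edges, one forced-bad edge already gives the global gap; there is nothing to accumulate, and no need for a PCP-style gap in the source problem. If one violated clause suffices, plain 3SAT is the right source; if many are needed, you have not said how they combine, and Max-3-SAT's gap parameter $\eta$ would show up in the final constant rather than $\tfrac1{12}$.

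Third, the real content of the proof is the concrete gadget design, which your proposal entirely defers. The paper builds a ``True Clock'' that imposes a global period-6 slot structure, variable gadgets whose two possible local schedules encode truth values, duplicator gadgets to fan out literal and constant edges (needed because each edge can only be shared between two gadget nodes), \OR gadgets whose output edge can occupy a designated slot colour iff at least one input literal is true, and tension gadgets that render the polycule infeasible unless \emph{every} \OR output sits in that slot. Each gadget carries a lemma pinning down its possible local schedules in any slot-respecting global schedule. Your proposed averaging argument via \wref{thm:OPS_density_combined} is not needed and would not substitute for this: the difficulty is not bounding heat from below on average, but forcing a specific combinatorial structure on \emph{every} valid schedule, which is what the slot machinery accomplishes.
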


We prove this by constricting a bipartite DPS polycule $\mathcal P_{d\varphi}$ from an arbitrary 3SAT instance $\varphi$ such that $\mathcal P_{d\varphi}$ is schedulable iff $\varphi$ is satisfiable.

\section{Preliminaries}
\label{sec:preliminaries}
This section introduces some general notations which will be used throughout the paper.

We write~$[m,n]$ for $\{m, m+ 1, \ldots, n\}$ and~$[n]$ for ~$[1,n]$. For a set~$\mathcal{A}$, we denote its powerset by~$2^{\mathcal{A}}$. All graphs~$(V, E)$ are simple and undirected, and the \emph{maximum} degree in~$(V, E)$ is denoted by~$\Delta(V,E)$. An edge colouring of a graph is an assignment of colours to its edges so that no two adjacent edges are assigned the same colour. The chromatic index of a graph~$(V, E)$, denoted~$\chi_{1}(V,E)$, is the smallest number of colours with which~$(V, E)$ can be edge coloured. Vizing~\cite{MR0180505} has shown that for any graph~$(V, E)$, the chromatic index~$\chi_{1}(V,E)$ is either its maximum degree~$\Delta$, or else is $\Delta + 1$.
We represent the set of vertices in the neighbourhood of~$v \in V$ by~$\mathcal{N}(v)$.

We now give formal definitions for the optimisation and decision versions of the Polyamorous Scheduling problem, as introduced by~\cite{GasieniecSmithWild2024}.
For a set of edges $R$,
an (infinite) \defi{schedule} is a function assigning days to subsets of edges: $S: \mathbb{N}_0 \rightarrow 2^R$. 
Given a schedule $S$ and an edge $e \in R$, we define the \defi{recurrence time}~$r(e)=r_S(e)$ of $e$ in $S$ as the maximal time between consecutive occurrences of $e$ in $S$, formally
\begin{align}
    r_S(e) \wrel{:=} \sup_{d \in \mathbb{N}} \,
    \begin{cases}
        d+1 &\exists t \in \mathbb{N}_0: e \notin S(t) \cup S(t+1) \cup \cdots \cup S(t+d-1); \\
        1 &\text{otherwise.}
    \end{cases}
\end{align}
(We can allow $r_S(e) = \infty$ if $S$ leaves growing gaps between consecutive occurrences of $e$, but all schedules in the following will have finite recurrence times for all edges.)

\begin{definition}[Optimisation Polyamorous Scheduling]
An OPS instance~$\mathcal{P}_o = (P, R, g)$ consists of an undirected graph~$(P, R)$, where the vertices~$P :=  \{p_1, \ldots, p_n\}$ are $n$ persons and the edges~$R$ are pairwise relationships, along with a desire growth rate~$g : R \to \mathbb{R}_{>0}$ for each relationship in~$R$. 
A schedule~$S : \mathbb{N}_0 \to 2^{R}$ is valid if, for all days,~$t\in \mathbb{N}_0$, $S(t)$ is a matching in $P_o$.
The goal is to find a valid schedule that minimizes the heat~$h := h(S) := \max_{e\in R} \, g(e)\cdot r_S(e)$.
\end{definition}

\begin{definition}[Decision Polyamorous Scheduling]
\label{def:DPS}
A DPS instance~$\mathcal{P}_d = (P, R, f)$ consists of an undirected graph~$(P, R)$ with integer frequencies $f: R \to \mathbb{N}$ for each relationship.

The goal is to find a schedule $S: \mathbb{N}_0 \rightarrow 2^R$, such that
\begin{enumerate}[label=(\alph*)]
\item (no conflicts) for all days $t \in \mathbb{N}_0, S(t)$ is a matching in $\mathcal{P}_d$, and
\item (frequencies) for all $e \in R$, we have $r_S(e) \le f(e)$,
\end{enumerate}
or to report that no such schedule exists. In the latter case, $\mathcal{P}_d$ is called infeasible.
\end{definition}
We write $f_e$ as a shorthand for $f(\{p_i, 
p_j\})$ resp. $f(e)$. 
It is easy to show that an infinite schedule exists if and only if a periodic schedule exists, \ie{} a schedule where there is a $T \in \mathbb{N}$ such that for all $t$, we have $S(t)=S(t+T)$.

It will be convenient for much of the following to use \defi{Normalised} OPS instances, which we define in the following way:
Considering an OPS polycule $\mathcal{P}_o = (P, R, g)$, 
define the personal growth rate $G_v$ for person $v\in P$ as~$G_v := \sum_{e \in R:v \in e} g(e)$ 
and 
the maximum personal growth rate across all $v\in P$ as $\Gstar := \max_{v \in P} G_v$.
$\mathcal{P}_o$ is normalised, or in normal form, if $\Gstar = 1$. 
Any OPS polycule $\mathcal{P}_o$ can be trivially normalised by dividing each growth rate $g_e\in g$ by $\Gstar$, and restored by inverting this.

The following pair of lemmas are taken from~\cite{GasieniecSmithWild2024}; the first shows how to reduce OPS to DPS, while the second shows how DPS solves OPS.

\begin{lemma}[OPS to DPS, \cite{GasieniecSmithWild2024}]
\label{lem:ops-dps}
For every combination of OPS instance $\mathcal{P}_o=(P, R, g)$ and heat value $h$, there exists a DPS instance $\mathcal{P}_d=(P, R, f)$ such that
\begin{enumerate}[label=(\alph*)]
\item any feasible schedule $S: \mathbb{N}_0 \rightarrow 2^R$ for $\mathcal{P}_d$ is a schedule for $\mathcal{P}_o$ with heat $\leq h$, and
\item any schedule $S^{\prime}$ for $\mathcal{P}_o$ with heat $h^{\prime}>h$ is not feasible for $\mathcal{P}_d$.  
\end{enumerate}
\end{lemma}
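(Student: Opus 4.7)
The plan is to translate the OPS heat bound into per-edge integer deadlines. Concretely, I would define the DPS instance $\mathcal{P}_d = (P, R, f)$ on the same underlying graph by setting
\[
    f(e) \wrel{:=} \lfloor h / g(e) \rfloor
    \qquad \text{for every } e \in R.
\]
The motivation is immediate from the definition of heat: an edge $e$ contributes $g(e)\cdot r_S(e)$ to the heat of a schedule $S$, so $g(e)\cdot r_S(e) \le h$ is equivalent (for integer recurrence times) to $r_S(e) \le \lfloor h/g(e)\rfloor = f(e)$, which is exactly the DPS feasibility condition.

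For part (a), I would take any schedule $S$ that is feasible for $\mathcal{P}_d$. By feasibility, $r_S(e) \le f(e) \le h/g(e)$ for every $e\in R$, and since $S$ is also edge-matching-valid by the definition of a feasible DPS schedule, $S$ is also a valid schedule for $\mathcal{P}_o$, with heat $\max_{e\in R} g(e)\, r_S(e) \le h$.

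For part (b), suppose $S'$ is any OPS schedule with heat $h' > h$. Then some $e\in R$ witnesses $g(e)\, r_{S'}(e) > h$, \ie{} $r_{S'}(e) > h/g(e)$. Because $r_{S'}(e)$ is a positive integer, this forces $r_{S'}(e) \ge \lfloor h/g(e)\rfloor + 1 > f(e)$, so $S'$ violates the DPS frequency constraint on $e$ and is infeasible for $\mathcal{P}_d$.

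The only subtlety is the degenerate case $h < g(e)$ for some edge, which makes $f(e) = 0$ and renders $\mathcal{P}_d$ infeasible (no positive integer $r$ satisfies $r \le 0$). This is the intended behavior: every OPS schedule has $r_{S'}(e) \ge 1$ and hence heat at least $g(e) > h$, so $h' > h$ holds for all candidate schedules and (b) is vacuously satisfied, while (a) is vacuous because no schedule is feasible for $\mathcal{P}_d$. Thus no case analysis beyond the direct floor argument is actually required, and I would not expect any genuine obstacle in writing this up; the main thing to state carefully is the integrality step that upgrades the strict inequality $r_{S'}(e) > h/g(e)$ to $r_{S'}(e) > f(e)$.
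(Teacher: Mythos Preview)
Your construction $f(e) = \lfloor h/g(e)\rfloor$ and the two-line verification of (a) and (b) via integrality of the recurrence time are correct and are exactly the intended argument; the paper does not give its own proof here but simply cites the lemma from~\cite{GasieniecSmithWild2024}, where the same floor construction is used.
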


\begin{lemma}[DPS to OPS, \cite{GasieniecSmithWild2024}]
\label{lem:dps-ops}
Let $\mathcal{P}_d=(P, R, f)$ be a DPS instance. Set $F:=\max _{e \in R} f(e)$. There is an OPS instance $\mathcal{P}_o=(P, R, g)$ such that the following holds.
\begin{enumerate}[label=(\alph*)]
\item If $\mathcal{P}_d$ is feasible, then $\mathcal{P}_o$ admits a schedule of heat $h \leq 1$.
\item If $\mathcal{P}_d$ is infeasible, then the optimal heat $h^*$ of $\mathcal{P}_o$ satisfies $h^* \geq(F+1) / F$.
\end{enumerate}
\end{lemma}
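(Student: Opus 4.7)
The plan is to exhibit an explicit OPS instance derived from $\mathcal{P}_d$ by simply turning each frequency into a growth rate, and to use the integrality of frequencies to bootstrap an exact gap in the heat between the feasible and infeasible cases. Concretely, I would define $\mathcal{P}_o = (P, R, g)$ on the same underlying graph by setting $g(e) := 1/f(e)$ for every $e \in R$. With this choice, the heat that edge $e$ contributes to any schedule $S$ is precisely $g(e) \cdot r_S(e) = r_S(e)/f(e)$, so the heat $h(S) = \max_{e \in R} r_S(e)/f(e)$ is essentially a normalised version of the frequency-violation ratio on the DPS side.

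For part~(a), I would argue that a DPS-feasible schedule $S$ is automatically an OPS-valid schedule (both problems require $S(t)$ to be a matching, so ``valid'' means the same thing), and by definition of feasibility we have $r_S(e) \le f(e)$ for every $e \in R$. Substituting into the heat expression gives $h(S) = \max_{e} r_S(e)/f(e) \le 1$, as required.

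For part~(b), I would reason by contrapositive on the OPS side using integrality. Suppose for contradiction that $\mathcal{P}_o$ admits a schedule $S$ with heat $h(S) < (F+1)/F$. Then for every edge $e \in R$ one has $r_S(e)/f(e) < (F+1)/F \le (f(e)+1)/f(e)$, because $f(e) \le F$ and the function $x \mapsto (x+1)/x$ is decreasing. Rearranging yields $r_S(e) < f(e) + 1$, and since both $r_S(e)$ and $f(e)$ are positive integers (recurrence times in an $S$ that attains a finite maximum are integer-valued, as $S$ maps $\mathbb{N}_0$ to matchings), this forces $r_S(e) \le f(e)$ for every $e$. But then $S$ witnesses DPS-feasibility of $\mathcal{P}_d$, contradicting the hypothesis. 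Hence the optimal heat $h^* \ge (F+1)/F$.

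The argument is short and the main step to get right is the integrality step in~(b): I need to justify that the supremum in the definition of $r_S(e)$ is actually attained as an integer, which is immediate if we restrict attention to schedules with finite recurrence times (any schedule of interest here has $r_S(e) < \infty$ for all edges, otherwise its heat is infinite). The only real ``design'' choice is the scaling $g(e) = 1/f(e)$, and once it is made both directions reduce to comparing $r_S(e)$ to $f(e)$ rather than any delicate structural property of matchings.
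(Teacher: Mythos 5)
Your proof is correct and uses exactly the construction the paper intends for this lemma (which it imports from \cite{GasieniecSmithWild2024} without reproducing a proof): setting $g(e)=1/f(e)$, so that heat equals $\max_e r_S(e)/f(e)$, with part (a) immediate from feasibility and part (b) following from the integrality of $r_S(e)$ and $f(e)$ together with the monotonicity of $x\mapsto (x+1)/x$. This is the same reduction the paper applies elsewhere (e.g.\ in the proof of the density threshold and in deriving the $13/12$ gap from $F=12$), so there is nothing to add.
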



\section{Approximating Optimisation Polyamorous Scheduling}
\label{sec:approxes}
This section will examine approximation algorithms for OPS in 3 ways: \wref{sec:lower-bounds} considers several lower bounds on OPS, \wref{sec:reduce-fastest} evaluates reduce-fastest as an approximation algorithm for OPS, and \wref{sec:length} introduces a method for constructing polynomial schedules for OPS and BGT.

As OPS generalises BGT, with instances on star-graphs being fully equivalent, these problems share both features and algorithms. Particularly of interest here is the \emph{Reduce-Fastest} heuristic, which we generalise here and will use in multiple parts of this section:

\begin{definition}[Reduce-Fastest$(x)$ for OPS]
\label{def:reduce-fastest}
    Begin with an OPS instance~$\mathcal{P}_o = (P, R, g)$ and relabel to 
    sort edges $e\in R$ by decreasing growth rates $g(e)$, breaking ties arbitrarily.

    Construct an online schedule using the fixed parameter~$x$ by  scheduling a greedy matching $M$ each day, built the following way:

    For all edges $e=\{u,v\}\in R$ in order of decreasing growth rate, add $e$ to $M$ if and only if
    \begin{itemize}
        \item the current heat of $e$ is at least ~$x\cdot \Gstar$, and
        \item no edge incident at $u$ or $v$ is already in $M$.
    \end{itemize}
\end{definition}

Note that on a star graph, each matching $M$ will contain a single edge, so our OPS Reduce-Fastest behaves like the BGT Reduce-Fastest heuristic introduced by \cite{GasieniecEtAl2017} on such instances.

\subsection{Lower Bounds}
\label{sec:lower-bounds}
We will begin by defining Disjoint Matching algorithms and demonstrating them to be a dead end in \wref{sec:disjoint_matching_algs}, then show a simple but general lower bound from the local BGT instance in \wref{sec:lb-BGT}, then a lower bound for the Reduce-Fastest$(x)$ algorithm in \wref{sec:lb-rf}.

\subsubsection{Disjoint Matching Algorithms}
\label{sec:disjoint_matching_algs}

In addition to introducing OPS, \cite{gkasieniec2024perpetual} introduced multiple proofs that OPS is NP-hard, several lower bounds for it, and the \emph{Layering Algorithm} -- an $\mathcal{O}(\log \Delta)$-approximation for OPS.

This algorithm partitions edges into layers based on their growth rates, uses a fixed round-robin schedule for each layer, and then interleaves these schedules into another fixed round-robin schedule.
Further details are not needed here, only that each edge only appears in a single matching, and thus, all matchings in schedules produced by the Layering Algorithm are \emph{disjoint}. 

We define Disjoint Matching algorithms as algorithms whose schedules contain only disjoint matchings, and will show that such algorithms obey the following:

\begin{theorem}
    \label{thm:lb-DMA}
    The approximation ratio of any Disjoint Matching algorithm for OPS is $\Omega(\log n)$. 
\end{theorem}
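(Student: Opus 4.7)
The plan is to exhibit a family of OPS instances on $n$ vertices on which the optimum heat is $1$ but every Disjoint Matching algorithm incurs heat $\Omega(\log n)$. The construction I would use is a vertex-disjoint union of stars $S_1,\ldots,S_k$, where $S_i$ has centre $c_i$ and $2^i$ leaves, and every edge of $S_i$ carries growth rate $g(e)=2^{-i}$. Then $n=\Theta(2^k)$ so $k=\Theta(\log n)$, and every centre $c_i$ has personal growth rate $2^i\cdot 2^{-i}=1$, so $\Gstar=1$.

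For the upper bound on OPT, I would exploit that the stars are vertex-disjoint and schedule them in parallel: on day $t$, trim the $(t\bmod 2^i)$-th edge of $S_i$ in every star simultaneously. This is a valid matching, and each edge of $S_i$ recurs exactly every $2^i$ days, so its heat is $2^i\cdot 2^{-i}=1$. Together with the trivial bound $\mathrm{OPT}\ge \Gstar = 1$, this gives $\mathrm{OPT}=1$.

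For the lower bound, suppose a schedule uses only pairwise edge-disjoint matchings $M_1,\ldots,M_q$; these then partition $R$. Let $p_j$ be the asymptotic fraction of days on which $M_j$ is played and write $G_j:=\max_{e\in M_j}g(e)$. A pigeonhole argument shows that the recurrence time of every edge of $M_j$ is at least $1/p_j$, so the heat is at least $\max_j G_j/p_j$; minimising over $p$ with $\sum_j p_j\le 1$ yields heat $\ge \sum_j G_j$. It remains to show $\sum_j G_j=\Omega(k)$. Since all edges of $S_i$ share the centre $c_i$, each matching contains at most one edge of $S_i$, so at least $2^i$ distinct matchings must include some edge from $S_1\cup\cdots\cup S_i$. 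Writing $b_i$ for the number of matchings whose heaviest edge lies in $S_i$, this gives the constraints $\sum_{i'\le i}b_{i'}\ge 2^i$ for every $i$, while $\sum_j G_j=\sum_i b_i\cdot 2^{-i}$. A short LP-style minimisation, with the extremum attained at $b_1=2$ and $b_i=2^{i-1}$ for $i\ge 2$, yields $\sum_j G_j\ge (k+1)/2=\Omega(\log n)$.

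The main obstacle is the passage from the asymptotic density $p_j$ to a genuine worst-case lower bound of $1/p_j$ on the recurrence of $M_j$: it is standard (if every gap were strictly less than $T$ then $M_j$ would occur in every window of length $T$, forcing density $\ge 1/T$), but needs to be stated cleanly and, for possibly non-periodic schedules, applied along an infinite subsequence of gaps. Everything else — the construction, the extremal choice of $b_i$, and the final conversion to $\Omega(\log n)$ — is elementary counting.
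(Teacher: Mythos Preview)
Your proof is correct and follows essentially the same approach as the paper: a disjoint union of stars with balanced personal growth rates so that $\mathrm{OPT}=1$, the observation that a disjoint-matching schedule is a BGT instance on the matchings' maximum growth rates (whence heat $\ge \sum_j G_j$), and a counting argument showing this sum is $\Omega(\log n)$. The only difference is cosmetic: the paper takes stars of sizes $1,2,\ldots,d$ with growth rates $1/i$, so that ``at least $i$ matchings have $G_j\ge 1/i$'' yields $\sum_j G_j\ge H_d$ directly without your LP step.
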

\begin{proof}
    Consider an OPS instance~$\mathcal{P}^*_o = (P, R, g)$ consisting of $d$ disjoint stars $*_1,\ldots,*_i,\ldots, *_d$, such that star $i$ has $i$ edges and $g= \frac{1}{i}$ for all edges in star $i$. The first four such stars are shown in \wref{fig:disjoint-stars}. 
    The optimal heat for $\mathcal{P}^*_o$ is 1, and can be obtained by scheduling each star independently in a round-robin fashion. 

    Any set of disjoint matchings $\mathcal{M}$ which covers all edges $e\in R$ will consist of at least $d$ matchings due to the largest star $*_d$. 
    Let~$g_M^{*}$ be the maximum growth rate of any edge in matching~$M_i\in \mathcal{M}$, \ie{} $g_i^* := \max_{e\in M_i} g(e)$ and $g^* = \max_{i\in\mathcal{M}} g_i^*$. 
    In the given disjoint stars, there is at least one matching with $g_i^* = 1$, at least two matchings with $g_i^* \ge 1/2$, at least three with $g_i^* \ge 1/3$, and so on. 
    Once we commit to scheduling the matchings as atomic objects, the problem has become equivalent to a BGT instance with~$d$ bamboos of growth rates~$g_1^{*},\ldots,g_d^*$. 
    The optimal height for any schedule of this instance is at least~$\sum_{i=1}^d g_i^* \ge \sum_{i=1}^{\Delta} 1/i \sim \ln \Delta$.
    Since $n = \sum_{i=1}^\Delta 2i-1 = \Delta^2$, any such schedule for $\mathcal{P}^*_o$ thus has heat $\Omega(\log\Delta)=\Omega(\log n)$, compared to the optimal heat of $1$.
\end{proof}

This means that, in its ``league'' of algorithms that first divide the graph into a set of disjoint matchings and then schedule these matchings as atomic objects, the Layering Algorithm's approximation ratio is best possible up to constant factors. To do much better than this, different kinds of approach will be needed

\begin{figure}[hbt]
\centering
\includegraphics{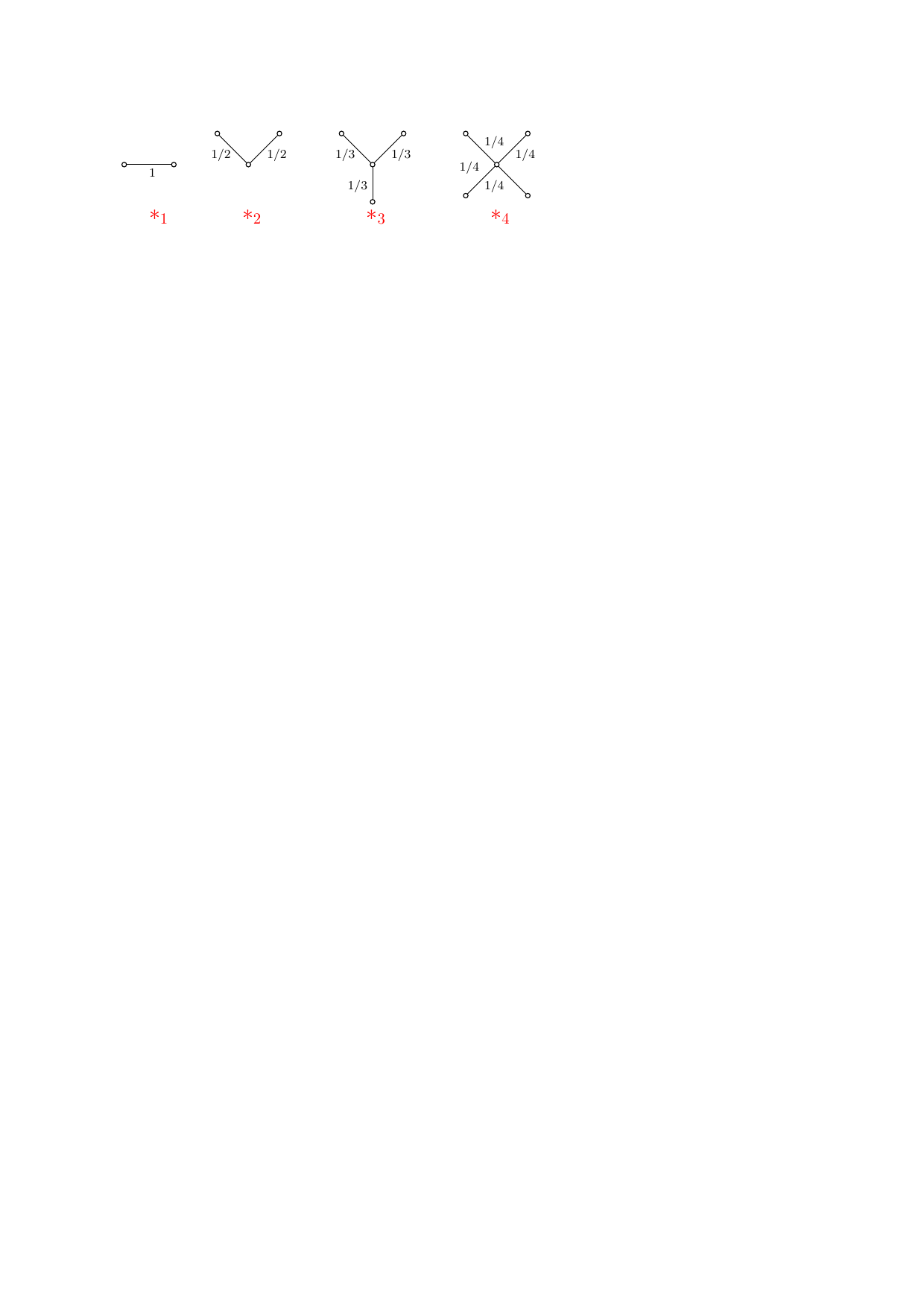}
\caption{An instance of disjoint set of four stars that give the worst-case lower bound of 
$\Omega(\log n)$ for the maximum heat for an OPS instance when disjoint matching is scheduled.}
\label{fig:disjoint-stars}
\end{figure}

\subsubsection{Lower bound due to the local BGT instance}
\label{sec:lb-BGT}
Recall the BGT problem introduced in \wref{sec:further}, and that BGT is the special case of OPS on star graphs. Thus, the largest lower bound obtained from the embedded BGT instances in the OPS instance will be a natural lower bound for OPS.
More formally, consider a vertex~$v \in P$, recalling that the personal growth rate $G_v$ for person $v\in P$ as~$G_v := \sum_{e \in R:v \in e} g(e)$ and the maximum personal growth rate across all $v\in P$ as $\Gstar := \max_{v \in P} G_v$.
We claim that~$\Gstar$ is a natural lower bound on the maximum heat of any schedule of an OPS instance. 
\begin{theorem}
\label{thm:bgtlowerbound}
For any OPS instance $\mathcal{P}_o = (P, R, g)$, $\Gstar$ is a lower bound on the maximum heat of any schedule for $\mathcal{P}_o$.
\end{theorem}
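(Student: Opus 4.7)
The plan is to reduce the claim to the known BGT lower bound applied at the vertex attaining the maximum personal growth rate. Since OPS on a star is exactly BGT, and the edges incident to a single person $v$ must all be scheduled on distinct days (as $v$ can attend at most one meeting per day), the restriction of any valid OPS schedule to the edges incident to $v$ behaves like a schedule for the BGT instance on $v$'s star. So the classical bound $H = \sum g_i$ from~\cite{GasieniecEtAl2017} transfers directly.

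Concretely, I would proceed as follows. First, pick $v^* \in P$ with $G_{v^*} = \Gstar$ and let $S$ be any valid schedule of heat $h$. Assume for contradiction $h < \Gstar$. By definition of heat, for every edge $e$ incident to $v^*$ we have $g(e)\,r_S(e) \le h$, hence
\[
    \frac{1}{r_S(e)} \;\ge\; \frac{g(e)}{h}.
\]
The key observation is that this forces the \emph{aggregate} density of meetings at $v^*$ to exceed $1$:
\[
    \sum_{e \in R : v^* \in e} \frac{1}{r_S(e)}
    \;\ge\; \sum_{e \in R : v^* \in e} \frac{g(e)}{h}
    \;=\; \frac{G_{v^*}}{h} \;=\; \frac{\Gstar}{h} \;>\; 1.
\]

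Second, I would convert this density statement into an actual counting contradiction over a finite window $[0,T]$. By the definition of $r_S(e)$, in any window of $T$ days the edge $e$ must be scheduled at least $\lfloor T/r_S(e)\rfloor \ge T/r_S(e) - 1$ times. Summing over edges incident to $v^*$, the total number of meetings attended by $v^*$ in $[0,T]$ is at least
\[
    \sum_{e \in R : v^* \in e}\!\!\Bigl(\tfrac{T}{r_S(e)} - 1\Bigr)
    \;\ge\; \frac{T\,\Gstar}{h} - \deg(v^*).
\]
On the other hand, since $v^*$ attends at most one meeting per day, this total is at most $T$. Combining yields
\[
    T\Bigl(\tfrac{\Gstar}{h}-1\Bigr) \;\le\; \deg(v^*),
\]
which is violated for sufficiently large $T$ because $\Gstar/h > 1$ is a fixed positive constant. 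This contradiction gives $h \ge \Gstar$.

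The argument is conceptually light, so the main subtlety is handling the infinite horizon cleanly: one cannot directly add ``densities'' without invoking a window of length $T$ and a limit argument, and the rounding losses from $\lfloor T/r_S(e)\rfloor$ must be controlled by $\deg(v^*)$, which is harmless because $\deg(v^*)$ is a constant independent of $T$. If one preferred a fully periodic argument, the same counting works verbatim on one period $T$ using $\sum_e n_e \le T$ where $n_e$ is the number of appearances of $e$ in the period; I would mention this as a remark but would present the non-periodic version as above to match the schedule definition in \wref{sec:preliminaries}.
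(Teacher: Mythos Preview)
Your proof is correct and follows essentially the same approach as the paper: both fix $v^*$ attaining $\Gstar$, assume $h < \Gstar$, and derive a contradiction from the fact that at most one edge incident to $v^*$ can be scheduled per day. The only cosmetic difference is the bookkeeping---you count meetings in a finite window (a density argument yielding $\sum_e 1/r_S(e) > 1$), whereas the paper sums accumulated heat (a volume argument showing total growth $T\cdot\Gstar$ outpaces total removal $T\cdot h$)---but these are two standard renderings of the same BGT lower bound.
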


\begin{proof}
This can be easily proved by contradiction. Let $v^* = \argmax_{v \in P}  \sum_{e:v \in e}g(e)$ be the vertex that corresponds to~$\Gstar$. Assume that there exists a periodic schedule that keeps the heat of all edges incident on~$v^*$ to some heat~$h < \Gstar$. 
In the neighbourhood of $v^*$, $\mathcal{N}(v^*)$, we can only schedule one edge on any given day.
After~$\bigl\lfloor\frac{|N(v^*)|\cdot h} {\Gstar-h}\bigr\rfloor+1$ days, the heat of at least one edge must be greater than~$h$, a contradiction.
\end{proof}

\subsubsection{Lower bound for Reduce-Fastest} 
\label{sec:lb-rf}
The next section will analyse the approximation ratio of the Reduce-Fastest heuristic, but first, we will demonstrate a lower bound for this heuristic. The proof concerns a fully connected OPS polycule which is nominally fair (all growth rates are the same), but which is labelled adversarially with respect to a single unfortunate individual.

\begin{theorem}
\label{thm:rf-lower-bound}  
    When applied to normalised OPS instances $\mathcal{P}_o = (P, R, g)$, Reduce-Fastest$(x)$ with $x\geq 2$ can produce schedules with heat $h\geq x+2-\frac{1}{|P|-1}$.
\end{theorem}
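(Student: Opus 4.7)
The plan is to exhibit a specific adversarial normalised OPS instance on which Reduce-Fastest$(x)$, together with a worst-case tie-breaking order, produces a schedule of heat at least $x + 2 - \tfrac{1}{|P|-1}$. I would take $P$ with $n := |P|$ odd, let $R$ contain every edge of the complete graph $K_n$, and set $g(e) = 1/(n-1)$ on every edge; then $G_v = 1$ for each $v$, so $G^* = 1$ and the instance is normalised. Fix an unfortunate individual $p^* \in P$ and let $V' := P \setminus \{p^*\}$. The key graph-theoretic fact I would use is that, because $|V'| = n - 1$ is even, $K_{n-1}$ on $V'$ admits a 1-factorisation $M_1, \ldots, M_{n-2}$ into perfect matchings. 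The adversary then fixes the tie-breaking order so that it lists the edges of $M_1$ first, then those of $M_2$, \ldots, then those of $M_{n-2}$, and only after all of these non-$p^*$ edges the $n - 1$ edges incident to $p^*$.

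With $T_0 := x(n-1)$, I would prove by day-by-day induction that Reduce-Fastest produces the following schedule. For $0 \le t < T_0$ nothing is scheduled, because the current heats $t/(n-1)$ stay below $x$. On day $T_0 + k$ for $0 \le k \le n-3$, every edge of $M_1, \ldots, M_k$ has heat at most $k/(n-1) \le (n-3)/(n-1) < x$ and is rejected, whereas every edge of $M_{k+1}$ still has heat $x + k/(n-1) \ge x$ and is accepted; the greedy scan picks up the entire perfect matching $M_{k+1}$, saturating $V'$ and thereby blocking both the later $M_j$'s and every $p^*$-edge. After these $n-2$ days, every non-$p^*$ edge has heat at most $(n-2)/(n-1) < x$, and it stays below $x$ for the next $n-1$ days. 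Consequently, on each day $T_0 + (n-2) + i$ with $0 \le i \le n-2$, greedy skips every non-$p^*$ edge and picks the first not-yet-scheduled $p^*$-edge in the adversary's order, which then blocks every other $p^*$-edge at $p^*$. Across these $n-1$ days, the $n - 1$ edges incident to $p^*$ are scheduled one per day, each for the first time.

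Let $e^*$ be the last such $p^*$-edge, first scheduled on day $T^* := T_0 + 2n - 4 = x(n-1) + 2n - 4$. Since $e^* \notin S(t)$ for any $t < T^*$, the initial idle window $[0, T^* - 1]$ of length $T^*$ certifies $r_S(e^*) \ge T^* + 1$, and therefore
\[
h(S) \;\ge\; g(e^*)\, r_S(e^*) \;\ge\; \frac{T^* + 1}{n - 1} \;=\; \frac{x(n-1) + 2n - 3}{n-1} \;=\; x + 2 - \frac{1}{n-1},
\]
which, using $|P| - 1 = n - 1$, is exactly the claimed bound. The main thing to verify carefully is that the simulation really plays out as described: in the non-$p^*$ phase, the greedy procedure must recover an entire perfect matching $M_{k+1}$ of $K_{n-1}$ before the scan ever reaches any $p^*$-edge, which is why $n - 1$ has to be even (so $V'$ is fully saturated and no stray vertex lets a $p^*$-edge sneak in). The other subtle point is tracking the $+1$ in the definition of $r_S$: Reduce-Fastest fires on day $T_0$ as soon as the current heat equals $g\cdot T_0 = x$, but the initial never-scheduled window contributes $r_S(e^*) = T^* + 1$ rather than $T^*$, and it is this extra $+1$ that sharpens the naive-looking $x + 2 - \tfrac{2}{n-1}$ into the claimed $x + 2 - \tfrac{1}{n-1}$.
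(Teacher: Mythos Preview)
Your proposal is correct and follows essentially the same approach as the paper: the same adversarial instance (the complete graph $K_n$ with $n$ odd and uniform growth rates $1/(n-1)$), the same key structural fact (the $1$-factorisation of $K_{n-1}$ into $n-2$ perfect matchings on $V'$, followed by the $n-1$ edges through $p^*$), and the same adversarial tie-breaking order that forces the matchings of $K_{n-1}$ to be scheduled before any $p^*$-edge. Your day-by-day simulation and the explicit tracking of $r_S(e^*) \ge T^*+1$ are more detailed than the paper's argument, but the underlying construction and computation are identical.
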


\begin{proof}
For some odd~$n$, consider a normalised OPS instance $\mathcal{P}_o = (P, R, g)$ where $(P,R)$ is the complete graph $K_{n}$ on $n$ vertices, and has uniform growth rates of $\frac{1}{n-1}$.
Reduce-Fastest($x$) will not schedule edges until they reach the threshold heat $x$, and will then schedule edges according to growth rate (all tied), then to an arbitrary choice of their ordering to break ties. To construct an adversarial ordering of edges, we will consider two subgraphs: the arbitrary induced subgraph~$K_{n-1}$ of~$P$, and the subgraph of edges in $K_{n}$, but not $K_{n-1}$, called $K_{n}-R(K_{n-1})$.

Since~$n$ is odd,~$n-1$ is even, and $K_{n-1}$ is 1-factorizable, \ie{} edges of~$K_{n-1}$ can be decomposed into~$n-2$ disjoint matchings, each covering all vertices in~$K_{n-1}$. As only one vertex appears in $K_{n}$ but not $K_{n-1}$, $K_{n}-R(K_{n-1})$ requires a further $n-1$ matchings to cover, as each matching can only contain a single edge from $K_{n}-R(K_{n-1})$.

So, for some initial ordering of edges $e\in R$, Reduce-Fastest$(x)$ will first wait for all edges to grow to heat $x$, then schedule matchings in $K_{n-1}$, then matchings covering $K_{n}-R(K_{n-1})$. This means that the last edge is scheduled $(n-2+n-1) = 2n-3$ days later. This last edge thus starts with heat at least $x$ and grows a further heat of~${(2n - 3)}\cdot \frac{1}{n-1} = 2 - \frac{1}{n-1}$ before being scheduled.

Note that in this $2n-3$ day period, the heat of each edge grows by~$\frac{1}{n-1}$ per day, so no previously scheduled edge will reach the threshold of $x \ge 2$, and thus no edge will interfere with the scheduling sequence.
\end{proof}

\begin{remark}
\label{rem:threshold}
Note that if Reduce-Fastest is applied to the above graph with~$x<2$, the maximum heat will tend to infinity for an appropriately large $n$. This is because the edges which are scheduled first will cross the threshold $x$ for a second time before the last edge is scheduled, causing this last edge to grow indefinitely.
\end{remark}

\subsection{Reduce-Fastest}
\label{sec:reduce-fastest}
This section gives two analyses for the approximation ratio of the OPS version of Reduce-Fastest~$(x)$, defined by \wref{def:reduce-fastest}. 

Both analyses use a notion of \emph{blocking}: Each day, Reduce-Fastest constructs a matching $M$ by greedily selecting edges with heat at least $\Gstar x$ according to their growth rates (with ties broken by some initial ordering). This means that for some edge $e$ with heat $h \ge \Gstar x$ not to be added to $M$, a higher priority edge containing one of the vertices in $e$ must block $e$ by being added to $M$ before $e$ was considered.

\subsubsection{6-Approximation Algorithm}

In this section, we prove that Reduce-Fastest$(4)$ is a 6-approximation for OPS by proving \wref{thm:6-approx-ratio}.
This analysis is inspired by the similar proof of an $(x+1)$-approximation algorithm for the BGT problem in~\cite{kuszmaul2022bamboo}. 

\begin{theorem}
\label{thm:6-approx-ratio}
Given a normalised OPS instance $\mathcal{P}_o = (P, R, g)$; for all $x \ge 4$, the maximum heat achieved by an edge in Reduce-Fastest$(x)$ is strictly less than $x + 2$.
\end{theorem}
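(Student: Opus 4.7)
I would proceed by contradiction, adapting the Kuszmaul-style analysis of BGT Reduce-Fastest$(x)$ to the two-sided blocking present in OPS.

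Suppose for contradiction that some edge $e^* = \{u^*, v^*\}$ first reaches heat $\ge x + 2$ on day $t^*$. Let $t_0 < t^*$ be the last day with $h_{e^*}(t_0) < x$, and set $\tau := t^* - t_0$. During the interval $[t_0+1, t^*-1]$ the edge $e^*$ is never scheduled (otherwise its heat would reset to $g(e^*) \le 1 \le x$, contradicting the choice of $t_0$), and its heat grows by more than $2$; hence $\tau \cdot g(e^*) > 2$, and on each of the $\tau - 1$ days in this interval, a strictly higher-priority edge incident to $u^*$ or $v^*$ must be scheduled instead of $e^*$.

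Let $F_u := \{e' \in R \setminus \{e^*\} : u^* \in e',\ g(e') \ge g(e^*)\}$ and define $F_v$ analogously; write $B_u$ and $B_v$ for the numbers of days in $[t_0+1, t^*-1]$ on which some edge of $F_u$, resp.\ $F_v$, is scheduled. Each blocked day contributes to at least one of $B_u, B_v$, so $B_u + B_v \ge \tau - 1$. The normalisation $\Gstar = 1$ together with $g(e') \ge g(e^*)$ for $e' \in F_u$ yields $|F_u| \le 1/g(e^*) - 1$, and likewise for $F_v$. Because Reduce-Fastest schedules $e'$ only once $h_{e'} \ge x$ and the heat resets to $g(e')$ thereafter, consecutive schedulings of any fixed $e' \in F_u$ are separated by at least $\lceil x/g(e') \rceil$ days, so the number of schedulings of $e'$ within the $(\tau - 1)$-day interval satisfies the integer-valued bound
\begin{align*}
    N_{e'} \ \le \ 1 + \lfloor (\tau - 2)\, g(e') / x \rfloor.
\end{align*}
Summing gives $B_u \le \sum_{e' \in F_u} N_{e'}$ and the analogous bound for $B_v$.

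The contradiction emerges by combining $B_u + B_v \ge \tau - 1$ with these per-edge bounds together with $\tau > 2/g(e^*)$. The worst case is uniform $g(e') = g(e^*)$ across $F_u \cup F_v$, as this maximises $|F_u| + |F_v|$ subject to the growth-rate sum constraint. Partitioning the possible values of $\tau$ into regimes $\tau \in [2 + kx/g(e^*),\, 2 + (k+1)x/g(e^*))$ for $k = 0, 1, 2, \ldots$, each $N_{e'}$ is at most $k+1$, so $B_u + B_v \le 2(k+1)(1/g(e^*) - 1)$; one checks that for $x \ge 4$ this upper bound falls strictly below $\tau - 1$ in every such regime, producing the contradiction.

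The main obstacle is the last step. A purely continuous bound of the form $N_{e'} \le 1 + (\tau - 2) g(e')/x$ summed over $F_u$ only rules out large values of $g(e^*)$ and fails for small $g(e^*)$, where the additive term $|F_u| + |F_v|$ can be as large as $2/g(e^*) - 2$ and is tight against $\tau - 1 > 2/g(e^*) - 1$. Exploiting the integrality of $N_{e'}$---equivalently, the regime-by-regime analysis above---is what closes the gap and makes the argument work uniformly over $g(e^*) \in (0, 1]$ whenever $x \ge 4$.
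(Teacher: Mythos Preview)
Your setup and the blocking-count framework are essentially the paper's, but the final step has a genuine gap. The claim that ``the worst case is uniform $g(e') = g(e^*)$'' is not justified and is in fact false for the floor bound: with $g(e^*)=0.1$, $x=4$, $\tau=41$ (so $\alpha=(\tau-2)/x=9.75$), nine edges of rate $0.1$ contribute at most $9$ to $B_u$, but two edges of rate $0.45$ contribute $2(1+\lfloor 0.45\cdot 9.75\rfloor)=10$. Your regime-by-regime check therefore only covers the uniform case, and the reduction to it is unsound. (Uniformity \emph{is} tight for the continuous bound $1+(\tau-2)g(e')/x$, but you have already observed that the continuous bound is too weak.)

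The idea you are missing is that $\tau$ is bounded \emph{above} as well as below. Since $t_0$ is the last day with heat $<x$ and $t^*$ the first with heat $\ge x+2$, the growth $(\tau-1)\,g(e^*)$ is strictly less than $2$, so $\tau-2 < 2/g(e^*)$. Combined with your per-edge gap bound this gives, for any blocker $e'$ with $m\ge 2$ occurrences, $g(e')\cdot(\tau-2) \ge x(m-1)$, hence $g(e') > \tfrac{x}{2}(m-1)g(e^*) \ge 2(m-1)g(e^*)\ge m\,g(e^*)$ once $x\ge 4$; together with $g(e')\ge g(e^*)$ for $m=1$ one obtains $m \le g(e')/g(e^*)$ uniformly. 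Summing over all blockers then yields $\tau-1 \le \sum m \le \frac{1}{g(e^*)}\sum g(e') \le \frac{2(1-g(e^*))}{g(e^*)}$, and the heat at $t^*$ is strictly below $x+g(e^*)+(\tau-1)g(e^*)\le x+2-g(e^*)<x+2$. This is exactly the paper's argument: the upper bound on $\tau$ replaces your regime analysis with a single clean inequality that handles arbitrary growth rates.
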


\begin{proof}
Suppose for the sake of contradiction that some edge $e_i \in R$ achieves heat of at least $x + 2$ at time $t_2$ after most recently reaching heat $x$ at time $t_1$. This implies that on every day in the interval $[t_1, t_2)$,
~$e_i$ was blocked by a higher-priority adjacent edge. 
Consider the sequence of these $t_2 - t_1$ blocking edges and denote the set of distinct edges in this sequence by $S$.
$e_i$ could be blocked by two separate higher-priority edges at the same time -- in this case, arbitrarily select one to include in the sequence
For all $e_j \in S$, we denote by $m_j$ the number of times that $e_j$ appears in the sequence.
For an edge to appear $m_j$ times in our sequence, its growth rate must follow:

\begin{claim}
\label{cla:inequality-growth-rates}
For all $e_j \in S$, we have $g(e_j) > m_j \cdot g(e_i)$.
\end{claim}

\begin{proof}
Begin by considering the case of $m_j = 1$, which simply implies that $g(e_j) \ge g(e_i)$. This follows from the definition of Reduce-Fastest($x$) (\wref{def:reduce-fastest}) -- the edge with faster heat growth rate is scheduled earlier and hence blocks the edge~$e_i$.

Given that edge $e_j$ can have any initial heat, it could be scheduled as early as $t_1$. After this, it must grow to heat $x$ before each time it is scheduled again, $m_j - 1$ times total. Thus, to achieve $m_j \ge 2$, edge $e_j$ has to grow a total heat of at least $x(m_j - 1)$ in the interval $(t_1, t_2)$. In the same time interval $\mathcal{T}$, $e_i$ must grow less than 2 units of heat, as it starts with heat $x$ and doesn't reach a heat of $x + 2$ until $t_2$. Thus 
\[\frac{2}{g(e_i)} \wrel> \mathcal{T} \wrel\ge \frac{x(m_j-1)}{g(e_j)}\]
and
\[g(e_j) \wrel> x(m_j - 1)\frac{g(e_i)}{2} \wrel> 2(m_j - 1) g(e_i) \wrel> m_j g(e_i),\]
for $ x\ge 4$ and $m_j \ge 2$.
This completes the proof of the claim.
\end{proof}

Using \wref{cla:inequality-growth-rates}, we can argue from the number of blockers:
\[
t_2 - t_1 
\wrel\le \sum_{e_j \in S} m_j 
\wrel< \sum_{e_j \in S} \frac{g(e_j)}{g(e_i)} 
\wrel= \frac{1}{g(e_i)} \sum_{e_j \in S} g(e_j) 
\wrel\le \frac{2}{g(e_i)} (1 - g(e_i)),
\]
where the last inequality is a product of normalisation -- since~$\Gstar = 1$, for each edge~$e_i = \{u, v\}$, we have~$G_u \le 1$ and~$G_v \le 1$.

Since the heat of $e_i$ is less than $x + g(e_i)$ at $t_1$, the heat level reached by $t_2$ is strictly less than
\[
x + g(e_i) + (t_2 - t_1) g(e_i) 
\wwrel\le x + g(e_i) \biggl(1 + \frac{2}{g(e_i)} \bigl(1 - g(e_i)\bigr)\biggr) 
\wwrel= x + 2 - g(e_i) 
\wwrel< x + 2, 
\]
which is a contradiction.
\end{proof}

\begin{corollary}
Reduce-Fastest$(4)$ is a 6-approximation of OPS.
\end{corollary}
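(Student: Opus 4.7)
The plan is to derive the corollary as an essentially immediate consequence of the preceding theorem combined with the lower bound established in \wref{thm:bgtlowerbound}, since both bounds are naturally stated in terms of the maximum personal growth rate $\Gstar$.

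First I would observe that it suffices to verify the claim on normalised instances. Given an arbitrary OPS instance $\mathcal{P}_o = (P, R, g)$, rescaling every growth rate by $1/\Gstar$ produces a normalised instance with $\Gstar = 1$. By homogeneity, Reduce-Fastest$(x)$ generates the same sequence of matchings on the original and normalised inputs (its threshold $x \cdot \Gstar$ scales with the instance), and every heat value on the original instance is exactly $\Gstar$ times the corresponding heat on the normalised one. Hence the approximation ratio is invariant under this rescaling, and it suffices to compare the normalised Reduce-Fastest heat with the normalised optimum.

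Next, applying the preceding theorem with $x = 4$ to the normalised instance shows that every edge's heat stays strictly below $x + 2 = 6$ throughout the Reduce-Fastest$(4)$ schedule. Since \wref{thm:bgtlowerbound} gives $h^* \geq \Gstar = 1$ for the normalised optimum, the ratio between the heat achieved by Reduce-Fastest$(4)$ and the optimal heat is strictly less than $6$. Rescaling back to the original instance preserves this ratio, yielding the claimed $6$-approximation. The only point requiring any care is the scale-invariance of Reduce-Fastest$(x)$, but this is immediate from \wref{def:reduce-fastest}: the comparisons determining which matching is chosen each day depend only on ratios of heats to $\Gstar$ and on the ordering of growth rates, both of which are preserved under uniform scaling.
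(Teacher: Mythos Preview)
Your proposal is correct and follows exactly the intended route: combine the $x+2$ heat bound from the preceding theorem (with $x=4$) with the lower bound $h^*\ge\Gstar$ from \wref{thm:bgtlowerbound}, after reducing to the normalised case. The paper leaves this corollary unproved as immediate, so your argument is, if anything, more explicit than the paper's, particularly in spelling out the scale-invariance of Reduce-Fastest$(x)$.
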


From Theorems~\ref{thm:rf-lower-bound} and~\ref{thm:6-approx-ratio}, the following simple corollary follows.

\begin{corollary}
\label{cor:tight-bound}
Given an OPS instance $\mathcal{P}_o = (P, R, g)$, 
the $x + 2$ bound on the maximum heat achieved by Reduce-Fastest$(x)$ is tight for~$x \ge 4$ and large $n$. 
\end{corollary}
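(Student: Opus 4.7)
The plan is to derive this corollary directly by pinching the maximum heat produced by Reduce-Fastest$(x)$ between the upper bound of \wref{thm:6-approx-ratio} and the lower bound of \wref{thm:rf-lower-bound}. First I would note that \wref{thm:6-approx-ratio} applied to any normalised OPS instance with $x \ge 4$ already provides the upper direction: no edge ever reaches heat $x+2$, so the supremum of heats produced by Reduce-Fastest$(x)$ across all such instances is strictly less than $x+2$.

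Next I would invoke \wref{thm:rf-lower-bound}, which for every odd $n$ exhibits a normalised instance on the complete graph $K_n$ together with an adversarial tie-breaking rule whose Reduce-Fastest$(x)$ schedule achieves heat at least $x+2-\frac{1}{n-1}$, a construction valid already for $x \ge 2$. Combining the two yields, for every $x \ge 4$ and every sufficiently large odd $n$, an instance of size $n$ whose Reduce-Fastest$(x)$ heat $h_n$ obeys
\begin{equation*}
    x + 2 - \frac{1}{n-1} \wrel\le h_n \wrel< x + 2,
\end{equation*}
so that $h_n \to x + 2$ as $n \to \infty$, and no strictly smaller function of $x$ can serve as a universal upper bound on the heat of Reduce-Fastest$(x)$. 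In particular, the $x+2$ bound is tight in the limit of large $n$, exactly as claimed.

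There is no substantive obstacle here; the real work is already encapsulated in the two cited theorems. The only bookkeeping is to align the ranges of $x$ for which each result applies: the lower-bound construction requires only $x \ge 2$, while the upper bound requires $x \ge 4$, and their intersection $x \ge 4$ is precisely the regime in which the corollary is stated. It is worth flagging, in the spirit of \wref{rem:threshold}, that this argument gives no information for $x < 4$: the blocker-counting step of \wref{thm:6-approx-ratio} breaks at that threshold, so tightness (or even boundedness) below $x = 4$ would need a separate treatment.
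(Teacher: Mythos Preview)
Your proposal is correct and matches the paper's approach exactly: the paper simply states that the corollary follows from \wref{thm:rf-lower-bound} and \wref{thm:6-approx-ratio} without further elaboration, and your argument spells out precisely how those two bounds squeeze the heat into $[x+2-\tfrac{1}{n-1},\,x+2)$ for $x\ge 4$.
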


\subsubsection{5.24-Approximation Algorithm}
\label{sec:5.24-approx}
In this section, we prove our best bound on Reduce-Fastest: Reduce-Fastest(2.89) is a 5.24 approximation for OPS. 
The analysis follows the related proof of approximation algorithm for the BGT problem
in~\cite{MR4387330}.

\begin{theorem}
\label{thm:5.24-approx-ratio}
    Given a normalised OPS instance $\mathcal{P}_o = (P, R, g)$; for all $x > 2$, the maximum heat achieved by an edge in Reduce-Fastest$(x)$ is bounded by
    \begin{align*}
        \max \left\{x + \frac{x^2}{4(x - 2)},\;
        x + \frac{1}{2} + \frac{x^2}{4(x - 1)}  \right\}.
    \end{align*}
\end{theorem}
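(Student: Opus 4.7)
The plan is to follow the template of \wref{thm:6-approx-ratio} (the 6-approximation) but with a sharper use of the blocker constraints, mirroring the analogous BGT refinement of Bil\`o et al.~\cite{MR4387330}. Suppose for contradiction some edge $e_i$ attains heat strictly greater than the stated maximum at some earliest time $t_2$, and let $t_1 < t_2$ be the most recent time at which $e_i$'s heat was at most $x$. Set $\mathcal{T} := t_2 - t_1$, $g^* := g(e_i)$, $\beta := \mathcal{T} g^*$, and let $S$ be the set of distinct blockers of $e_i$ in $[t_1, t_2)$ with $m_j$ the count of each. I inherit the three properties used in the 6-approximation proof: (i)~$g(e_j) \ge g^*$ (priority); (ii)~$(m_j - 1) x \le \mathcal{T}\, g(e_j)$ (regrowth between repeated blocks); and (iii)~$\sum_{e_j \in S} g(e_j) \le 2(1 - g^*)$ (growth rate budget at the two endpoints of $e_i$). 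The heat of $e_i$ at $t_2$ is at most $x + g^* + \beta$, so it suffices to upper bound $g^* + \beta$.

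The key new idea is to retain (ii) in its quantitative form $g(e_j) \ge (m_j-1)x/\mathcal{T}$ rather than relax it to the uniform $g(e_j) \ge 2(m_j-1)g^*$ used for $x \ge 4$. Together with (i), each blocker satisfies $g(e_j) \ge \max(g^*,\, (m_j - 1)x/\mathcal{T})$. Summing this against (iii) and rearranging into a quadratic in $\beta$, an AM--GM-style bound on two terms summing to at most $x$ will produce the tight estimate $\beta(x - c) \le x^2/4$ for an appropriate constant $c \in \{1, 2\}$.

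A case analysis on the structure of the blocker set then yields the two expressions in the theorem statement. The case in which blockers are spread across both endpoints of $e_i$ uses $c = 2$ (reflecting the factor $2$ in (iii)) and gives $g^* + \beta \le x^2/[4(x-2)]$; the case in which blockers are concentrated at a single endpoint uses $c = 1$, together with a $+1/2$ correction coming from the ``above-threshold'' excess heat that the dominant blocker may already have carried at the start of the interval, giving $g^* + \beta \le \tfrac12 + x^2/[4(x-1)]$. In either case $g^* + \beta$ is bounded by the maximum of the two expressions, contradicting the choice of $t_2$.

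The main obstacle will be producing the factor $4$ in the denominator. A direct continuous relaxation of (i)--(iii) yields only $\beta \le 2x(1-g^*)/(x-2+2g^*)$, whose supremum over $g^* \in (0,1]$ is $2x/(x-2)$, strictly weaker than the desired $x^2/[4(x-2)]$. Recovering the quadratic-form improvement requires applying AM--GM to two carefully chosen terms extracted from (ii) rather than treating (ii) as one linear constraint, and the $+\tfrac12$ correction in the one-endpoint case demands careful bookkeeping of the initial excess heat a blocker can import into $[t_1, t_2)$.
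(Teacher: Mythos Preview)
Your proposal has a genuine gap and a mistaken reading of where the two terms in the $\max$ come from.

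First, the case split. The two expressions in the statement do \emph{not} arise from ``blockers at both endpoints'' versus ``blockers at one endpoint.'' In the paper's argument one obtains a single bound
\[
h(S_{\mathrm{rf}}) \;\le\; x + g(e_i) + \frac{x^2}{4\bigl(2g(e_i)+x-2\bigr)},
\]
valid for all $g(e_i)\in[0,\tfrac12]$; this function is convex in $g(e_i)$, so its maximum is attained at an endpoint. Plugging in $g(e_i)=0$ gives $x+\frac{x^2}{4(x-2)}$, and $g(e_i)=\tfrac12$ gives $x+\tfrac12+\frac{x^2}{4(x-1)}$. The ``$+\tfrac12$'' is literally $g(e_i)=\tfrac12$, not an initial-excess-heat correction, and the denominators $(x-2)$ and $(x-1)$ are $2g(e_i)+x-2$ at those two values, not a one-versus-two endpoint dichotomy. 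Your proposed split is not something you control: the adversary chooses the blocker configuration.

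Second, and more importantly, your constraints (i)--(iii) are too coarse to recover the factor $4$, and your appeal to AM--GM on ``two carefully chosen terms'' is not a plan. You yourself compute that (i)--(iii) give only $\beta \le 2x(1-g^*)/(x-2+2g^*)$, which is genuinely the best you can extract from those three static inequalities. What is missing is \emph{temporal} information that (i)--(iii) do not encode: if $N$ distinct blockers each appear for the first time on $N$ distinct days of the interval, the $j$-th such first appearance removes at least $j\cdot g^*$ heat (it has been growing since day $0$), and symmetrically for the leftover after last appearances. These triangular sums give a $N^2 g^*$ lower bound on the volume \emph{not} available for repeated schedulings. Since each repeated scheduling removes at least $x$, one gets
\[
T \;=\; N + N_r \;\le\; N + \frac{2T(1-g^*) - N^2 g^*}{x},
\]
hence $T(x-2+2g^*) \le Nx - N^2 g^*$. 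Only now does AM--GM (equivalently, maximizing the quadratic $Nx - N^2 g^*$ over $N$) yield $T(x-2+2g^*)\le x^2/(4g^*)$, which is the bound you need. The $N^2$ term is the whole point, and it comes from a volume-counting argument over the time interval, not from (ii) alone.
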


\begin{proof}
Let $S_{\mathrm{rf}}$ be the schedule produced when Reduce-Fastest$(x)$ is applied to~$\mathcal{P}_o$, $h=h(S_{\mathrm{rf}})$ be the heat of this schedule, and~$e_i = \{u,v\}$ be an edge which reaches heat~$h)$ on day~$T$. 
Before reaching day~$T$, the heat of~$e_j$ has grown for some period without being scheduled, reaching threshold~$x$ on day 0 in this period.
As~$e_i$ was not scheduled in the interval $[0, T - 1]$, for each day in this interval, there must have been some higher-priority edge adjacent to~$e_i$ which was scheduled instead;
construct the \emph{blocking sequence}~$S$ from these $T$ blocking edges.
On any day which has two separate blocking edges, arbitrarily choose one to add to~$S$ and add the other to the \emph{suppressed sequence}~$Q$.
Let $N$ be the number of distinct edges in~$S$.

Let the \emph{volume} $V$ be the overall growth of $e_i$ and its adjacent edges in the days of the interval $[0, T - 1]$. 
Because of normalisation,~$G_u \leq 1$ and~$G_v \le 1$, so the heat of this set of edges grows by at most $2 - g(e_i)$ per day, and $V \le T(2 - g(e_i))$. 

For each edge~$e_j\in S$, call the first time it was scheduled in $S$ the \emph{first-time scheduling} and further times \emph{repeated schedulings}. 
The volume of heat associated with edge $e_j$ is either removed by the first-time scheduling, removed by repeated scheduling, or is \emph{leftover volume} -- heat that remains after day $T-1$.

In order to bound the amount of volume $V_r$ removed by repeated schedulings, we first calculate the volume $V_f$ that is accumulated before the first time each edge is scheduled in $S$, then the volume $V_l$ which is leftover after the last time each edge is scheduled in $S$.
Notice that for each edge $e_j\in S$, it holds that $g(e_j) \ge g(e_i)$. 
If $e_j$ appears in $S$ for the first time on day $d$, then the volume removed on day $d$ in addition to the total volume removed by the preceding schedulings of $e_j$ in~$Q$ will be at least~$(d + 1)g(e_j) \ge (d + 1)g(e_i)$. 
This is because~$(d + 1)g(e_j)$ is the total growth of $e_j$ up to and including the  day~$d$, all of which will be removed when~$e_j$ is scheduled. 
This implies that the amount of volume removed by all first-time schedulings in addition to the preceding schedulings of the corresponding edges in the suppressed sequence~$Q$ will be given by $V_f\geq\sum_{j = 1}^N j g(e_i) = \frac{N(N+1)}{2} \cdot g(e_i)$, since exactly one edge is added to the sequence~$S$ per day. 

Moreover, if $e_j$ appears in the sequence~$S$ for its last time at day $T - 1 - d$, then the leftover volume of~$e_j$ in addition to the volume removed by the following schedulings of~$e_j$ in~$Q$ is at least $d' \cdot g(e_i)$. Therefore, for all $N$ edges, the total volume left after the last scheduling of $e_j$ in $S$ will be $V_l = \sum_{j = 1}^N (j - 1)  g(e_i) = \frac{N(N-1)}{2} \cdot g(e_i)$ . Finally, the edge $e_i$ is not scheduled in the interval $[0, T - 1]$ but does grow by exactly $T g(e_i)$ during this time.

We can then bound the total volume removed by repeated schedulings of edges in~$S$ as
\begin{align*}
V_r &\wwrel\le V - \left( \frac{N(N + 1)}{2} + \frac{N(N - 1)}{2} \right) g(e_i) - T g(e_i) \\
&\wwrel= V - N^2 \cdot g(e_i) - T g(e_i) \\
&\wwrel\le T(2 - g(e_i)) - N^2 g(e_i) - T g(e_i) \\
&\wwrel= 2T(1 - g(e_i)) - N^2 g(e_i). 
\end{align*}

Each of these removes heat at least $x$, so the number
$N_r$ of repeated schedulings is upper bounded by
$\frac{V_r}{x} = \frac{2T(1 - g(e_i)) - N^2 g(e_i)}{x}$, which gives us the
following bound on the total number of elements in the sequence
\[
|S| \wwrel= T \wwrel= N + N_r \wwrel\le N + \frac{2T(1 - g(e_i)) - N^2 g(e_i)}{x},
\]
which, with $x \ge 2$, implies
\[
T \wwrel\le \frac{Nx - N^2 g(e_i)}{2g(e_i) + x - 2} 
\]

Consider the minimum value of the right-hand side; as~$g(e_i)$ and~$x$ are fixed, we can view it as a function of $N$: $R(N) = (Nx - N^2 g(e_i))/({2g(e_i) + x - 2})$. 
The function~$R(N)$ is a concave downward parabola that attains its maximum at~$N = \frac{x}{2 g(e_i)}$, giving us 
\[
T \wwrel\le  R\left(\frac{x}{2 g(e_i)}\right) \wwrel= \frac{x^2}{4g(e_i)(2g(e_i) + x - 2)} 
\]

Using this upper bound to $T$, we now bound the overall growth of the edge
$e_i$, \ie{} the maximum heat $h(S_{\mathrm{rf}})$. 
At day $d=0$, $e_i$ has heat at most $x+g(e_i)$ by our choice of the time interval, and in the next $T$ days it grows by
$T g(e_i)$. Hence:
\[
h(S_{\mathrm{rf}}) \wwrel\le x + g(e_i) + T g(e_i) \wwrel\le x + g(e_i) + \frac{x^2}{4(2g(e_i) + x - 2)}.
\]
We now show that the claimed bound follows.
We distinguish two cases; first assume $g(e_i) \le \frac12$.
Viewed as a function of $g(e_i) \in [0,\frac12]$, the upper bound on $h(S_{\mathrm{rf}})$ is a convex function
since its second derivative w.r.t. $g(e_i)$ is
\[
	\frac{2x^2}{(2g(e_i)+x-2)^3} 
	\wrel > 0 
\]
(for $x> 2$).
The maximum (worst upper bound) is thus attained at the extreme points, either $g(e_i) = 0$ or $g(e_i)=\frac12$:
\begin{equation}
\label{eq:max-heat}
h(S_{\mathrm{rf}}) \wwrel\le \max \left\{x + \frac{x^2}{4(x - 2)},\;
x + \frac{1}{2} + \frac{x^2}{4(x - 1)}  \right\}.
\end{equation}
The second case, $g(e_i) > \frac{1}{2}$, is trivial because then there are no possible blocking edges, otherwise one of $G_u$ and $G_v$ would be greater than 1. The edge $e_i$ is then always scheduled immediately upon reaching $x$, and, thus, $h(S_{\mathrm{rf}}) \le x + 1 \le x + \frac{1}{2} + \frac{x^2}{4(x - 1)}$ for $x \ge 2$.
\end{proof}

This bound on~$h(S_{\mathrm{rf}})$ gives us the following corollaries.

\begin{corollary}
\label{col:6-approx}
For Reduce-Fastest$(4)$, we have $h(S_{\mathrm{rf}}) \le 6$.
\end{corollary}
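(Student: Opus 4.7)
The plan is to derive the corollary as an immediate instantiation of \wref{thm:5.24-approx-ratio}. That theorem already gives an upper bound on $h(S_{\mathrm{rf}})$ as the maximum of two expressions in $x$, valid for every $x > 2$. Since $4 > 2$, we may simply substitute $x = 4$ into both expressions and take the larger of the resulting numerical values.

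Concretely, I would compute
\[
    x + \frac{x^2}{4(x-2)} \,\Big|_{x=4} \wwrel= 4 + \frac{16}{8} \wwrel= 6
\]
and
\[
    x + \tfrac{1}{2} + \frac{x^2}{4(x-1)} \,\Big|_{x=4} \wwrel= 4 + \tfrac{1}{2} + \frac{16}{12} \wwrel= \tfrac{35}{6} \wwrel< 6.
\]
The max is $6$, so the bound from \wref{thm:5.24-approx-ratio} yields $h(S_{\mathrm{rf}}) \le 6$ when $x = 4$, which is exactly the claim.

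There is essentially no obstacle here: the proof is a one-line substitution, and no additional combinatorial argument is needed beyond noting that both terms inside the $\max$ are well-defined at $x=4$ (the denominators $x-2$ and $x-1$ are positive). I would present the corollary's proof as a short display of the two numerical evaluations followed by the observation that $35/6 < 6$, and cite \wref{thm:5.24-approx-ratio} as the source of the bound.
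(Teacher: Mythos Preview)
Your proposal is correct and matches the paper's approach: the corollary is stated immediately after \wref{thm:5.24-approx-ratio} with no separate proof, relying on exactly the substitution $x=4$ into \weqref{eq:max-heat} that you carry out. Your numerical evaluations are accurate, and nothing further is needed.
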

The above corollary gives the upper bound on the maximum heat achieved by Reduce-Fastest(4), which also meets the bound from \wref{thm:6-approx-ratio}.
Optimizing the bound in \weqref{eq:max-heat} allows to slightly improve upon this by changing to $x = 2 + \frac{2\sqrt{5}}{5}$.

\begin{corollary}
\label{col:optimal}
Reduce-Fastest$(2 + \frac{2\sqrt{5}}{5} \approx 2.89)$ gives
$h(S_{\mathrm{rf}}) \le 3 + \sqrt{5} < 5.24$.
\end{corollary}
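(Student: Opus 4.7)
The plan is to substitute $x^* := 2 + \frac{2\sqrt{5}}{5}$ into the bound supplied by \wref{thm:5.24-approx-ratio} and verify that the maximum evaluates to exactly $3 + \sqrt{5}$. Write $f_1(x) := x + \frac{x^2}{4(x-2)}$ and $f_2(x) := x + \frac{1}{2} + \frac{x^2}{4(x-1)}$ for the two arguments of the outer max in \weqref{eq:max-heat}; the task then reduces to pure algebra.

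To motivate why $x^*$ is the right choice, I would first observe that $f_1(x)\to\infty$ as $x\downarrow 2$ while $f_2$ stays finite, and that $f_1>f_2$ throughout the relevant neighbourhood (easily checked at e.g.\ $x=4$, where $f_1(4)=6$ and $f_2(4)=\frac{35}{6}\approx 5.83$). Hence, near the optimum, the max is controlled by $f_1$, and the min--max is attained at the interior minimizer of $f_1$. Differentiating gives $f_1'(x) = 1 + \frac{x(x-4)}{4(x-2)^2}$; setting $f_1'(x)=0$ reduces to the quadratic $5x^2 - 20x + 16 = 0$, whose unique root in $(2,\infty)$ is exactly $x^*$.

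The main computation is then evaluating $f_1(x^*)$. Setting $y := \frac{2\sqrt{5}}{5}$, so that $x^* - 2 = y$ and $y^2 = \frac{4}{5}$, one finds $(x^*)^2 = 4 + 4y + \frac{4}{5}$, and hence $\frac{(x^*)^2}{4(x^*-2)} = \frac{1}{y} + 1 + \frac{1}{5y}$. Using $\frac{1}{y} = \frac{\sqrt{5}}{2}$, this simplifies to $1 + \frac{3\sqrt{5}}{5}$; adding $x^* = 2 + \frac{2\sqrt{5}}{5}$ yields $f_1(x^*) = 3 + \sqrt{5}$. A parallel (shorter) calculation, after rationalising $\frac{1}{5(1+y)} = 1 - y$, gives $f_2(x^*) = \frac{9}{2}$, which is strictly below $f_1(x^*) \approx 5.236$, so the outer max is indeed $3 + \sqrt{5}$.

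The argument is entirely routine; the only potential pitfall is careful bookkeeping of surds. A minor conceptual point worth noting is that, for this particular min--max problem, the optimum occurs at a minimizer of one branch rather than at a crossing $f_1(x)=f_2(x)$ (which a solver of the quadratic $x^2-6x+4=0$ would find only at $x=3+\sqrt{5}$, where both branches already exceed $4 + \frac{3\sqrt{5}}{2}$). Since $\sqrt{5}<2.24$, we conclude $h(S_{\mathrm{rf}}) \le 3 + \sqrt{5} < 5.24$, as claimed.
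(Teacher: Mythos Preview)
Your proposal is correct and follows the same approach as the paper, which simply states that the corollary follows by ``optimizing the bound in \weqref{eq:max-heat}'' without spelling out the algebra. Your derivation of $x^*$ as the interior minimizer of $f_1$ via $5x^2-20x+16=0$, together with the explicit verification that $f_1(x^*)=3+\sqrt{5}$ and $f_2(x^*)=\tfrac92<f_1(x^*)$, supplies exactly the routine computation the paper omits.
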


\subsection{Length of schedules}
\label{sec:length}

In this section, we will show how an arbitrary schedule for an OPS problem can be modified to produce a periodic schedule with a polynomial-sized period and a maximum heat of at most 4 times the maximum heat of the original schedule.

This proof uses an overloading of the definition of heat $h$ such that $h(S, R')$ is the heat achieved by applying some cyclic schedule $S$ to an OPS instance~$\mathcal{P}'_o = (P, R', g)$ where $R'\subseteq R$, and 
$h(S, e, d)$ is the heat of a single edge $e\in R$
that results from applying the same schedule $S$
such that $d$ is the duration since $e$ was last schedled.
\begin{lemma}
\label{lem:length}  
Consider an OPS instance~$\mathcal{P}_o = (P, R, g)$. Given an \textbf{Arbitrary} schedule ~$S_A$ for~$\mathcal{P}_o$, we can construct a \textbf{Polynomial} schedule~$S_P$ with period~$2\chi_{1}(P, R)$ and heat~$h(S_P) \leq 4h(S_A)$.
\end{lemma}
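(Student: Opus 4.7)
The plan is to construct $S_P$ using an edge colouring of $(P,R)$ together with additional scheduled occurrences for the fastest-growing edges, exploiting the lower bound $\Gstar \le h^* := h(S_A)$ from Theorem~\ref{thm:bgtlowerbound}. Let $\chi := \chi_1(P,R)$ and fix an optimal edge colouring yielding matchings $M_1,\ldots,M_\chi$ partitioning $R$. A single round-robin over $M_1,\ldots,M_\chi$ has period $\chi$ and gives each edge maximum gap $\chi$ and heat $\chi\cdot g(e)$; this is already at most $4h^*$ for every \emph{slow} edge with $g(e) \le 4h^*/\chi$, so the problem reduces to scheduling the remaining \emph{fast} edges (those with $g(e) > 4h^*/\chi$) more frequently inside a period of length $2\chi$.

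The key structural observation is that fast edges form a sparse subgraph: from $\sum_{e \ni v} g(e) \le \Gstar \le h^*$, strictly fewer than $\chi/4$ fast edges meet at any vertex. For each fast edge $e$ I target $k_e := \lceil \chi\cdot g(e)/(2h^*)\rceil$ evenly spaced occurrences per period of $2\chi$, which yields maximum gap at most $2\chi/k_e \le 4h^*/g(e)$ and hence heat at most $4h^*$. A direct count shows that the extra per-vertex load $\sum_{e\ni v,\text{fast}}(k_e-1)$ is at most $3\chi/4$, so in principle there is room alongside the round-robin to squeeze in these extra occurrences within the $2\chi$ days.

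The construction of $S_P$ then combines a single round-robin pass of $M_1,\ldots,M_\chi$ with the extra occurrences of fast edges, arranged so that each day's overall selection is a matching in $(P,R)$. The main obstacle, where I expect the bulk of the technical work, is twofold: the $k_e$ copies of each fast edge must be \emph{genuinely spread out} across the period (not clustered into a short subinterval), since only even spacing yields the maximum-gap bound; and the combined day-by-day assignment must respect the matching property throughout. I would handle the spacing via a rotation-and-offset argument that assigns the $j$-th copy of each fast edge to day $\lfloor 2\chi\cdot j/k_e \rfloor$ plus a small offset, and the matching constraint by an edge-colouring argument on the "boosted" multigraph obtained by adding $k_e-1$ extra copies of every fast edge, possibly combined with direct reuse of selected matchings from $S_A$ for the auxiliary slots. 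A routine per-edge heat calculation then verifies $h(S_P) \le 4h^*$.
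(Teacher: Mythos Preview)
Your approach is genuinely different from the paper's and runs into a real obstacle that you correctly flag but do not resolve. The paper's proof sidesteps the entire spacing-plus-matching problem by \emph{directly reusing the first $\chi_1$ days of the given schedule $S_A$}: it truncates $S_A$ to a period-$\chi_1$ cyclic schedule $S_T$, shows that for every edge appearing in this prefix the wrap-around gap $d_b+d_a$ at most doubles the heat (so $h(S_T)\le 2h(S_A)$ on those edges), handles the remaining edges with a round-robin colouring $S_C$ (they are absent from the first $\chi_1$ days of $S_A$, hence have growth rate below $h(S_A)/\chi_1$ and heat below $h(S_A)$ under $S_C$), and finally interleaves $S_T$ and $S_C$ day-by-day for the last factor of~$2$. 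The key point is that $S_A$ has \emph{already solved} the problem of packing fast edges into matchings with good spacing; the paper just borrows a prefix of that solution.

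Your plan, by contrast, tries to build the fast-edge schedule from scratch using only the numerical bound $\Gstar\le h^*$. The two devices you propose for the ``bulk of the technical work'' pull in opposite directions: an edge colouring of the boosted multigraph produces matchings but gives no control over how the $k_e$ copies of a fixed edge are spaced (they could all land in consecutive colour classes), while the rotation-and-offset placement at days $\lfloor 2\chi j/k_e\rfloor$ gives spacing but no guarantee that the resulting day-sets are matchings, and ``small offsets'' large enough to repair the matching property can destroy the $2\chi/k_e$ gap bound. Reconciling these is essentially a Poly Scheduling subproblem in its own right, and you have not shown how to solve it. There is also a constant slip: once the round-robin sits inside a period of $2\chi$ (interleaved or concatenated with the auxiliary slots), slow edges incur a maximum gap of $2\chi$, not $\chi$, so your slow/fast threshold should be $2h^*/\chi$ rather than $4h^*/\chi$ for the slow-edge heat to stay below $4h^*$.
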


\begin{proof}
Recall that~$r_S(e)$ is the recurrence time of an edge~$e$ in any schedule $S$.
First, consider a \textbf{Truncated} schedule $S_{T}$ which consists of the first $\chi_{1}$ characters of $S_A$.
Let the subset of edges $R_{T}\subseteq R$ consist of the edges $e\in R_{T}$ that appear in $S_{T}$.
Using these, prove the following claim:

\begin{claim}
 \label{cla:int-sch}  
    Given the schedules $S_A$ and $S_{T}$, defined above, it holds that $h(S_{T}, R_{T})\leq 2h(S_A)$.
\end{claim}

\begin{proof}
    For the first  $\chi_1$ days, $S_A$ and $S_{T}$ are identical, so both have heat at most $h(S_A)$ within that period. %
    Any edge $e\in R_{T}$ will have a heat $h(S_{T},e,d)=g(e)\cdot d$, where $d$ is again the duration since $e$ was last scheduled. The set of such durations $D$ includes the duration $d_b$ before the first meeting, the duration $d_a$ after the last meeting, and the durations $D_i = d_1, d_2\ldots d_i$ between meetings, so $D=\{d_b\}\cup \{d_a\}\cup D_i$.
    Note that the heats of edges grow on days when they are scheduled, so the sum of all such durations $\sum_{d\in D}d$ is the period $\chi_{1}$ of $S_T$.

    For the first $\chi_{1}$ days, $S_T$ does not repeat, so it holds that:
    \[
        h(S_A) \wwrel\geq \max_{d\in D}(h(S_{T},e,d)) \wwrel= g(e)\cdot \argmax(\{d_b\}\cup \{d_a\}\cup D_i)    
    \]    

    Thus:
    \begin{equation}
        \label{eq:growth rate of a fragment}
        g(e) \wwrel\leq \frac{h(S_A)}{\argmax(\{d_b\}\cup \{d_a\}\cup D_i)}
    \end{equation}
    After $\chi_{1}$ days, $S_A$ and $S_{T}$ diverge, with $S_{T}$ repeating itself. The maximum heat of $e$ in this period $h(S_{T}, e)$ is therefore given by:
    \[
    h(S_{T}, e) \wwrel= 
    \max_{d\in D}(h(S_{T},e,d)) \wwrel= 
    g(e)\cdot \argmax(\{(d_b+d_a)\}\cup D_i)
    \]
    If $h(S_{T}, e) > h(S_A)$, it follows that $h(S_{T}, e) = g(e)\cdot (d_b+d_a)$. Which, along with \weqref{eq:growth rate of a fragment} shows that for all $e\in R_{T}$:
    \[
        h(S_{T}, e) 
        \wwrel\leq h(S_A) \frac{d_b+d_a}{\argmax(\{d_b\}\cup \{d_a\}\cup D_i)}
        \wwrel\leq 2h(S_A)
    \]
    Hence, $h(S_{T}, R_{T})\leq 2h(S_A)$.
\end{proof}

Let~$C := \chi_1(P, R)$ be the number of colours required for edge colouring the graph~$(P, R)$.
Observe that for an OPS instance~$\mathcal{P}_o$, 
one can always obtain a feasible schedule from a proper edge colouring~$c: e \rightarrow [C]$ of the graph~$(P, R)$, 
where~$e \in R$, by taking any round-robin schedule of the~$C$
colours. 
More formally, we can define a \textbf{coloured} schedule~$S_C(t)$ := $\{e \in R : c(e) \equiv t \, (\text{mod} \, C)\}$. 
As shown by~\cite{GasieniecSmithWild2024}[Proposition 5.1], unweighted OPS is the same problem as edge colouring; thus, the round-robin schedule~$S_C$ of the coloured edges will cover all of the edges.

Finally, we construct the Polynomial schedule~$S_P$ by interleaving the Truncated schedule~$S_{T}$ with the coloured schedule~$S_C$. 
This means that for any edge~$e\in R_T$, the recurrence times obey~$2r_{S_{T}}(e) \leq r_{S_P}(e)$. 
Now consider the complement to $R_T$, $R_T^\complement := R\backslash R_T$. 
As edges $e'\in R_T^\complement$ do not appear in the first $\chi_{1}$ days of $S_A$, their growth rate is bounded by $g(e')<\frac{h(S_A)}{\chi_{1}}$, and they grow to heat $h(S_C, R_T^\complement)<h(S_A)$ over the course of $S_C$.
This completes the claim: for any edge~$e'\in R_T^\complement$, we have~$2r_{S_{C}}(e') = r_{S_P}(e')$.
In other words, the recurrence time of any edge in~$S_P$ is at most twice its recurrence time in either~$S_{T}$ or~$S_{C}$.

This finally demonstrates that~$h(S_P) \leq 2h(S_{C})\leq 2h(S_{T}, R_T)$ and therefore, from \wref{cla:int-sch}, we have~$h(S_P) \leq 4h(S_A)$.
\end{proof}

\begin{corollary}
    As BGT is a special case of OPS on star graphs, \wref{lem:length} also applies to BGT.
\end{corollary}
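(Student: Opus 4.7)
The proof reduces to spelling out the well-known equivalence between BGT and OPS on star graphs, which has already been invoked several times in the excerpt (e.g.\ in \wref{sec:lb-BGT} and at the start of \wref{sec:approxes}). The plan is therefore to give an explicit translation of BGT instances and schedules into OPS instances on stars, then invoke \wref{lem:length} verbatim and translate back.

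First, I would fix notation for the reduction. Given a BGT instance with $n$ bamboos of respective growth rates $g_1,\ldots,g_n$, construct the OPS instance $\mathcal{P}_o = (P,R,g)$ with $P = \{c, v_1,\ldots,v_n\}$, $R = \{e_i = \{c,v_i\} : i \in [n]\}$ (so $(P,R) = K_{1,n}$), and $g(e_i) = g_i$. Every matching in $K_{1,n}$ has at most one edge, because all edges share $c$; hence a valid OPS schedule $S : \mathbb{N}_0 \to 2^R$ picks at most one edge per day, which is exactly the BGT rule that the gardener trims at most one bamboo per day. Under the identification $b_i \leftrightarrow e_i$, the height of bamboo $b_i$ on day $t$ equals the heat of edge $e_i$ on day $t$; in particular, $h(S)$ in the OPS sense coincides with the maximum height in the BGT sense.

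Next, given an arbitrary BGT schedule $S_A$, view it as an arbitrary OPS schedule on this star instance and apply \wref{lem:length}. Since $(P,R) = K_{1,n}$ has chromatic index $\chi_1(P,R) = n$, the lemma produces a periodic OPS schedule $S_P$ with period $2n$ and $h(S_P) \le 4 h(S_A)$. The schedule $S_P$, reinterpreted as a BGT schedule via the same identification, is a periodic trimming schedule of period $2n$, \emph{i.e.}\ polynomial in the input size, whose maximum observed bamboo height is at most four times that of $S_A$.

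The only thing worth checking is that the constructions used inside the proof of \wref{lem:length} respect the translation back to BGT — namely, that both the truncated schedule $S_T$ (which is a prefix of $S_A$) and the coloured round-robin schedule $S_C$ yield valid BGT schedules, and that their interleaving does too. This is automatic: in a star graph every colour class of a proper edge colouring is a single edge, so $S_C$ simply cycles through the $n$ bamboos one per day, $S_T$ inherits validity from $S_A$, and the interleaving assigns at most one edge per day. There is no real obstacle; the corollary is stated separately only to make the BGT consequence explicit.
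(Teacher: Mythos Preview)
Your proposal is correct and follows the same reasoning as the paper, which in fact provides no separate proof at all: the corollary is stated as self-evident, with the justification (``BGT is a special case of OPS on star graphs'') folded into the statement itself. Your write-up simply makes explicit the translation and the value $\chi_1(K_{1,n}) = n$, which the paper leaves implicit.
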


\begin{remark}
    Only the first $\chi_{1}$ days of $S_A$ appear in $S_P$, and only edges $e'\in R_T^\complement$ need to appear in $S_C$. Further works may apply this process to unsustainable schedules, taking advantage of these facts to reduce overall heat.
\end{remark}

%
%
%
%
%
%
%
%
%
%
%
%
%
%
%
%
%
%
%
%
%
%
%
%
%
%
%
%
%
%
%
%
%
%
%
%
%
%
%
%
%
%
%
%


\section{Poly Density}
\label{sec:density}
This section introduces two key results: \wref{sec:ops-density} provides improved upper and lower bounds by proving \wref{thm:OPS_density_combined}, while \wref{sec:dps-density} shows that poly density for DPS generalizes density for PWS and proves \wref{thm:densityThreshold}, showing that OPS instances with $d\leq \frac14$ are schedulable.

\subsection{Bounding Poly Density for OPS polycules}
\label{sec:ops-density}
Recall the notion of poly density introduced by~\cite{GasieniecSmithWild2024}, which follows the same intuition as Pinwheel Scheduling: relax the requirement that one complete task must be done each day, instead allowing any fractions of a day to be spent on each task. Use this relaxation to identify classes of schedulable or unschedulable instances. Pinwheel Scheduling uses an implicit schedule of unit length in which each task $i$ with frequency $f_i$ has contribution $y_i \geq \frac{1}{f_i}$. Task $i$ is then considered to be performed after $\frac{1}{y_i}$ days, with the density of the PWS instance given by the sum of these contributions $d=\sum_{i=1}^k \frac1{f_i}$. If $d>1$ then these contributions cannot fit into a single day, and the instance clearly cannot be scheduled. 

\cite{GasieniecSmithWild2024} applied similar reasoning to determine the poly density of Optimisation Polyamorous Scheduling problems, obtaining a lower bound on the optimal heat $h^*$. 
First, compose an optimisation problem using the same logic as Pinwheel Scheduling: construct a fractional schedule of unit length by assigning contribution $y_M\in[0,1]$ to each matching $M$. The heat $\bar h$ of the best such schedule provides a lower bound on $h^*$.

\begin{alignat}{4}
& \text{min}\ \bar{h} && && &&\\
\label{eq:minhbar}
& \text{s.t.}    & \sum_{M \in \mathcal{M}} y_M     &\;\leq\; 1&&\\
&& \frac{g(e)}{\sum_{\substack{M\in \mathcal{M}: e \in M}}y_M} & \;\leq\;\bar{h}&&\qquad\qquad\forall e \in R\\
& &y_M &\;\in\; [0,1]
\label{eq:ymin01}
\end{alignat}

\cite{GasieniecSmithWild2024} defines the poly density $\bar{h}^*$ of an OPS polycule $\mathcal{P}_o = (P, R, g)$ as the optimal objective value of the fractional-schedule LP.%
\footnote{A change of variables $\ell = 1/\bar h$ and taking reciprocals in the second constraint yields this LP.}
From the dual of this LP, \cite{GasieniecSmithWild2024} obtain a more explicit term for $\bar h^*$. A dual solution consists of weights $z_e$ for edges $e\in R$ with $\sum_{e\in R} z_e =1$. Intuitively these account for their growth rates $g_e$ and the (inclusion-maximal) matchings $M \in \mathcal{M}$ in which they appear. 
Any such weighting implies a lower bound $\bar{h}(z)$ on $\bar h^*$ and hence on $h^*$, 
with the best choice of $z$ yielding the poly density
\begin{equation}
\bar{h}^* 
\wrel= \max_{z\in \mathcal Z} \bar{h}(z)
\text{, \qquad where }
\bar{h}(z) \wrel= \min_{M \in \mathcal{M}} {\dfrac{1}{ \sum_{e \in M} \frac{z_e}{g(e)}}},
\end{equation}
where $\mathcal Z$ is the set of all weightings.

Recall from \wref{sec:preliminaries} that the maximum personal growth over some polycule $\mathcal{P}_o = (P, R, g)$ is given by $\Gstar = \max_{v \in P} \sum_{e \in R:v \in e}g(e)$. 
We now show that the poly density is within a constant factor of $\Gstar$,
proving \wref{thm:OPS_density_combined} in two parts: The lower bound by \wref{lem:Gstar-le-hbarstar} and the upper bound by \wref{lem:density-bound-general}.

\begin{lemma}
\label{lem:Gstar-le-hbarstar}
    For any OPS instance $\mathcal{P}_o = (P, R, g)$, the poly density $\bar{h}^*$ is bounded by $\bar{h}^* \geq\Gstar$. 
\end{lemma}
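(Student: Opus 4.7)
The plan is to argue directly from the primal LP whose optimum is $\bar{h}^*$ that every feasible solution already satisfies $\bar h \geq G^*$. Fix a person $v^* \in P$ attaining $G^* = \sum_{e \in R : v^* \in e} g(e)$; the edges incident to $v^*$ carry a total growth rate of $G^*$, and the fact that each matching can schedule at most one of them will be the only structural ingredient needed.

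For any feasible $(\bar h, y)$, the edge constraint $g(e)/\sum_{M \ni e} y_M \leq \bar h$ rewrites as $g(e)/\bar h \leq \sum_{M \ni e} y_M$. Summing this inequality over all edges incident to $v^*$ and swapping the order of summation yields
\begin{equation*}
\frac{G^*}{\bar h} \wwrel\leq \sum_{e \ni v^*} \sum_{M \ni e} y_M \wwrel= \sum_{M \in \mathcal{M}} y_M \cdot \bigl|\{e \in M : v^* \in e\}\bigr|.
\end{equation*}
Since every $M \in \mathcal M$ is a matching, it contains at most one edge incident to $v^*$, so the cardinality factor is bounded by $1$. Combined with the budget constraint $\sum_M y_M \leq 1$, the right-hand side is at most $1$, and therefore $\bar h \geq G^*$. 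Taking the infimum over feasible $\bar h$ yields $\bar h^* \geq G^*$.

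I do not anticipate a real obstacle here: the argument reduces to a single double-counting step that converts the matching constraint into a bound on how much LP mass can be placed on edges incident to $v^*$. An equivalent dual-side argument would pick the weighting $z_e := g(e)/G^*$ for edges $e \ni v^*$ and $z_e := 0$ otherwise (which is a valid distribution because $\sum_e z_e = 1$); on any matching $M$ one gets $\sum_{e \in M} z_e/g(e) \leq 1/G^*$, hence $\bar h(z) \geq G^*$. The primal version is a shade more direct and avoids dealing with the $1/0 = \infty$ convention for matchings missing $v^*$.
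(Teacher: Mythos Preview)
Your proof is correct. The paper takes the dual route you sketch at the end: it exhibits the weighting $z_e = g(e)/G^*$ on edges incident to $v^*$ (and $0$ elsewhere), observes that every matching contributes at most $1/G^*$ to $\sum_{e\in M} z_e/g(e)$, and reads off $\bar h(z)=G^*$. Your primary argument instead stays on the primal side, summing the edge constraints over $e\ni v^*$ and using the matching property plus the budget $\sum_M y_M\le 1$ to bound any feasible $\bar h$ from below. Both hinge on the same structural fact (a matching touches $v^*$ at most once); your primal version avoids invoking the dual characterisation of $\bar h^*$ and the $1/0$ convention, while the paper's version has the minor advantage of producing an explicit dual witness achieving the bound.
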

\begin{proof}
    Let $v^*$ be the vertex with the maximum local heat $\Gstar$, \ie{} 
    \[v^* \wrel= \arg\max_{v\in P} \sum_{e \in R:v\in e} g(e).\] 
    Consider the weighting $z^*$ with $z_e=\frac{g(e)}{\Gstar}$ when $v^*\in e$ and $z_e=0$ otherwise.
    For $z^*$, it holds that 
    \[\max_{M \in \mathcal{M}} \sum_{e \in M} \frac{z_e}{g(e)} \wrel= \frac{1}{\Gstar}\]
    and $\bar{h}(z)=\Gstar$.
    Since $\bar{h}^* \geq \bar{h}(z)$ for all weights $z$, the inequality holds.
\end{proof}

\begin{lemma}
    \label{lem:density-bound}
    For any OPS instance $\mathcal{P}_o = (P, R, g)$ with integer growth rates, $\bar{h}^* \leq \frac{3}{2}\Gstar$.
\end{lemma}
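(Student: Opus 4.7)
The plan is to exploit the integrality assumption to reduce the statement to a known edge‑colouring bound on multigraphs. Specifically, I would replace each edge $e \in R$ with $g(e)$ parallel copies, obtaining a multigraph $G'$ on vertex set $P$. By construction, the degree of any vertex $v$ in $G'$ equals the personal growth rate $G_v = \sum_{e:v\in e} g(e)$, so the maximum degree of $G'$ is exactly $\Gstar$.

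Next I would invoke Shannon's theorem on the chromatic index of multigraphs, which guarantees a proper edge colouring of $G'$ using at most $c := \lfloor \tfrac{3}{2}\Gstar \rfloor$ colours. Each colour class, viewed in the underlying simple graph $(P,R)$, is a matching $M_i$ (no two of its edges share a vertex), and each original edge $e$ appears in exactly $g(e)$ of the $c$ matchings $M_1,\dots,M_c$ because it was split into $g(e)$ parallel copies.

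I would then turn this colouring into a feasible solution of the fractional‑schedule LP \weqref{eq:minhbar}--\weqref{eq:ymin01} by setting $y_{M_i} = 1/c$ for $i \in [c]$ and $y_M = 0$ for all other (inclusion‑maximal) matchings. The first constraint is tight since $\sum_i y_{M_i} = 1$. For the second constraint, each edge $e$ satisfies
\[
   \sum_{M \ni e} y_M \wrel= \frac{g(e)}{c},
   \qquad\text{hence}\qquad
   \frac{g(e)}{\sum_{M \ni e} y_M} \wrel= c \wrel\le \tfrac{3}{2}\Gstar.
\]
Taking $\bar h = c$ yields a feasible LP solution, so the optimum satisfies $\bar{h}^* \le c \le \tfrac{3}{2}\Gstar$, as required.

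The main obstacle is recognising that integer growth rates allow the reduction to multigraph edge colouring in the first place; once this reformulation is made, Shannon's theorem delivers the bound essentially for free. A subtlety worth flagging is that Shannon's bound is tight (achieved on the triangle with parallel edges), so extending this argument beyond the integer case or improving the constant $\tfrac{3}{2}$ would require either a different fractional relaxation or stronger structural assumptions on $(P,R)$.
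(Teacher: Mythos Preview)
Your proposal is correct and follows essentially the same approach as the paper: replace each edge by $g(e)$ parallel copies, apply Shannon's multigraph edge-colouring bound to get at most $\tfrac{3}{2}\Gstar$ colour classes, and use the uniform fractional assignment $y_{M_i}=1/c$ to exhibit a feasible LP solution with objective value at most $\tfrac{3}{2}\Gstar$. The only cosmetic difference is that you take the floor in Shannon's bound and add remarks on tightness, neither of which changes the argument.
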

\begin{proof}
    Begin by constructing an unweighted multigraph $\mathcal{P}'_o=(P,E)$ such that each edge $e \in R$ is replaced by a bundle of $g(e)$ parallel edges in $E$, noting that the maximum degree $\Delta$ of $\mathcal{P}'_o$ is exactly $\Gstar$. By \cite{shannon1949theorem}, we can colour edges in $\mathcal{P}'_o$ using $C\leq\frac{3}{2}\Delta$ colours.
    Given that each colour induces a matching in $\mathcal{P}_o$ such that each edge $e \in R$ appears in exactly $g(e)$ colours,
    construct a fractional solution by assigning a contribution $y_M = \frac{1}{C}$ to each colour class $M$.
    Checking Equations \eqref{eq:minhbar}\,--\,\eqref{eq:ymin01} shows that this is a valid fractional schedule with heat 
    \[\bar{h} \wwrel\geq {g(e)}\bigg/{\sum_{\substack{M\in \mathcal{M}: e \in M}}y_M} \wwrel= C\] 
    for all edges. 
    Hence, $\bar{h}=C\leq \frac{3}{2}\Gstar$ for this assignment, and $\bar{h}^* \leq \frac{3}{2}\Gstar$ globally.%
\end{proof}

\begin{lemma}
    \label{lem:density-bound-general}
    The poly density $\bar{h}^*$ of any OPS instance $\mathcal{P}_o = (P, R, g)$  is bounded by $\bar{h}^* \leq \frac{3}{2}\Gstar+\epsilon$ for any $\epsilon>0$.
\end{lemma}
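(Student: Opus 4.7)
The plan is to reduce the real-valued case to the integer-valued case of \wref{lem:density-bound} by rational approximation, taking advantage of two elementary properties of the poly density read off directly from the LP \eqref{eq:minhbar}--\eqref{eq:ymin01}:
\begin{enumerate}
\item[(i)] \emph{Homogeneity.} For any constant $c > 0$, the instance with growth rates $c\,g$ has $\bar h^*(cg) = c\,\bar h^*(g)$, since the only constraint involving $g$ is $g(e)/\sum_{M\ni e} y_M \le \bar h$ and is merely rescaled.
\item[(ii)] \emph{Monotonicity.} If $g(e) \le g'(e)$ for all $e \in R$, then $\bar h^*(g) \le \bar h^*(g')$, because any $(y,\bar h)$ feasible for $g'$ is automatically feasible for $g$.
\end{enumerate}

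Fix $\epsilon > 0$ and let $\Delta$ denote the maximum degree of $(P,R)$. For a (large) integer $N$ to be chosen later, define rounded growth rates
\[
	g'(e) \wrel{:=} \frac{\lceil N\,g(e)\rceil}{N}, \qquad e \in R,
\]
so that $g(e) \le g'(e) \le g(e) + 1/N$ and $N\,g'(e) \in \mathbb N$. Then I would apply \wref{lem:density-bound} to the integer-growth instance $(P,R, N g')$ to obtain $\bar h^*(Ng') \le \tfrac32 \Gstar(Ng')$, and homogeneity immediately yields $\bar h^*(g') \le \tfrac32\Gstar(g')$.

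Combining with monotonicity and the pointwise bound $g'(e) \le g(e) + 1/N$ (which gives $\Gstar(g') \le \Gstar(g) + \Delta/N$), I conclude
\[
	\bar h^*(g) \wwrel\le \bar h^*(g') \wwrel\le \tfrac{3}{2}\Gstar(g') \wwrel\le \tfrac{3}{2}\Gstar + \frac{3\Delta}{2N}.
\]
Choosing $N > \lceil 3\Delta/(2\epsilon)\rceil$ makes the error term at most $\epsilon$, finishing the proof.

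There is no genuine obstacle: both (i) and (ii) are one-line consequences of inspecting the LP, and the rounding estimates are routine. The only modeling care needed is to verify that one really may invoke \wref{lem:density-bound} on $(P,R,Ng')$ even though a $0$-weight edge might appear (if $g(e)=0$, but the problem statement assumes $g(e)>0$, so $\lceil Ng(e)\rceil \ge 1$ and no edges are lost).
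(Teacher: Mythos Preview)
Your proposal is correct and follows essentially the same approach as the paper: approximate the growth rates from above by rationals, clear denominators to land in the integer case of \wref{lem:density-bound}, and let the approximation error absorb~$\epsilon$. The paper's own proof is a two-sentence sketch of exactly this idea; your version simply spells out the homogeneity and monotonicity properties and the explicit error term $3\Delta/(2N)$ that the paper leaves implicit.
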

\begin{proof}
    When all heat growth rates are rational numbers, we can scale all of them by their common denominator to
    reduce them to the integral case. 
    If any growth rate is irrational, we can upper-bound it with a rational number that is arbitrarily close.
\end{proof}

\subsection{Poly Density of DPS polycules}
\label{sec:dps-density}
In this section, we turn our attention to to the poly density of Decision Polyamorous Scheduling: showing that it generalizes the density of Pinwheel Scheduling with \wref{thm:DPS-star-density}, introducing the PolyGreedy algorithm and analysing it to prove \wref{thm:poly1/2} and that $d\leq \frac{1}{4}$ implies schedulability with a proof for \wref{thm:densityThreshold}.
When introduced by \cite{GasieniecSmithWild2024}, poly density was chiefly discussed in the context of OPS, but the poly density of a DPS instance $\mathcal{P}_d = (P, R, f)$ was defined as the poly density of a paired OPS instance $\mathcal{P}_o = (P, R, \frac{1}{f})$: 

\begin{definition}[Poly Density of DPS instances] 
\label{def:DPS-density}
The poly density $d$ of a DPS polycule $\mathcal{P}_d = (P, R, f)$ is given by
\[d \wwrel= \max_{z\in \mathcal Z} \dfrac{1}{\displaystyle\max_{M \in \mathcal{M}} \sum_{e \in M} z_e\cdot f_e}\]
where $\mathcal Z$ is the set of all weightings applying $z_e \in [0,1]$ to each edge $e\in R$ such that $\sum_{e\in R} z_e = 1$ and $\mathcal{M}$ is the set of inclusion maximal matchings $M$ in the graph $(P,R)$.
\end{definition}

Given the equivalence between DPS polycules on star graphs $\mathcal{P}^*_d = (P, R, f)$  and Pinwheel Scheduling instances with the same frequency set $f$, this is a generalisation of the density of Pinwheel Scheduling:

\begin{theorem}
    \label{thm:DPS-star-density}
    The poly density $d$ of any DPS polycule $\mathcal{P}^*_d = (P, R, f)$ whose graph $(P,R)$ is a star is given by $d = \sum_{e\in R} \frac{1}{f_e}$
\end{theorem}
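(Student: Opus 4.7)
The plan is to exploit the fact that a star graph trivializes the matching structure in \wref{def:DPS-density}. Since every pair of edges in a star shares the central vertex, no matching can contain two edges; hence the set of inclusion-maximal matchings is precisely $\mathcal{M} = \{\{e\} : e \in R\}$. Substituting this into the definition of poly density collapses the inner sum to a single term and reduces the claim to
\[
d \wrel= \max_{z \in \mathcal{Z}} \frac{1}{\displaystyle\max_{e \in R} z_e \cdot f_e}
\wrel= \sum_{e \in R} \frac{1}{f_e}.
\]

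From this reformulation, the identity follows from a standard minimax equalization argument, which I would carry out as two matching inequalities. For the lower bound $d \geq \sum_{e \in R} 1/f_e$, exhibit the explicit balanced weighting $z_e = (1/f_e) \big/ \sum_{e' \in R}(1/f_{e'})$. This $z$ lies in $\mathcal{Z}$ (it is non-negative and sums to $1$), and with it every product $z_e \cdot f_e$ equals the common value $1 \big/ \sum_{e' \in R} 1/f_{e'}$. Plugging into the displayed formula yields $d \geq \sum_{e \in R} 1/f_e$.

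For the matching upper bound $d \leq \sum_{e \in R} 1/f_e$, fix any admissible $z \in \mathcal{Z}$ and set $M := \max_{e' \in R} z_{e'} \cdot f_{e'}$. For each edge $e$ the trivial bound $z_e \cdot f_e \leq M$ rearranges to $z_e \leq M / f_e$. Summing over $e$ and using $\sum_{e \in R} z_e = 1$ gives $1 \leq M \sum_{e \in R} (1/f_e)$, i.e. $M \geq 1 \big/ \sum_{e \in R}(1/f_e)$; taking reciprocals and then maximizing over $z$ produces the claimed upper bound.

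The main obstacle is not any hard calculation but recognizing the correct certificate $z$ for the lower bound; the balanced choice above is essentially forced by the minimax structure, since equalizing the terms $z_e \cdot f_e$ is the only way to prevent the inner maximum from exceeding the common value. The two inequalities together deliver the identity $d = \sum_{e\in R} 1/f_e$, confirming that poly density on stars coincides with the classical Pinwheel density.
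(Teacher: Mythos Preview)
Your proof is correct and follows essentially the same approach as the paper: both reduce to singleton matchings on the star, then use the balanced weighting $z_e = (1/f_e)\big/\sum_{e'}1/f_{e'}$ as the optimal certificate. If anything, your explicit two-inequality argument is more rigorous than the paper's, which simply asserts that any non-equalized weighting ``will have a higher maximum than one that does'' without spelling out the summing step you give for the upper bound.
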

\begin{proof}
    Begin with the poly density defined by \wref{def:DPS-density}. As $\mathcal{P}^*_d$ is a star graph, each matching $m\in \mathcal{M}$ contains a single edge $e\in R$, so $d=\max_{z\in \mathcal Z} 1\big/ (\max_{e \in R} z_e\cdot f_e)$. This will be solved by the value of $z$ which minimizes $\max_{e \in R} z_e\cdot f_e$. As $\sum_{e\in R}z_e = 1$, any assignment $\mathcal Z$ which does not set $z_e\cdot f_e = z_{e'}\cdot f_{e'}$ for all $e,e'\in R$ will have a higher maximum than one that does. Try 
    \[z_e \wrel= \dfrac{1}{f_e\sum_{e'\in R}\frac{1}{f_{e'}}},\] 
    observing that 
    \[
        \sum_{e\in R}z_e 
        \wwrel= 
        \dfrac{1}{\displaystyle \sum_{e'\in R}\frac{1}{f_{e'}}}\cdot \sum_{e\in R}\frac{1}{f_e} 
        \wwrel= 
        1,
    \] 
    and that $d=\sum_{e\in R} \frac{1}{f_e}$.
\end{proof}

We continue by identifying two classes of schedulable DPS instances which instead restrict the \defi{local density} of vertices~-- the density of the portion of the instance directly connected to any single vertex. This analysis is motivated by~\cite{holte1989pinwheel} and uses the following PolyGreedy algorithm for DPS:

\begin{algorithm}[h]
	\caption{$\mathsf{PolyGreedy}(P,R=\{e_1,...,e_{|R|}\}, f)$}
	\label{alg:PolyGreedy}
	\begin{algorithmic}[1]
		\State /* Assumes $f(e_i) = 2^{k_i}$ for some $k_i\in\N$ for all $e_i \in R$ */
		\State /* W.l.o.g. let $f(e_1) \leq f(e_2) \leq \cdots \leq f(e_{|R|})$ */
		\State $f_{\max} = f(e_{|R|})$
		\For {$t = 0,\ldots,f_{\max}$}
		 \State $S[t] \gets \emptyset$
		\EndFor
		\For {$i=1,\ldots,|R|$}
			\State $s_i \gets \min \bigl\{ t \in [0..f_{\max}) : S[t]\cup\{e_i\} \text{ is a matching} \bigr\}$
			\For {$k=1,\ldots,{f_{\max}}/{f_i}$}
			\State $t \gets s_i+(k-1) \cdot f_i$
			\State $S[t] \gets S[t] \cup \{ e_i \}$
			\EndFor
		\EndFor
		\State \Return $S[0..f_{\max})$
	\end{algorithmic} 
\end{algorithm}

\begin{theorem} \label{thm:poly1/2}
    Let $\mathcal{P}_d = (P, R, f)$ be a DPS instance such that 
    \begin{enumerate}
        \item $\forall e \in R$, $f_e$ is a power of two
        \item $\forall v \in V$, $\displaystyle \sum_{e\in R:v \in e} \frac{1}{f_e} \leq \frac{1}{2}$.
    \end{enumerate}
    Then, there exists a valid schedule for $\mathcal{P}_d$.
\end{theorem}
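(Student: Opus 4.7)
The plan is to prove by induction on $i$ that \textsf{PolyGreedy} successfully places every edge, i.e., that at each iteration the set of candidate slots from which $s_i$ is chosen is non-empty. The key structural observation is that, because every $f_j$ is a power of two and edges are processed in non-decreasing order of frequency, $f_j \mid f_i$ for all $j < i$. Consequently, the occupation pattern at any vertex is periodic with period $f_i$, so once a slot $s_i \in [0, f_i)$ is feasible for $e_i = \{u,v\}$, every translate $s_i + k f_i$ is automatically feasible as well, and it suffices to find a feasible slot inside $[0, f_i)$.

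The proof then reduces to a counting argument. Call a slot in $[0, f_i)$ \emph{$u$-bad} if some $e_j$ with $j < i$ and $u \in e_j$ already occupies it. Different edges incident to $u$ must live in pairwise disjoint slots (they all use vertex $u$), and each such $e_j$ occupies exactly $f_i/f_j$ slots modulo $f_i$ by the divisibility observation above. Hence the number of $u$-bad slots equals $\sum_{j < i,\, u \in e_j} f_i/f_j$. The local-density hypothesis, minus the $1/f_i$ contribution from $e_i$ itself, yields $\sum_{j < i,\, u \in e_j} 1/f_j \le 1/2 - 1/f_i$, and multiplying through by $f_i$ gives at most $f_i/2 - 1$ many $u$-bad slots. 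The same bound holds for $v$, so at most $f_i - 2$ slots in $[0, f_i)$ are bad for $e_i$, leaving at least two feasible slots. Thus $s_i$ is well-defined.

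Finally, I would verify that the returned schedule is valid for the DPS instance: each $S[t]$ is a matching by construction, and because $e_i$ is placed exactly at the slots $s_i + k f_i$ inside $[0, f_{\max})$, its recurrence time is exactly $f_i$ both within one period and across the wraparound from $s_i + f_{\max} - f_i$ back to $s_i$, so $r_S(e_i) = f_i \le f(e_i)$ as required.

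The main obstacle is setting up the divisibility/periodicity reduction cleanly: without the power-of-two restriction the periods of previously placed edges need not divide $f_i$, and the clean count $f_i/f_j$ of occupied slots breaks down. Beyond that, the argument is an elementary count, and the hypothesis on local density is tight against it in the sense that the ``$-1$'' appearing after scaling by $f_i$ exactly accounts for the contribution $1/f_i$ of $e_i$ itself to the local density at each of its endpoints.
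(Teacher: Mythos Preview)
Your proposal is correct and follows essentially the same approach as the paper: run \textsf{PolyGreedy}, use the power-of-two divisibility to argue periodicity so that feasibility of $s_i$ propagates to all translates $s_i + k f_i$, and then count bad slots in $[0,f_i)$ via the local-density bound at $u$ and $v$. Your bookkeeping is in fact slightly cleaner than the paper's---you sum over all $j<i$ (not just $f_j<f_i$) and explicitly subtract the $1/f_i$ term for $e_i$ itself to get at most $f_i-2$ bad slots---whereas the paper obtains only a strict ``$<f_i$'' bound, but the underlying argument is the same.
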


\begin{proof}
     Let $f_{\max}$ be a maximum frequency among all edges and construct a periodic schedule of size $f_{\max}$ using \wref{alg:PolyGreedy}. 
    By \wref{def:DPS}, valid schedules lack conflicts and respect frequencies; we proceed by showing inductively that \wref{alg:PolyGreedy} respects both properties.
    
    In the initial state, $S$ is empty and thus contains no conflicts. Assuming that there are no conflicts after $i-1$ edges have been added, observe that adding edge $e_i$ initially places it in a conflict-free slot $s_i$, then in slots $s_i+f_i, s_i+2f_i, s_i+3f_i,\ldots$. As all frequencies $f_e$ are powers of two and all edges placed thus far obey $f_j\leq f_i$, slots $s_i+f_i, s_i+2f_i, s_i+3f_i,\ldots$ must also be conflict-free for $e_i$.

    If we again consider an arbitrary edge $e_i$ at any point in $S$ after its initial inclusion $s_i$, it is clear that the schedule is frequency respecting. Initially, $e_i$ is placed in the first slot with no edges that conflict with $e_i$. If $e_i = \{u,v\}$, the number of slots with conflicting edges is at most
    \begin{align*}
        \sum_{\substack{e_j:u \in e_j \\ f_j < f_i}} \frac{f_i}{f_j}+\sum_{\substack{e_j:v \in e_j \\ f_j < f_i}} \frac{f_i}{f_j} \quad&=\quad f_i \Biggl(\sum_{\substack{e_j:u \in e_j \\ f_j < f_i}} \frac{1}{f_j}+\sum_{\substack{e_j:v \in e_j \\ f_j < f_i}} \frac{1}{f_j}\Biggr)\\ &<\quad f_i \biggl(\frac{1}{2}+\frac{1}{2}\biggr)\\ &=\quad f_i
    \end{align*}
    
    Hence, there must be at least one slot $s_i$ among the first $f_i$ slots with no conflicting edges.  
    The last occurrence of $e_i$ in $S$ is at $s_i + f_{\max} - f_i$, so the recurrence time $r$ between the last and first instances of $e_i$ in the periodic solution is given by $f_i$
\end{proof}

\begin{proofof}{\wref{thm:densityThreshold}}
    Begin by constructing an OPS instance $\mathcal{P}_o' = (P, R, g)$ with $g_e = \frac{1}{f_e}$ for all edges $e\in R$; by \wref{def:DPS-density},
    the poly density $\bar{h}*$ of $\mathcal{P}_o'$ equals the density $d$ of the corresponding $\mathcal{P}_d = (P, R, f)$. 
    By \wref{lem:Gstar-le-hbarstar}, the maximum personal growth rate $\Gstar$ of $\mathcal{P}_o'$ is at most $\frac{1}{4}$, so 
    \[
        \frac{1}{4} 
        \wrel\geq 
        \Gstar 
        \wrel= 
        \max_{v \in P}\sum_{e \in R:v \in e} g(e) 
        \wrel= 
        \max_{v \in P} \sum_{e \in R:v \in e}\frac{1}{f_e},
    \] 
    and $\forall v \in V$, $\sum_{e\in R:v \in e} \frac{1}{f_e} \leq \frac{1}{4}$.

    Now construct $\mathcal{P}'_d = (P, R, f')$ such that $\forall e\in R, f'(e) = 2^{\lfloor\log_2 f_e\rfloor}$, \ie{} round all the frequencies down to the nearest power of two such that $\sum_{e:v \in e} \frac{1}{f'_e} \leq \sum_{e:v \in e} \frac{2}{f_e} \leq 2 \cdot\frac{1}{4} = \frac{1}{2}$. Applying \wref{thm:poly1/2} then gives a feasible schedule for $\mathcal{P}'_d$, which must also solve $\mathcal{P}_d$.
\end{proofof}

%
%
%
%
%
%
%
%
%
%
%
%
%
%
%
%
%
%
%
%
%
%
%
%
%
%
%
%
%
%
%
%
%
%
%
%
%
%
%
%
%
%
%
%
%
%
%
%
%
%
%
%
%
%
%
%
%
%
%
%
%
%
%
%
%
%
%
%
%
%
%
%
%
%
%
%
%
%
%
%
%
%
%
%
%
%
%
%
%
%


\section{Inapproximability}
\label{sec:inapproximability}
In this section, we prove \wref{thm:sat-inapprox}; showing, via a reduction from 3SAT, that bipartite OPS does not allow efficient $(1+\delta)$-approximation algorithms for $\delta<\frac{1}{12}$ unless P $=$ NP.

\subsection{Overview of Proof}
\label{sec:inapproximability-overview}

The proof of \wref{thm:sat-inapprox} has two steps: a reduction from 3SAT to bipartite DPS, and a conversion to bipartite OPS such that a $\frac{13}{12}$ approximation is shown to be NP-hard.

3SAT is the following problem: given a 3-CNF formula $\varphi = c_1\land c_2\land \cdots \land c_m$ with clauses $C=\{c_1,c_2,\ldots, c_m\}$, each consisting of exactly 3 literals over variables $X=\{x_1, x_2,\ldots, x_{n}\}$, decide whether there is an assignment of Boolean values to the variables in $X$ such that all $m$ clauses in $C$ are satisfied. 

\begin{lemma}[3SAT $\le_p$ bipartite DPS]
    \label{lem:dps=3sat}
    For any 3-CNF formula $\varphi$ with $m$ clauses, we can construct in polynomial time a bipartite decision polycule $\mathcal P_{d\varphi}$ which has a valid schedule if and only if all clauses of $\varphi$ can be simultaneously satisfied.
\end{lemma}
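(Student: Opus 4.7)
The plan is to construct, for each 3-CNF formula $\varphi$, a bipartite DPS polycule $\mathcal P_{d\varphi}$ whose feasible schedules correspond to satisfying assignments of $\varphi$. I will use two kinds of gadgets wired into a common periodic frame: variable gadgets that admit exactly two symmetric schedules (each encoding a truth value) and clause gadgets that can only be scheduled when at least one attached literal is ``available''. Since the graph induced by all gadgets will be laid out with variable-side vertices on one part and clause/auxiliary-side vertices on the other, bipartiteness is preserved throughout.

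For each variable $x_i$, I would introduce a central vertex $c_i$ adjacent to two edges $e_i^{T}$ and $e_i^{F}$ of frequency $2$, whose other endpoints are fresh vertices $t_i$ and $f_i$. Since $c_i$ can participate in at most one matching per day and both incident edges must recur every two days, $e_i^{T}$ and $e_i^{F}$ must strictly alternate days; only two such alternations exist, and I identify them with $x_i = \mathrm{true}$ and $x_i = \mathrm{false}$. Under the true schedule the vertex $f_i$ is unoccupied on (say) even days and under the false schedule the vertex $t_i$ is; thus each variable exposes two ``literal-availability'' vertices whose availability pattern is completely determined by the chosen truth value.

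For each clause $c_j = \ell_{j,1} \vee \ell_{j,2} \vee \ell_{j,3}$, I would create a clause vertex $u_j$ together with short auxiliary attachments whose effect is to demand at $u_j$ one more scheduling event per common period than can be served on the days when $u_j$ is unavoidably blocked. Concretely, $u_j$ is linked to the literal-availability vertices corresponding to $\ell_{j,1}, \ell_{j,2}, \ell_{j,3}$ (with the polarity choosing $t_{i}$ versus $f_{i}$) and to padding edges of carefully tuned frequency. The intended invariant is that every true literal frees up one extra time slot at $u_j$, so the clause demand at $u_j$ is satisfiable iff at least one of $\ell_{j,1}, \ell_{j,2}, \ell_{j,3}$ is true. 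The bipartition is maintained by routing literal connections from variable-side vertices $t_i, f_i$ to clause-side vertices $u_j$ and using an even number of edges in each connecting path.

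The main obstacle is frequency tuning: I must pick the clause-gadget frequencies so that an unsatisfied clause induces a \emph{clean} contradiction (strictly more demand at $u_j$ than supply) while a satisfied clause allows a straightforward extension of the schedule. Correctness then splits into two verifications. For the forward direction, given a satisfying assignment $\alpha$, I would schedule each variable gadget in the parity dictated by $\alpha$ and, for each clause, schedule $u_j$'s required appearances on days unblocked by a chosen true literal, obtaining an explicit valid schedule. For the converse, given any valid schedule $S$ of $\mathcal P_{d\varphi}$, the frequency-$2$ edges of each variable gadget force $S$ into one of the two permitted alternations, from which I read off an assignment $\alpha_S$; the schedulability of every $u_j$ in $S$ then forces at least one literal of $c_j$ to be true under $\alpha_S$, so $\alpha_S$ satisfies $\varphi$. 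Polynomial-time constructibility is immediate since the whole polycule has $O(n+m)$ vertices and edges with constant frequencies.
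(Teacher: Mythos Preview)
Your proposal has the right high-level shape (variable gadgets with two phases, clause gadgets that need a ``free'' literal) but leaves the decisive technical work undone. You explicitly flag the clause-gadget frequencies as ``the main obstacle'' and then do not specify them; in the paper's construction this is precisely where almost all of the effort goes, and the argument does not go through without it. Two further issues would break your construction as stated. First, there is no global phase reference: nothing forces the variable gadgets to agree on which days are ``even'', so the phrase ``$f_i$ is unoccupied on even days'' has no cross-gadget meaning, and a clause with literals from two differently-phased variables is uncontrolled. The paper resolves this with a dedicated True Clock gadget and a chain of connectors that propagate a fixed slot colouring modulo~$6$ to every node. Second, you share a single literal vertex $t_i$ (resp.\ $f_i$) among \emph{all} clauses containing that literal; since $t_i$ can be in at most one matching per day, a variable occurring in many clauses makes the combined demand at $t_i$ infeasible regardless of the assignment. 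The paper handles this with explicit duplication gadgets ($D_3$, $D_6$) that manufacture an independent copy of each literal signal per occurrence, and proving that these copies are forced into the same slot colour as the original is itself a nontrivial lemma.

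In short, the sketch identifies the skeleton of the reduction but omits the three ingredients on which correctness actually hinges: a synchronizing clock, per-occurrence duplication of literals, and a concrete clause mechanism with a verified infeasibility argument when all literals are false. The paper supplies each via a dedicated gadget (True Clock, flipper/$D_3$/$D_6$ duplicators, and OR plus Tension gadgets over a period-$12$ frame, with bipartiteness maintained by alternating ``male''/``female'' nodes through flippers). Without analogues of these, your construction is not a valid reduction.
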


The bulk of \wref{sec:inapproximability} will introduce our construction of $\mathcal P_{d\varphi}$, which we will use to prove this Lemma. This construction is designed to be modular and communicable, rather than efficient with constant factors.

The second step in the proof of \wref{thm:sat-inapprox} is to convert the decision polycule $\mathcal P_{d\varphi}$ from \wref{lem:dps=3sat} to an optimisation polycule $\mathcal P_{o\varphi}$ using \wref{lem:dps-ops}.
It will be immediately obvious from the construction that the largest frequency in $\mathcal P_{d\varphi}$ is $F=12$.
So by \wref{lem:dps-ops}, $\mathcal P_{o\varphi}$ either has $h^* = 1$ (if $\mathcal P_{d\varphi}$ is feasible) or else has $h^*\ge \frac{13}{12}$.

\begin{proofof}{\wref{thm:sat-inapprox}}
Assume that there is a polynomial-time approximation algorithm $A$, which solves bipartite OPS with approximation ratio $\alpha<\frac{13}{12}$. 
If $\mathcal P_{o\varphi}$ has optimal heat $h^* = 1$, $A$ produces a schedule with heat $1\leq h\leq \alpha < \frac{13}{12}$; 
whereas if $h^*\ge \frac{13}{12}$, $A$ must produce a schedule of heat $\frac{13}{12}\leq h\leq \alpha\cdot\frac{13}{12}$.
By running $A$, we are therefore able to distinguish between $h^*\le 1$ and $h^*\ge \frac{13}{12}$ for $\mathcal P_{o\varphi}$, and hence between the feasibility or infeasibility of $\mathcal P_{d\varphi}$. 
This then discriminates between Yes and No instances of 3SAT, via the polynomial-time reduction from \wref{lem:dps=3sat}.
3SAT is NP-hard, hence P $=$ NP follows.
\end{proofof}

Let us denote by $\alpha^*$ the approximability threshold for bipartite OPS,
that is:
efficient polynomial-time approximation algorithms with approximation ratio $\alpha$ exist 
if and only if $\alpha \ge \alpha^*$ (assuming P $\ne$ NP).
\wref{thm:sat-inapprox} shows that $\alpha^*\geq\frac{13}{12}$, 
while \wref{thm:5.24-approx-ratio} shows that $\alpha^* \le 5.24$,
leaving a substantial gap. 
We conjecture that the constant $\frac{13}{12}$ can be improved by careful analysis of our construction, but we leave this to future work. 

\begin{conjecture}
    \label{conj:4/3}
    The approximability threshold for bipartite OPS is $\alpha^* \ge \frac43$.
\end{conjecture}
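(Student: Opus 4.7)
The plan is to strengthen the reduction behind \wref{lem:dps=3sat} so that the constructed bipartite DPS polycule $\mathcal P_{d\varphi}$ uses maximum frequency $F = 3$ instead of $F = 12$. Since \wref{lem:dps-ops} turns an infeasible DPS instance into an OPS instance of optimal heat $h^* \ge (F+1)/F$, a construction with $F = 3$ would immediately give NP-hardness of $(1+\delta)$-approximation for every $\delta < \frac13$, matching the conjectured threshold $\alpha^* \ge \frac43$. The whole reduction chain in the proof of \wref{thm:sat-inapprox} is already in place and would carry over verbatim; only the numerical parameter $F$ of the gadget construction needs to be pushed down.

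The first concrete step is to audit each gadget of the present reduction and locate exactly which edges carry the frequencies that force $F = 12$. I expect a sizeable fraction of those frequencies come from padding and phase-synchronisation, and can be rescaled downwards by a common factor once the overall period is rescaled accordingly. The genuinely delicate ones will be those that distinguish a ``true'' from a ``false'' setting of a literal and that detect an all-false clause; these presumably require integers coprime to the other gadget frequencies in order to force a rigid schedule structure. I would attempt to replace the variable gadget with a bipartite analogue whose only feasible schedules alternate two matchings of period~$3$, mirroring the alternation of truth values, and to replace the clause gadget by a structure that becomes infeasible iff all three incident literals align at the same phase modulo~$3$.

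The main obstacle will be consistency propagation within the razor-thin density budget allowed by $F = 3$. By \wref{thm:densityThreshold}, every vertex with local density at most $\frac14$ is trivially schedulable, so any useful gadget must operate close to the schedulability threshold, leaving essentially no slack for wiring a single variable into the many clauses it may occur in. The naive bipartite fan-out from a variable would blow up the local degree and hence the local density far beyond what frequencies $\le 3$ can accommodate. A promising workaround is to split each variable into a long bipartite chain of constant-degree copies linked by equality-enforcing gadgets, so that each copy interacts with only $O(1)$ clauses while the chain as a whole propagates a single truth value; bipartiteness is preserved as long as each chain has even length. Should this run into unavoidable frequency collisions, a fallback is to start the reduction from a bounded-occurrence variant of 3SAT, where each variable occurs a constant number of times, which removes the need for long copy chains at the cost of a more intricate correctness argument.
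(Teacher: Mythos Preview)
The statement you are trying to prove is a \emph{conjecture} in the paper, not a theorem; the paper explicitly says ``we leave this to future work'' and offers no proof. There is therefore nothing to compare your attempt against.

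As for the attempt itself: what you have written is a research plan, not a proof. You correctly identify that pushing the maximum frequency in the reduction down to $F=3$ would yield the $\frac43$ bound via \wref{lem:dps-ops}, and you correctly anticipate that the difficulty lies in building variable, duplication, clause, and tension gadgets that all live entirely within frequencies $\{1,2,3\}$ while remaining bipartite and rigid enough to encode 3SAT. But you do not actually construct any such gadget, nor do you verify that the obstacles you name (density budget, fan-out, phase propagation) can in fact be overcome. In particular, your suggestion to ``rescale downwards by a common factor'' cannot work as stated: the current gadgets use frequencies $3,6,9,12$, and the coprimality between $9$ and $6$ (and the use of $12$ alongside $3$) is load-bearing in the proofs of \wref{lem:D3} and \wref{lem:OR}, so a uniform rescaling would not land on integers $\le 3$. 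Your fallback to bounded-occurrence 3SAT is sensible as a research direction, but until you exhibit concrete gadgets with $F=3$ and prove their local schedule lemmas (analogues of Lemmas~\ref{lem:variable-6-color}--\ref{lem:tension}), the conjecture remains open.
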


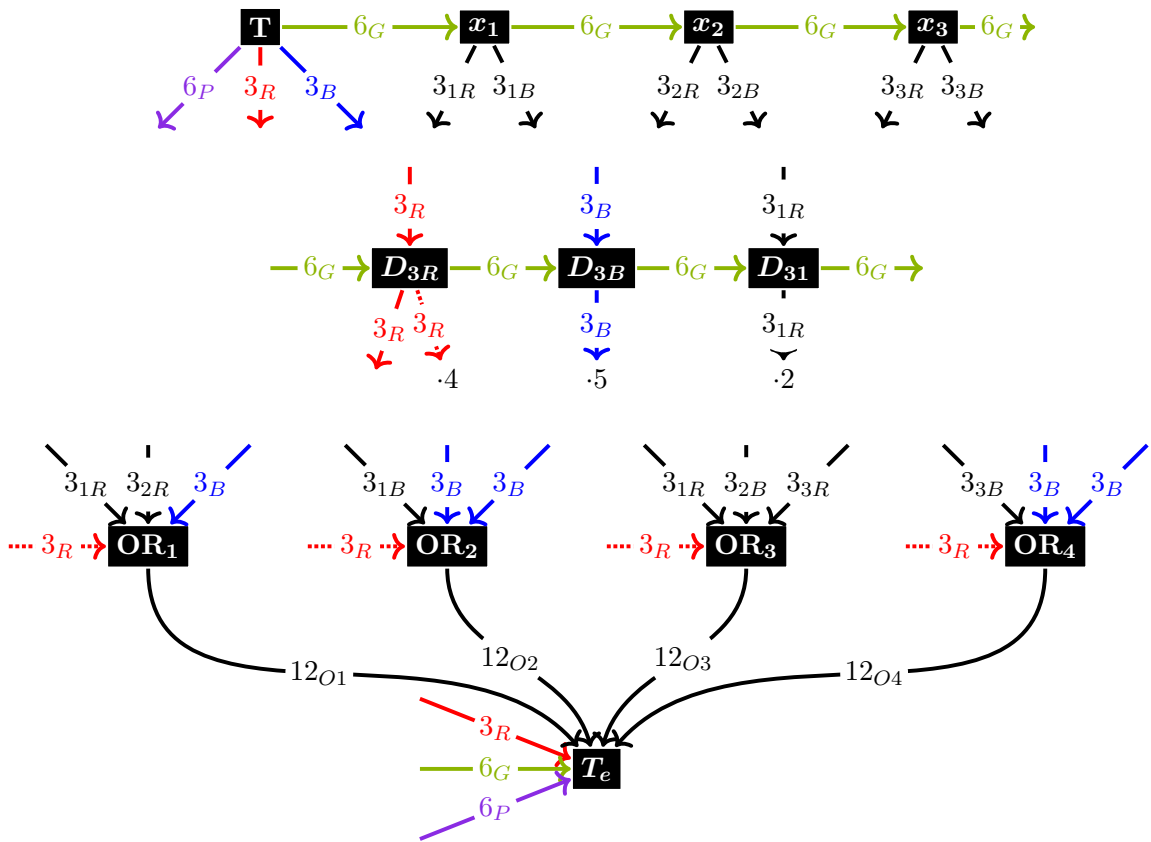
\begin{figure}[htbp] %
    \centering
    \resizebox{\textwidth}{!}{
        \begin{tikzpicture}[scale=2]

        \def\xadj{0.75}
        \node[shorthand] (T) at (-1+\xadj, 0) {T};
        \node[shorthand] (x1) at (0.5+\xadj, 0) {$x_1$};
        \node[shorthand] (x2) at (2+\xadj, 0) {$x_2$};
        \node[shorthand] (x3) at (3.5+\xadj, 0) {$x_3$};
        \node[invisible node] (spare out) at (4.25+\xadj, 0) {};
    
        \path[->] (T) edge[std, ourGreen] node[edge descriptor]  {\color{applegreen}$6_G$} (x1);
        \path[->] (x1) edge[std, ourGreen] node[edge descriptor] {\color{applegreen}$6_G$} (x2);
        \path[->] (x2) edge[std, ourGreen] node[edge descriptor] {\color{applegreen}$6_G$} (x3);
        \path[->] (x3) edge[std, ourGreen] node[edge descriptor] {\color{applegreen}$6_G$} (spare out);
    
        \node[invisible node] (T out 1) at (-1.75+\xadj, -0.75) {};    
        \node[invisible node] (T out 2) at (-1+\xadj, -0.75) {};
        \node[invisible node] (T out 3) at (-0.25+\xadj, -0.75) {};
        \node[invisible node] (x1 out 1) at (0.125+\xadj, -0.75) {};
        \node[invisible node] (x1 out 2) at (0.875+\xadj, -0.75) {};
        \node[invisible node] (x2 out 1) at (1.625+\xadj, -0.75) {};
        \node[invisible node] (x2 out 2) at (2.375+\xadj, -0.75) {};
        \node[invisible node] (x3 out 1) at (3.125+\xadj, -0.75) {};
        \node[invisible node] (x3 out 2) at (3.875+\xadj, -0.75) {};
    
        \path[->] (T) edge[std, ourPurple] node[edge descriptor] {\color{blue-violet}$6_P$} (T out 1);
        \path[->] (T) edge[std, ourRed] node[edge descriptor] {\color{red}$3_R$} (T out 2);
        \path[->] (T) edge[std, ourBlue] node[edge descriptor] {\color{blue}$3_B$} (T out 3);
        \path[->] (x1) edge[std, ourBlack] node[edge descriptor] {$3_{1R}$} (x1 out 1);
        \path[->] (x1) edge[std, ourBlack] node[edge descriptor] {$3_{1B}$} (x1 out 2);
        \path[->] (x2) edge[std, ourBlack] node[edge descriptor] {$3_{2R}$} (x2 out 1);
        \path[->] (x2) edge[std, ourBlack] node[edge descriptor] {$3_{2B}$} (x2 out 2);
        \path[->] (x3) edge[std, ourBlack] node[edge descriptor] {$3_{3R}$} (x3 out 1);
        \path[->] (x3) edge[std, ourBlack] node[edge descriptor] {$3_{3B}$} (x3 out 2);

        \def\xadj{0}
        \def\yadj{-0.125}
       
        \node[invisible node] (2inR)    at (0.75+\xadj,             -0.75+\yadj) {};    
        \node[invisible node] (2inB)    at (2+\xadj,                -0.75+\yadj) {};
        \node[invisible node] (2in1)    at (3.25+\xadj,             -0.75+\yadj) {};

        \node[shorthand] (D3R)          at (0.75+\xadj,             -1.5+\yadj) {$D_{3R}$};
        \node[shorthand] (D3B)          at (2+\xadj,                -1.5+\yadj) {$D_{3B}$};
        \node[shorthand] (D31)          at (3.25+\xadj,             -1.5+\yadj) {$D_{31}$};

        \path[->] (2inR) edge[std, ourRed] node[edge descriptor] {\color{red}$3_R$} (D3R);
        \path[->] (2inB) edge[std, ourBlue] node[edge descriptor] {\color{blue}$3_B$} (D3B);
        \path[->] (2in1) edge[std, ourBlack] node[edge descriptor] {$3_{1R}$} (D31);

        \node[invisible node] (2inG)    at (-0.25+\xadj,            -1.5+\yadj) {};
        \node[invisible node] (2outG)   at (4.25+\xadj,             -1.5+\yadj) {};

        \path[->] (2inG) edge[std, ourGreen] node[edge descriptor] {\color{applegreen}$6_G$} (D3R);
        \path[->] (D31) edge[std, ourGreen] node[edge descriptor] {\color{applegreen}$6_G$} (2outG);
        \path[->] (D3R) edge[std, ourGreen] node[edge descriptor] {\color{applegreen}$6_G$} (D3B);
        \path[->] (D3B) edge[std, ourGreen] node[edge descriptor] {\color{applegreen}$6_G$} (D31);

        \node[invisible node] (2outR L)     at (0.75+\xadj-0.25,    -2.75+0.5+\yadj) {};    
        \node[invisible node] (2outB L)     at (2+\xadj,            -2.75+0.5+\yadj) {$\cdot 5$};
        \node[invisible node] (2out1)       at (3.25+\xadj,         -2.75+0.5+\yadj) {$\cdot 2$};
        \node[invisible node] (2outR R)     at (0.75+\xadj+0.25,    -2.75+0.5+\yadj) {$\cdot 4$};

        \path[->] (D3R) edge[std, ourRed] node[edge descriptor] {\color{red}$3_R$} (2outR L);
        \path[->] (D3B) edge[std, ourBlue] node[edge descriptor] {\color{blue}$3_B$} (2outB L);
        \path[->] (D31) edge[std, ourBlack] node[edge descriptor] {$3_{1R}$} (2out1);
        \path[->] (D3R) edge[std B, ourRed] node[edge descriptor] {\color{red}$3_R$} (2outR R);

        \def\yadj{1.5}        
        \node[invisible node] (OR1 in top 1)    at (-1.75, -4.25+\yadj) {};
        \node[invisible node] (OR1 in top 2)    at (-1, -5.25+1+\yadj) {};
        \node[invisible node] (OR1 in top 3)    at (-0.25, -5.25+1+\yadj) {};
        \node[shorthand] (OR1)                  at (-1, -6+1+\yadj) {OR$_1$};
        \node[invisible node] (OR1 in red)      at (-2, -6+1+\yadj){};
        
        \path[->] (OR1 in top 1) edge[std, ourBlack] node[edge descriptor] {$3_{1R}$} (OR1);
        \path[->] (OR1 in top 2) edge[std, ourBlack] node[edge descriptor] {$3_{2R}$} (OR1);
        \path[->] (OR1 in top 3) edge[std, ourBlue] node[edge descriptor] {$\color{blue}3_B$} (OR1);
        \path[->] (OR1 in red)   edge[std B, ourRed] node[edge descriptor] {\color{red}$3_R$} (OR1);

        \node[invisible node] (OR2 in top 1)    at (0.25, -5.25+1+\yadj) {};
        \node[invisible node] (OR2 in top 2)    at (1, -5.25+1+\yadj) {};
        \node[invisible node] (OR2 in top 3)    at (1.75, -5.25+1+\yadj) {};
        \node[shorthand] (OR2)                  at (1, -6+1+\yadj) {OR$_2$};
        \node[invisible node] (OR2 in red)      at (0, -6+1+\yadj){};
        
        \path[->] (OR2 in top 1) edge[std, ourBlack] node[edge descriptor] {$3_{1B}$} (OR2);
        \path[->] (OR2 in top 2) edge[std, ourBlue] node[edge descriptor] {$\color{blue}3_B$} (OR2);
        \path[->] (OR2 in top 3) edge[std, ourBlue] node[edge descriptor] {$\color{blue}3_B$} (OR2);
        \path[->] (OR2 in red)   edge[std B, ourRed] node[edge descriptor] {\color{red}$3_R$} (OR2);

        \node[invisible node] (OR3 in top 1)    at (2.25, -5.25+1+\yadj) {};
        \node[invisible node] (OR3 in top 2)    at (3, -5.25+1+\yadj) {};
        \node[invisible node] (OR3 in top 3)    at (3.75, -5.25+1+\yadj) {};
        \node[shorthand] (OR3)                  at (3, -6+1+\yadj) {OR$_3$};
        \node[invisible node] (OR3 in red)      at (2, -6+1+\yadj){};
        
        \path[->] (OR3 in top 1) edge[std, ourBlack] node[edge descriptor] {$3_{1R}$} (OR3);
        \path[->] (OR3 in top 2) edge[std, ourBlack] node[edge descriptor] {$3_{2B}$} (OR3);
        \path[->] (OR3 in top 3) edge[std, ourBlack] node[edge descriptor] {$3_{3R}$} (OR3);
        \path[->] (OR3 in red)   edge[std B, ourRed] node[edge descriptor] {\color{red}$3_R$} (OR3);

        \node[invisible node] (OR4 in top 1)    at (4.25, -4.25+\yadj) {};
        \node[invisible node] (OR4 in top 2)    at (5, -5.25+1+\yadj) {};
        \node[invisible node] (OR4 in top 3)    at (5.75, -5.25+1+\yadj) {};
        \node[shorthand] (OR4)                  at (5, -6+1+\yadj) {OR$_4$};
        \node[invisible node] (OR4 in red)      at (4, -6+1+\yadj){};
        
        \path[->] (OR4 in top 1) edge[std, ourBlack] node[edge descriptor] {$3_{3B}$} (OR4);
        \path[->] (OR4 in top 2) edge[std, ourBlue] node[edge descriptor] {$\color{blue}3_B$} (OR4);
        \path[->] (OR4 in top 3) edge[std, ourBlue] node[edge descriptor] {$\color{blue}3_B$} (OR4);
        \path[->] (OR4 in red)   edge[std B, ourRed] node[edge descriptor] {\color{red}$3_R$} (OR4);

        \def\yadjnew{7.5}

        \node[shorthand]      (tension)     at (2, -12.5+\yadjnew){$T_e$};
        \node[invisible node] (red)         at (0.75, -0.25-11.75+\yadjnew) {};
        \node[invisible node] (green)       at (0.75, -0.75-11.75+\yadjnew) {};
        \node[invisible node] (purple)      at (0.75, -1.25-11.75+\yadjnew) {};

        \path[->] (red) edge[std, ourRed] node[edge descriptor] {$\color{red}3_R$} (tension);
        \path[->] (green) edge[std, ourGreen] node[edge descriptor] {$\color{applegreen}6_G$} (tension);
        \path[->] (purple) edge[std, ourPurple] node[edge descriptor] {$\color{blue-violet}6_P$} (tension);
        \path[<-] (tension) edge[in = -90, out = 130, std, ourBlack] node[edge descriptor] {$12_{O1}$} (OR1);
        \path[<-] (tension) edge[in = -90, out = 105, std, ourBlack] node[edge descriptor] {$12_{O2}$} (OR2);
        \path[<-] (tension) edge[in = -90, out = 50, std, ourBlack] node[edge descriptor] {$12_{O4}$} (OR4);
        \path[<-] (tension) edge[in = -90, out = 75, std, ourBlack] node[edge descriptor] {$12_{O3}$} (OR3);
    \end{tikzpicture}
    }
    
    \caption{
            A DPS polycule which is schedulable iff there is some assignment that satisfies $(x_1\lor x_2), (\overline{x_1}), (x_1 \lor \overline{x_2} \lor x_3),$ and $ (\overline{x_3})$ (which is not possible).
            Connections between layers are omitted for clarity but flow from top to bottom, starting with the variable layer, then a layer duplicating nodes with frequency 3, the \OR layer, and finally the tensioning layer. Note that larger problems will also require a $D_6$ layer to support additional tension gadgets.
    }

    \label{fig:worked-eg:Pdsz}
\end{figure} %

\subsection{Overview of Reduction}
\label{sec:DMAX3SAT-to-OPS}

We now give the proof of \wref{lem:dps=3sat}.
For the remainder of this section, we will assume that a 3-CNF formula $\varphi = c_1\land \cdots\land c_m$ over variables $X = \{x_1,\ldots, x_{n}\}$ is given.
Using this, we will show how to construct DPS instances $\mathcal P_{d\varphi}$ that admits a schedule iff
there is a variable assignment $v:X \to \{\mathrm{True},\mathrm{False}\}$ that satisfies all clauses in $C = \{c_1,\ldots,c_m\}$. 
This construction directly represents components of Boolean formulas as bipartite DPS ``gadgets'': 
\begin{itemize}
\item \emph{variables} (\wref{sec:variables}),
\item clauses (\emph{OR gadgets}) (\wref{sec:OR}), and 
\item a check that all clauses are True, the \emph{tension gadget} (\wref{sec:Tension}).
\end{itemize}
To make these gadgets work, we require further auxiliary gadgets:
\begin{itemize}
\item a \emph{``True Clock''} to break ties between symmetric choices for schedules (\wref{sec:TrueClock}),
\item slot duplication gadgets: \emph{$D_3$ duplicators}(\wref{sec:D3}), and \emph{$D_6$ duplicators} (\wref{sec:D6}).
\end{itemize}

The overall conversion algorithm is given by \wref{def:gadgets-assemble} and
a worked example is shown in \wref{fig:worked-eg:Pdsz}.

\begin{definition}[$\mathcal P_{d\varphi}$ polycules]
	\label{def:gadgets-assemble}
	The decision polycule $\mathcal P_{d\varphi}=(P,R,f)$ is constructed in layers as follows:
	\begin{thmenumerate}{def:gadgets-assemble}
	\item \textbf{Variable layer:}\\
		The variable layer consists of a variable gadget with outputs $3_{iR}$ and $3_{iB}$ for each variable $x_i\in X$, as well as a single special variable: the True Clock.
	\item \textbf{Duplication layer:}\\
		The duplication layer duplicates outputs of the variable layer: 
		$D_3$ duplicators create one $3_{iR}$ edge for each $x_i$ in clauses $c\in C$, and one $3_{iB}$ edge for each occurrence of $\overline{x_i}$ in clauses $c\in C$.
		They also create as many $\color{red}3_R$ and $\color{blue}3_B$ edges as are needed, while $D_6$ duplicators do the same for $\color{applegreen}6_G$ and $\color{blue-violet}6_P$ edges.
		Both male and female constant edges are created, and unused edges are connected to pendent nodes.
	\item \textbf{Clause layer:}\\
		The clause layer consists of one \OR gadget for each clause $c_j\in C$. \OR gadgets have three inputs, each corresponding to a literal in $c_j$: $3_{iR}$ for $x_i$, $3_{iB}$ for $\overline{x_i}$. 
		For clauses with less than 3 literals, $\color{blue}3_B$ edges fill the \OR gadget's unused inputs.
	\item \textbf{Tension layer:}\\
		The tensioning layer attaches tension gadgets to all outputs of the sorting layer; any spare inputs to the tensioning layer are connected to pendant nodes.
	\end{thmenumerate}
\end{definition}

\subsection{The True Clock \& Colour Slots}
\label{sec:TrueClock}

\wref{fig:variables} introduces gadgets representing sample variables $x_1$ and $x_2$, as well as a special variable: the True Clock $T$, which acts as a drumbeat for the polycule as a whole. 
Variables, including $T$, have four relationships: $[3, 3, 6, 6]$, so their local schedules must all have the following form: $[\splitaftercomma{3_a, 3_b, 6_a, 3_a, 3_b, 6_b}]$. 
As schedules are cyclic, we can choose to start this schedule with the $3_a$ edge of $T$ without loss of generality. We then assign names to these edges as dictated by the slots they are in:

\begin{definition}[Slots]
    \label{def:slots}
	We call days $t\in\mathbb{N}_0$ with 
 $t\equiv 0 
 \pmod 3$ 
 the \textcolor{red}{red} slots, days $t\equiv 1\pmod 3$  
 \textcolor{blue}{blue} slots, days $t\equiv 2\pmod 6$ 
 \textcolor{applegreen}{green} slots, and days $t\equiv 5 \pmod 6$ 
 \textcolor{blue-violet}{purple} slots.
\end{definition}

The local schedule of the True Clock is therefore given by
[\color{red}$3_R$\color{black},
\color{blue}$3_B$\color{black},
\color{applegreen}$6_G$\color{black},
\color{red}$3_R$\color{black},
\color{blue}$3_B$\color{black},
\color{blue-violet}$6_P$\color{black}].
As $\color{red}3_R$ is scheduled on day 0, it will always be assigned in red slots, 
with $\color{blue}3_B$, $\color{applegreen}6_G$, and $\color{blue-violet}6_P$ edges also restricted to slots of their respective colours.
We will sometimes represent this by underlining elements or gaps in a schedule, \eg, 
[$\color{red}\SLOT$,
$\color{blue}\SLOT$,
$\color{applegreen}\SLOT$,
$\color{red}\SLOT$,
$\color{blue}\SLOT$,
$\color{blue-violet}\SLOT$], 
or by referring to edges as being red, blue, green, or purple.

All gadgets introduced below are constructed such that the lengths of their schedules are integer multiples of 6; in the final polycule $\mathcal P_{d\varphi}$ they will be connected (usually through intermediaries) to the True Clock, such that their edges must stick to certain slots.
\newcommand\slotgood{slot-respecting\xspace}%
To keep correctness proofs of individual gadgets readable, we call a schedule $S$
that schedules all coloured edges in slots of the given colour \emph{\slotgood}.

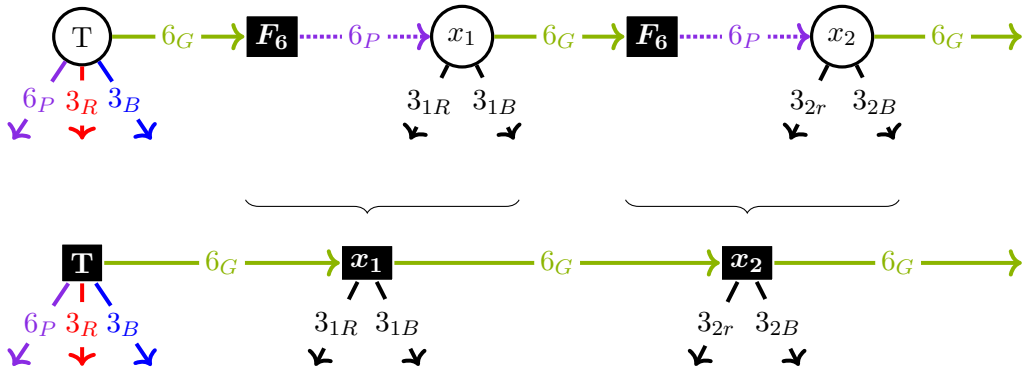
\begin{figure}[hbt] %
    \centering
    \begin{tikzpicture}[scale=2]
        \def\nodelayer{2}
        \def\nodeoutlayer{1.25}
        \def\bracelayer{1.25}
        \def\gadgetlayer{0.5}
        \def\gadgetoutlayer{-0.25}
        \def\Txpos{0}
        \def\xscon{1.25}

        \node[graph node] (T)                   at (\Txpos,                 \nodelayer) {T};
        \node[invisible node] (low left 1)      at (\Txpos-0.5,             \nodeoutlayer) {};
        \node[invisible node] (low left 2)      at (\Txpos,                 \nodeoutlayer) {};
        \node[invisible node] (low left 3)      at (\Txpos+0.5,             \nodeoutlayer) {};
        \node[shorthand] (connect T)            at (\Txpos+\xscon,          \nodelayer) {$F_6$};
        \node[graph node] (x1)                  at (\Txpos+2*\xscon,        \nodelayer){$x_1$};
        \node[invisible node] (low mid 1)       at (\Txpos+2*\xscon-0.375,  \nodeoutlayer) {};
        \node[invisible node] (low mid 2)       at (\Txpos+2*\xscon+0.375,  \nodeoutlayer) {};
        \node[shorthand] (connect x1)           at (\Txpos+3*\xscon,        \nodelayer) {$F_6$};
        \node[graph node] (x2)                  at (\Txpos+4*\xscon,        \nodelayer){$x_2$};
        \node[invisible node] (low right 1)     at (\Txpos+4*\xscon-0.375,  \nodeoutlayer) {};
        \node[invisible node] (low right 2)     at (\Txpos+4*\xscon+0.375,  \nodeoutlayer) {};
        \node[invisible node] (right)           at (\Txpos+5*\xscon,        \nodelayer) {};

        \path[->] (T) edge[std, ourGreen] node[edge descriptor] {\color{applegreen}$6_G$} (connect T);
        \path[->] (connect T) edge[std B, ourPurple] node[edge descriptor] {\color{blue-violet}$6_P$} (x1);
        \path[->] (T) edge[std, ourPurple] node[edge descriptor] {\color{blue-violet}$6_P$} (low left 1);
        \path[->] (T) edge[std, ourRed] node[edge descriptor] {\color{red}$3_R$} (low left 2);
        \path[->] (T) edge[std, ourBlue] node[edge descriptor] {\color{blue}$3_B$} (low left 3);
        \path[->] (x1) edge[std, ourBlack] node[edge descriptor] {$3_{1R}$} (low mid 1);
        \path[->] (x1) edge[std, ourBlack] node[edge descriptor] {$3_{1B}$} (low mid 2);
        \path[->] (x1) edge[std, ourGreen] node[edge descriptor] {\color{applegreen}$6_G$} (connect x1);
        \path[->] (connect x1) edge[std B, ourPurple] node[edge descriptor] {\color{blue-violet}$6_P$} (x2);
        \path[->] (x2) edge[std, ourBlack] node[edge descriptor] {$3_{2r}$} (low right 1);
        \path[->] (x2) edge[std, ourBlack] node[edge descriptor] {$3_{2B}$} (low right 2);
        \path[->] (x2) edge[std, ourGreen] node[edge descriptor] {\color{applegreen}$6_G$} (right);

        \draw [decorate,decoration={brace,amplitude=5pt,mirror,raise=4ex,aspect=0.445}]
        (\Txpos+\xscon-0.175,\bracelayer) -- 
        (\Txpos+2*\xscon+0.375,\bracelayer);

        \draw [decorate,decoration={brace,amplitude=5pt,mirror,raise=4ex,aspect=0.445}]
        (\Txpos+3*\xscon-0.175,\bracelayer) -- 
        (\Txpos+4*\xscon+0.375,\bracelayer);

        \node[shorthand] (T short)                  at (\Txpos,                 \gadgetlayer) {T};
        \node[invisible node] (short left 1)        at (\Txpos-0.5,             \gadgetoutlayer) {};
        \node[invisible node] (short left 3)        at (\Txpos,                 \gadgetoutlayer) {}; 
        \node[invisible node] (short left 4)        at (\Txpos+0.5,             \gadgetoutlayer) {};
        
        \node[shorthand] (x1 short)                 at (\Txpos+1.5*\xscon,      \gadgetlayer) {$x_1$};
        \node[invisible node] (short right low 1)   at (\Txpos+1.5*\xscon-0.375,\gadgetoutlayer) {};
        \node[invisible node] (short right low 2)   at (\Txpos+1.5*\xscon+0.375,\gadgetoutlayer) {};
        
        \node[shorthand] (x2 short)                 at (\Txpos+3.5*\xscon,      \gadgetlayer) {$x_2$};
        \node[invisible node] (short2 right right)  at (\Txpos+5*\xscon,      \gadgetlayer) {};
        \node[invisible node] (short2 right low 1)  at (\Txpos+3.5*\xscon-0.375,\gadgetoutlayer) {};
        \node[invisible node] (short2 right low 2)  at (\Txpos+3.5*\xscon+0.375,\gadgetoutlayer) {};

        \path[->] (T short) edge[std, ourPurple] node[edge descriptor] {\color{blue-violet}$6_P$} (short left 1);
        \path[->] (T short) edge[std, ourRed] node[edge descriptor] {\color{red}$3_R$} (short left 3);
        \path[->] (T short) edge[std, ourBlue] node[edge descriptor] {\color{blue}$3_B$} (short left 4);
        \path[->]  (T short) edge[std, ourGreen] node[edge descriptor] {\color{applegreen}$6_G$} (x1 short);

        \path[->]  (x1 short) edge[std, ourGreen] node[edge descriptor] {\color{applegreen}$6_G$} (x2 short);
        \path[->] (x1 short) edge[std, ourBlack] node[edge descriptor] {$3_{1R}$} (short right low 1);
        \path[->] (x1 short) edge[std, ourBlack] node[edge descriptor] {$3_{1B}$} (short right low 2);
        
        \path[->] (x2 short) edge[std, ourGreen] node[edge descriptor] {\color{applegreen}$6_G$} (short2 right right);
        \path[->] (x2 short) edge[std, ourBlack] node[edge descriptor] {$3_{2r}$} (short2 right low 1);
        \path[->] (x2 short) edge[std, ourBlack] node[edge descriptor] {$3_{2B}$} (short2 right low 2);

    \end{tikzpicture}
    \caption{
            Gadgets for the True Clock and for sample variables $x_1$ and $x_2$ (top, left to right), with shorthand versions shown below. Gadgets are shown connected as they would be in a sample variable layer, and their colours and schedules are discussed in 
            \wref{sec:TrueClock}. Further variables can be added to the right.
    }
    \label{fig:variables}
\end{figure} %

\paragraph{Drawing Conventions}
\label{para:in/out}
Gadgets are connected by input and output edges, represented by incoming and outgoing arrows respectively -- each one being half of a relationship between two people from different gadgets. 
In addition to their frequencies, input and output edges share restrictions on their permissible schedules, carrying them from one gadget to another as discussed in the proofs associated with each gadget.

In shorthand gadgets, vertical incoming edges are the primary \emph{input} to a gadget, encoding the value of some variable or logical function; horizontal inputs contain edges of fixed colour, which we will refer to as \emph{constants}. 
Similarly, vertical outgoing edges represent primary \emph{outputs} that encode the result of the gadget, while horizontal outgoing edges represent incidentally created constants which may either be used by other gadgets or connected to pendent vertices.

Some incoming and outgoing edges will have end labels of the form ``$\cdot i$'', indicating $i$ connections of the given type, each with a different person.

To show that $\mathcal P_{d\varphi}$ is bipartite, we assign a sex to each node: male nodes with solid edges, and female nodes with dotted edges. We will also use a directed graph to show the sex of the originating node: edges coming from males have solid edges, while edges coming from female nodes will have dotted edges. Note that this convention is only used to show that the complete graph is bipartite while reasoning about parts of it; in all other respects, the graph is undirected (as we would expect for poly scheduling).

\subsection{Variables}
\label{sec:variables}

\wref{fig:variables} also introduces the gadget for a sample variable, $x_1$, which again has four relationships:
[$3_{1R}, 3_{1B},$
$\color{applegreen}6_G$,
$\color{blue-violet}6_P$].
The key property of variable gadgets is summarized in the following Lemma.

\begin{lemma}[Variable gadget schedules]
\label{lem:variable-6-color}
    The $x_i$ node in each variable gadget $x_i$ has a valid local schedule which must be of the form
    $[3_a, 3_b, {\color{applegreen}6_G}, 
    3_a, 3_b, {\color{blue-violet}6_P}]$ in any \slotgood global schedule.
\end{lemma}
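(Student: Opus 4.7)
My plan is to prove the lemma via a density-saturation argument at $x_i$. The four edges at $x_i$ have frequencies $3,3,6,6$, whose reciprocals sum to exactly $1$. A standard pigeonhole argument on gaps shows that every edge $e$ incident to $x_i$ must be scheduled with asymptotic density at least $1/f(e)$, while the matching constraint caps the total density at $x_i$ by $1$. These pair up to force the density of each incident edge to be exactly $1/f(e)$ and some incident edge to be scheduled on every single day.

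Next I would pin down the two frequency-$6$ edges. Since the schedule is slot-respecting, the green edge may occur only on days $t \equiv 2 \pmod 6$, and these are spaced exactly $6$ days apart; the frequency-$6$ constraint therefore forces the green edge to occur on \emph{every} such day. Symmetrically the purple edge occupies every day with $t \equiv 5 \pmod 6$. This leaves the \emph{free days} at $x_i$, those with $t \bmod 6 \in \{0,1,3,4\}$, to be filled by the two frequency-$3$ edges $a := 3_{iR}$ and $b := 3_{iB}$.

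The pattern for $a$ and $b$ then follows from an arithmetic-progression argument. The density saturation from the first step fixes the density of $a$ and of $b$ at exactly $1/3$ each, so the average gap between consecutive occurrences of each is $3$. Combined with the cap of $3$ on every gap (from frequency $3$), every gap must in fact equal \emph{exactly} $3$, so each of $a$ and $b$ lies in a single residue class modulo $3$. The free days meet precisely the residue classes $0$ and $1$ modulo $3$, so $\{a,b\}$ is assigned one-to-one to the red slots and the blue slots, giving exactly the pattern $[3_a, 3_b, {\color{applegreen}6_G}, 3_a, 3_b, {\color{blue-violet}6_P}]$ (with the convention that $3_a$ sits in the red slots).

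The main obstacle I anticipate is making the ``average equals max'' step in the final paragraph rigorous without assuming periodicity a priori. This is cleanest by first restricting, without loss of generality, to schedules with some common period $T$ that is a multiple of $6$, and then comparing the per-period count of each edge $e$ directly to $T/f(e)$; the infinite-schedule version then follows from the density-limit interpretation of the pigeonhole-on-gaps bound.
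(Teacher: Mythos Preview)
Your argument is correct and in its core is the same density-saturation idea the paper uses: the four incident frequencies $3,3,6,6$ have reciprocals summing to $1$, so at the vertex $x_i$ every edge must hit exact density $1/f(e)$, which forces the two frequency-$3$ edges into fixed residue classes modulo $3$.

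Where you diverge from the paper is in how the two frequency-$6$ edges get pinned down. You invoke the \slotgood hypothesis directly for \emph{both} $6_G$ and $6_P$, placing them immediately in the green and purple slots. The paper instead only uses the colour of the $6_G$ edge coming in from the True Clock side, deduces the position of the other frequency-$6$ edge by elimination, and then propagates this along the chain of variable gadgets by induction through the $F_6$ flipper lemma. Your route is shorter and avoids the forward reference to the flipper gadget; the paper's route has the side effect of re-deriving that the edge labelled $6_P$ really does land in purple slots rather than assuming it, which is useful later when arguing that the overall construction forces any valid schedule to be \slotgood in the first place.

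On your anticipated obstacle: you do not need to pass through periodicity at all. The finite sliding-window count already does the job. In any window of $6$ consecutive days the matching constraint allows at most $6$ incident edges, while the frequency constraints force at least $2+2+1+1=6$; hence exactly $2$ occurrences of each frequency-$3$ edge in every $6$-day window. Combined with the gap bound $\le 3$, a single gap of length $\le 2$ would produce a third occurrence within some $6$-day window, a contradiction. So every gap equals $3$ on the nose, no limits or averaging required.
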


\begin{proof}
    $x_i$ nodes have tasks 
    [$3_{1R}, 3_{1B},$
    $\color{applegreen}6_G$,
    $\color{blue-violet}6_P$] 
    and local density $D=1$, so $3_{iR}$ and $3_{iB}$ must be scheduled exactly once in each 3-day period, forcing every 3 days to be of the form $[3_a, 3_b, \SLOT]$, and every 6-day schedule to be of the form $[3_a, 3_b, \SLOT, 3_a, 3_b, \SLOT]$; this leaves two remaining slots, which must contain $\color{applegreen}6_G$ and $\color{blue-violet}6_P$. 
    
    The $\color{applegreen}6_G$ edge of the first variable node, $x_1$, is shared with $T$ such that it must be green, so valid schedules for $x_1$
    are of the form
    $[3_a, 3_b, \color{applegreen}6_G\color{black}, 3_a, 3_b, \color{blue-violet}\SLOT\color{black}]$, which must be completed as $[3_a, 3_b, \color{applegreen}6_G\color{black}, 3_a, 3_b, \color{blue-violet}6_P\color{black}]$ -- a valid schedule for $x_1$.

    The $\color{blue-violet}6_P$ edge leaving $x_1$ is then connected to an $F_6$ flipper gadget, which returns a $\color{applegreen}6_G$ edge by \wref{lem:flipper6}.
    Further variables are each connected to the $\color{applegreen}6_G$ edge returned from the $F_6$ flipper gadget of the previous variable gadget, so their schedules must be of the same form.
    It therefore follows inductively that each variable gadget $x_i$ has a valid schedule, and that it must be of the form
    $[3_a, 3_b, \color{applegreen}6_G\color{black}, 3_a,$
    $ 3_b, \color{blue-violet}6_P\color{black}]$.
\end{proof}

According to \wref{lem:variable-6-color}, the incident $\color{applegreen}6_G$ edge is used to restrict the valid schedules for $x_1$, leaving two possibilities: 
$[
    \mathunderline{red}{3_{1R}},
    \mathunderline{blue}{3_{1B}},
    \color{applegreen}6_G\color{black},
    \mathunderline{red}{3_{1R}},
    \mathunderline{blue}{3_{1B}},
    \color{blue-violet}6_P\color{black}
]$ 
and
$[
    \mathunderline{red}{3_{1B}},
    \mathunderline{blue}{3_{1R}},
    \color{applegreen}6_G\color{black},
    \mathunderline{red}{3_{1B}},
    \mathunderline{blue}{3_{1R}},
    \color{blue-violet}6_P\color{black}
]$. 
The former schedule, where $3_{1R}$ is scheduled in red slots, corresponds to a variable assignment where $x_1$ is True, whereas the the second schedule corresponds to $x_1$ being assigned False.

This technique of connecting $\color{red}3_R$, $ \color{blue}3_B$, $\color{applegreen}6_G$, or $\color{blue-violet}6_P$ edges to nodes in a gadget in order to limit their valid local schedules and force relationships between edges, slots, and particular meanings will be used extensively in what follows. 

Note that $3_{iB}$ has the opposite value to $3_{iR}$, so using $3_{iB}$ edges in the polycule corresponds to the negated literal $\overline{x_i}$, just as $3_{iR}$ edges correspond to the literal $x_i$.

\subsection{Flippers}
\label{sec:flipper}

Flippers are introduced by \wref{fig:flipper}, and are used to ensure that  $\mathcal P_{d\varphi}=(P,R,f)$ is bipartite by converting between male and female edges. They allow variable and duplicator gadgets to consume a constant, rather than linear, amount of $\color{blue}3_b$, $\color{applegreen}6_G$, and $\color{blue-violet}6_P$ edges.

\begin{lemma}[F-6 Flipper gadget schedules]
\label{lem:flipper6}
    The $F_6$ node in each $F_6$ flipper gadget has a valid local schedule which must be of the form 
    $[3_a, 3_b, {\color{applegreen}6_G}, 
    3_a, 3_b, {\color{blue-violet}6_P}]$ in any \slotgood global schedule.
\end{lemma}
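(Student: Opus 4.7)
The plan is to mirror the proof of \wref{lem:variable-6-color} almost verbatim. Assuming the $F_6$ node carries the same multiset of frequencies as a variable node --- two relationships of frequency $3$ and two of frequency $6$, as befits a gadget that must hand a frequency-$6$ connection between two adjacent variable gadgets while preserving bipartiteness --- its personal density is $\tfrac{1}{3}+\tfrac{1}{3}+\tfrac{1}{6}+\tfrac{1}{6}=1$. Hence no day at $F_6$ can be idle, and each three-day window must contain exactly one occurrence of each frequency-$3$ edge. This forces the period-$6$ local schedule to have the skeleton $[3_a, 3_b, \SLOT, 3_a, 3_b, \SLOT]$, with the two $\SLOT$ positions reserved for the two frequency-$6$ edges (one apiece, since each has to appear every six days).

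Next I would fix the colours of the two $\SLOT$ positions using the slot-respecting hypothesis together with the wiring shown in \wref{fig:variables}. The $F_6$ node shares one frequency-$6$ edge with the preceding variable node in the chain; by \wref{lem:variable-6-color} that variable schedules this edge in a green slot (day ${\equiv}\,2 \pmod 6$), which pins one $\SLOT$ of $F_6$ to green. By \wref{def:slots} the only other position in the $6$-day period that can accommodate a frequency-$6$ edge is the purple one (day ${\equiv}\,5 \pmod 6$), so the remaining frequency-$6$ edge of $F_6$ --- the one passed to the next variable --- is forced into a purple slot. Combining these observations yields the claimed schedule $[3_a, 3_b, {\color{applegreen}6_G}, 3_a, 3_b, {\color{blue-violet}6_P}]$.

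I expect the only real obstacle to be confirming the structural assumption about $F_6$'s neighbourhood from \wref{fig:flipper} (i.e., the claim that $F_6$ really does carry two frequency-$3$ padding edges and two frequency-$6$ edges, one incoming and one outgoing). Once this is in place, the density bookkeeping reproduces the argument of \wref{lem:variable-6-color} essentially line for line, and a single colour-propagation step finishes the proof; no genuinely new idea is needed.
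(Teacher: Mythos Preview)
Your skeleton argument is fine and matches the paper: density $1$ forces the two frequency-$3$ edges into a fixed alternating pattern, leaving two gaps for the frequency-$6$ edges. The problem is the colour-fixing step. You invoke \wref{lem:variable-6-color} to pin the shared edge to a green slot, but the paper's proof of \wref{lem:variable-6-color} itself invokes \wref{lem:flipper6} (to propagate colours from $x_1$ through the $F_6$ node to $x_2$, and so on). Your detour therefore creates a circular dependency.

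The paper avoids this by using the slot-respecting hypothesis directly: the incoming frequency-$6$ edge of an $F_6$ node is already \emph{labelled} $6_G$ or $6_P$ in the construction, and by definition a slot-respecting schedule places any coloured edge in a slot of its colour. That single observation pins one gap, and the other frequency-$6$ edge is forced into the remaining gap, yielding $[3_a,3_b,{\color{applegreen}6_G},3_a,3_b,{\color{blue-violet}6_P}]$ in either orientation. This also fixes a second gap in your argument: $F_6$ nodes are not confined to the variable chain (they appear, for instance, inside $D_3$ duplicators, and in both the $6_G$-in and $6_P$-in orientations shown in \wref{fig:flipper}), so an appeal to ``the preceding variable'' does not cover all cases. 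Replace the call to \wref{lem:variable-6-color} with ``the incoming edge is coloured, hence slot-respecting pins its position'' and your proof goes through cleanly.
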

\begin{proof}
    Note that $F_6$ flippers and variables share the same tasks: $[3,3,6,6]$. By the proof of \wref{lem:variable-6-color}, schedules for $F_6$ flippers must be of the form $[3_a, 3_b, \SLOT, 3_a, 3_b, \SLOT]$, where the remaining slots must contain $\color{applegreen}6_G$ and $\color{blue-violet}6_P$ edges. The incoming edge, either a $\color{applegreen}6_G$ or $\color{blue-violet}6_P$, must be in a corresponding slot forcing schedules to be of the form $[3_a, 3_b, \color{applegreen}6_G\color{black}, 3_a, 3_b, \color{blue-violet}\SLOT\color{black}]$
    or
    $[3_a, 3_b, \color{applegreen}\SLOT\color{black}, 3_a, 3_b, \color{blue-violet}6_P\color{black}]$ respectively. Both must yield the complete and valid local schedule $[3_a, 3_b, {\color{applegreen}6_G}, 
    3_a, 3_b, {\color{blue-violet}6_P}]$.
\end{proof}

\begin{lemma}[F-3 Flipper gadget schedules]
\label{lem:flipper3}
    The $F_3$ node in each $F_3$ flipper gadget has a valid local schedule which must be of the form  
$[{\color{red}3_R},
{\color{blue}3_B},
{\color{applegreen}6_G},
{\color{red}3_R},
{\color{blue}3_B},
{\color{blue-violet}6_P]}$ in any \slotgood global schedule.
\end{lemma}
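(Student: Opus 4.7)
The plan is to mirror the proof of \wref{lem:flipper6}, exploiting the fact that the $F_3$ gadget shares the same frequency profile $[3,3,6,6]$ as $F_6$ flippers and variable gadgets; only the \emph{type} of its incoming edge differs (a frequency-$3$ edge rather than a frequency-$6$ edge). First, I would reuse the density argument from \wref{lem:variable-6-color}: the four relationships with densities $\tfrac13+\tfrac13+\tfrac16+\tfrac16=1$ force every valid local schedule to have period $6$ and to take the form $[3_a,3_b,\star,3_a,3_b,\star]$, where the two $\star$ positions are necessarily occupied by the two $6$-edges (one of each colour, as in the $F_6$ case).

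Next, the slot-respecting hypothesis pins down the colouring. The $F_3$ flipper's incoming edge is a $3$-edge of a fixed colour; without loss of generality, assume it is a $\color{red}3_R$ edge. Slot-respectingness forces this edge into the red slots (days $0$ and $3$), so $3_a=\color{red}3_R$, and the remaining $3$-edge $3_b$ must occupy the blue slots (days $1$ and $4$), giving $3_b=\color{blue}3_B$. The two $\star$ positions are then the green slot (day $2$) and the purple slot (day $5$), and the gadget's two $6$-edges --- one $\color{applegreen}6_G$ and one $\color{blue-violet}6_P$, by construction analogous to $F_6$ --- must each fall into its matching coloured slot. This yields the required local schedule $[{\color{red}3_R},{\color{blue}3_B},{\color{applegreen}6_G},{\color{red}3_R},{\color{blue}3_B},{\color{blue-violet}6_P}]$, establishing both existence and the claimed uniqueness of form.

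I do not anticipate a real obstacle: the argument is wholly parallel to \wref{lem:flipper6}, with the coloured incoming edge playing a symmetric but opposite role --- it is the $3$-edge that is constrained externally, which then propagates down to the $6$-edges rather than the other way around. The only mild care needed is in confirming that the $F_3$ gadget is indeed built on the same $[3,3,6,6]$ backbone and that both a $\color{applegreen}6_G$ and a $\color{blue-violet}6_P$ appear among its local edges; given this, the case of a $\color{blue}3_B$ input proceeds by an identical argument and yields the same canonical schedule.
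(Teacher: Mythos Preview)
Your overall strategy is sound and close in spirit to the paper's, but there is a factual inaccuracy about the $F_3$ gadget that creates a gap. The $F_3$ flipper has \emph{two} incoming edges---a coloured 3-edge \emph{and} a coloured 6-edge---not just a 3-edge (see \wref{fig:flipper} and the paper's proof, which begins ``$F_3$ flipper nodes have two inputs, $[3_i,6_i]$ and two outputs, $[3_o,6_o]$''). Your statement that ``it is the $3$-edge that is constrained externally, which then propagates down to the $6$-edges'' misreads the construction.

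This matters for the argument. From the 3-input alone (say ${\color{red}3_R}$), your density argument correctly pins the schedule to $[{\color{red}3_R},3_b,\star,{\color{red}3_R},3_b,\star]$ with the two 6-edges at days 2 and 5, but it cannot decide \emph{which} 6-edge sits where. You patch this by asserting the two 6-edges are ``one ${\color{applegreen}6_G}$ and one ${\color{blue-violet}6_P}$, by construction analogous to $F_6$''---yet in $F_6$ only the \emph{input} 6-edge is externally coloured; the output's colour is precisely what the lemma derives. The paper closes this gap by also using the coloured 6-input $6_i$: from the partial forms $[3_i,\SLOT,6_i,3_i,\SLOT,\SLOT]$ or $[3_i,6_i,\SLOT,3_i,\SLOT,\SLOT]$, density $D=1$ forces $6_o$ to fall exactly three days after $6_i$ (otherwise the $3_o$ constraint is violated), which yields the claimed unique form with the output edges landing in the correct coloured slots. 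Your argument is easily repaired by acknowledging this second input and using it in place of the ``by construction'' hand-wave.
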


\begin{proof}
    $F_3$ flipper nodes have two inputs, [$3_i, 6_i$] and two outputs, [$3_o, 6_o$]. 
    $3_i$ and $6_i$ are indirectly connected to the True Clock such that $3_i$ must be red or blue, and $6_i$ must be either green or purple.
    This forces partial schedules to be of the form 
    [$3_i,
    \SLOT,
    6_i,
    3_i,
    \SLOT,
    \SLOT$]
    or
    [$3_i,
    6_i,
    \SLOT,
    3_i,
    \SLOT,
    \SLOT$]. 
    As $F_3$ nodes have density $D=1$, the $6_o$ edges must be placed 3 days after the $6_i$ edge to avoid violating the condition of the $3_o$ edge, for schedules of the form
    [$3_i,
    3_o,
    6_i,
    3_i,
    3_o,
    6_o$]
    and
    [$3_i,
    6_i,
    3_o,
    3_i,
    6_o,
    3_o$] respectively. 
    If $3_i$ is red, $3_o$ will be blue, and vice versa; similarly, one 6 edge will be purple and the other green.
    In either case, the resultant schedule will be valid and of the form
    [$\color{red}3_R$,
    $\color{blue}3_B$,
    $\color{applegreen}6_G$,
    $\color{red}3_R$,
    $\color{blue}3_B$,
    $\color{blue-violet}6_P$].
\end{proof}

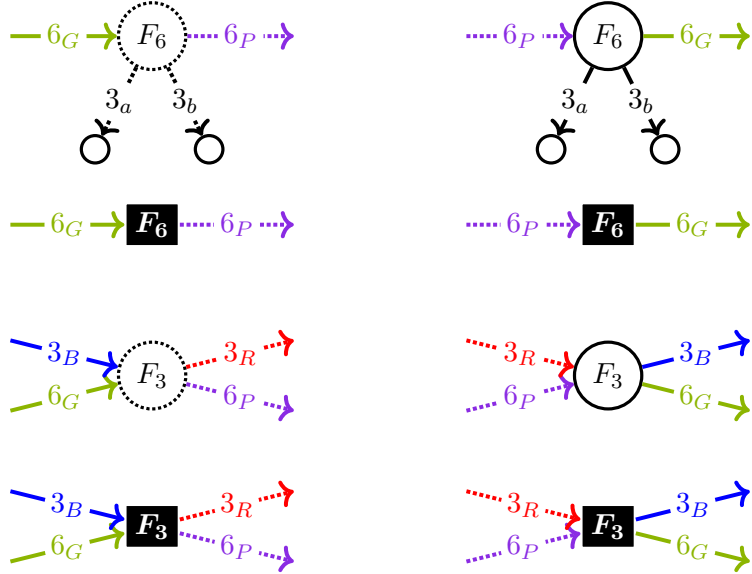
\begin{figure}[hbt] 
    \centering
    \begin{tikzpicture}[scale=2]
        \def\nodelayer{2}
        \def\nodeoutlayer{1.25}
        \def\bracelayer{1.25}
        \def\gadgetlayer{0.75}
        \def\Txpos{0}
        \def\xscon{1}

        \node[invisible node] (T)               at (\Txpos,                 \nodelayer) {};
        \node[graph node B] (connect T)         at (\Txpos+\xscon,          \nodelayer) {$F_6$};
        \node[graph node] (fake out left1)      at (\Txpos+\xscon-0.375,    \nodeoutlayer) {};
        \node[graph node] (fake out right1)     at (\Txpos+\xscon+0.375,    \nodeoutlayer) {};
        \node[invisible node] (x1 left)         at (\Txpos+2*\xscon,        \nodelayer){};
        \node[invisible node] (x1 right)        at (\Txpos+3*\xscon,        \nodelayer){};
        \node[graph node] (connect x1)          at (\Txpos+4*\xscon,          \nodelayer) {$F_6$};
        \node[graph node] (fake out left2)      at (\Txpos+4*\xscon-0.375,  \nodeoutlayer) {};
        \node[graph node] (fake out right2)     at (\Txpos+4*\xscon+0.375,  \nodeoutlayer) {};
        \node[invisible node] (x2)              at (\Txpos+5*\xscon,        \nodelayer){};

        \path[->] (T) edge[std, ourGreen] node[edge descriptor] {\color{applegreen}$6_G$} (connect T);
        \path[->] (connect T) edge[std B, ourPurple] node[edge descriptor] {\color{blue-violet}$6_P$} (x1 left);
        \path[->] (connect T) edge[std B, ourBlack] node[edge descriptor] {$3_a$} (fake out left1);
        \path[->] (connect T) edge[std B, ourBlack] node[edge descriptor] {$3_b$} (fake out right1);
        \path[->] (x1 right) edge[std B, ourPurple] node[edge descriptor] {\color{blue-violet}$6_P$} (connect x1);
        \path[->] (connect x1) edge[std, ourGreen] node[edge descriptor] {\color{applegreen}$6_G$} (x2);
        \path[->] (connect x1) edge[std, ourBlack] node[edge descriptor] {$3_a$} (fake out left2);
        \path[->] (connect x1) edge[std, ourBlack] node[edge descriptor] {$3_b$} (fake out right2);

        \node[invisible node] (T short)         at (\Txpos,             \gadgetlayer) {};
        \node[shorthand] (F left)               at (\Txpos+\xscon,      \gadgetlayer) {$F_6$};
        \node[invisible node] (x1 short left)   at (\Txpos+2*\xscon,    \gadgetlayer) {};
        \node[invisible node] (x1 short right)  at (\Txpos+3*\xscon,    \gadgetlayer) {};
        \node[shorthand] (F right)              at (\Txpos+4*\xscon,    \gadgetlayer) {$F_6$};
        \node[invisible node] (x2 short)        at (\Txpos+5*\xscon,    \gadgetlayer) {};

        \path[->] (T short) edge[std, ourGreen] node[edge descriptor] {\color{applegreen}$6_G$} (F left);
        \path[->] (F left) edge[std B, ourPurple] node[edge descriptor] {\color{blue-violet}$6_P$} (x1 short left);
        \path[->] (x1 short right) edge[std B, ourPurple] node[edge descriptor] {\color{blue-violet}$6_P$} (F right);
        \path[->] (F right) edge[std, ourGreen] node[edge descriptor] {\color{applegreen}$6_G$} (x2 short);

        \def\nodelayer{1.75-2}
        \def\gadgetlayer{0.75-2}

        \def\Txpos{0}
        \def\xscon{1}

        \node[invisible node] (T)               at (\Txpos,                 \nodelayer-0.25) {};
        \node[invisible node] (T2)              at (\Txpos,                 \nodelayer+0.25) {};
        \node[graph node B] (connect T)         at (\Txpos+\xscon,          \nodelayer) {$F_3$};
        \node[invisible node] (x1 left)         at (\Txpos+2*\xscon,        \nodelayer-0.25){};
        \node[invisible node] (x1 left2)        at (\Txpos+2*\xscon,        \nodelayer+0.25){};
        \node[invisible node] (x1 right)        at (\Txpos+3*\xscon,        \nodelayer-0.25){};
        \node[invisible node] (x1 right2)        at (\Txpos+3*\xscon,       \nodelayer+0.25){};
        \node[graph node] (connect x1)          at (\Txpos+4*\xscon,        \nodelayer) {$F_3$};
        \node[invisible node] (x2)              at (\Txpos+5*\xscon,        \nodelayer-0.25){};
        \node[invisible node] (x22)              at (\Txpos+5*\xscon,       \nodelayer+0.25){};

        \path[->] (T) edge[std, ourGreen] node[edge descriptor] {\color{applegreen}$6_G$} (connect T);
        \path[->] (T2) edge[std, ourBlue] node[edge descriptor] {\color{blue}$3_B$} (connect T);
        \path[->] (connect T) edge[std B, ourPurple] node[edge descriptor] {\color{blue-violet}$6_P$} (x1 left);
        \path[->] (connect T) edge[std B, ourRed] node[edge descriptor] {\color{red}$3_R$} (x1 left2);
        \path[->] (x1 right) edge[std B, ourPurple] node[edge descriptor] {\color{blue-violet}$6_P$} (connect x1);
        \path[->] (x1 right2) edge[std B, ourRed] node[edge descriptor] {\color{red}$3_R$} (connect x1);
        \path[->] (connect x1) edge[std, ourGreen] node[edge descriptor] {\color{applegreen}$6_G$} (x2);
        \path[->] (connect x1) edge[std, ourBlue] node[edge descriptor] {\color{blue}$3_B$} (x22);

        \node[invisible node] (T short)         at (\Txpos,             \gadgetlayer-0.25) {};
        \node[invisible node] (T short2)        at (\Txpos,             \gadgetlayer+0.25) {};        
        \node[shorthand] (F left)               at (\Txpos+\xscon,      \gadgetlayer) {$F_3$};
        \node[invisible node] (x1 short left)   at (\Txpos+2*\xscon,    \gadgetlayer-0.25) {};
        \node[invisible node] (x1 short left2)  at (\Txpos+2*\xscon,    \gadgetlayer+0.25) {};        
        \node[invisible node] (x1 short right)  at (\Txpos+3*\xscon,    \gadgetlayer-0.25) {};
        \node[invisible node] (x1 short right2) at (\Txpos+3*\xscon,    \gadgetlayer+0.25) {};        
        \node[shorthand] (F right)              at (\Txpos+4*\xscon,    \gadgetlayer) {$F_3$};
        \node[invisible node] (x2 short)        at (\Txpos+5*\xscon,    \gadgetlayer-0.25) {};
        \node[invisible node] (x2 short2)       at (\Txpos+5*\xscon,    \gadgetlayer+0.25) {};

        \path[->] (T short) edge[std, ourGreen] node[edge descriptor] {\color{applegreen}$6_G$} (F left);
        \path[->] (T short2) edge[std, ourBlue] node[edge descriptor] {\color{blue}$3_B$} (F left);
        \path[->] (F left) edge[std B, ourPurple] node[edge descriptor] {\color{blue-violet}$6_P$} (x1 short left);
        \path[->] (F left) edge[std B, ourRed] node[edge descriptor] {\color{red}$3_R$} (x1 short left2);
        
        \path[->] (x1 short right) edge[std B, ourPurple] node[edge descriptor] {\color{blue-violet}$6_P$} (F right);
        \path[->] (x1 short right2) edge[std B, ourRed] node[edge descriptor] {\color{red}$3_R$} (F right);
        \path[->] (F right) edge[std, ourGreen] node[edge descriptor] {\color{applegreen}$6_G$} (x2 short);
        \path[->] (F right) edge[std, ourBlue] node[edge descriptor] {\color{blue}$3_B$} (x2 short2);
        
    \end{tikzpicture}
    \caption{
            Four sample flipper gadgets, each with shorthand versions below them: $\color{applegreen}6_G$ male to $\color{blue-violet}6_P$ female (top left), $\color{blue-violet}6_P$ female to $\color{applegreen}6_G$ male (top right), $\color{blue}3_B$ male to $\color{red}3_R$ female (bottom left), and $\color{red}3_R$ female to $\color{blue}3_B$ male (bottom right). Note that $F_3$ flippers act incidentally as $F_6$ flippers, but are used separately in our construction.
    }
    \label{fig:flipper}
\end{figure} 

\subsection{Duplication of Variables and Constants}
\label{sec:duplication}

Variables may appear in multiple clauses, while the constant $\color{red}3_R$, $\color{blue}3_B$, $\color{applegreen}6_G$, and $\color{blue-violet}6_P$ edges are used in multiple gadgets, engendering a need for the duplication of variables and constants. 

\subsubsection{3-Duplicators}
\label{sec:D3}

A gadget for duplicating edges with period 3 is shown in \wref{fig:3-splitter} and proven to accurately reproduce 
its input by \wref{lem:D3}.

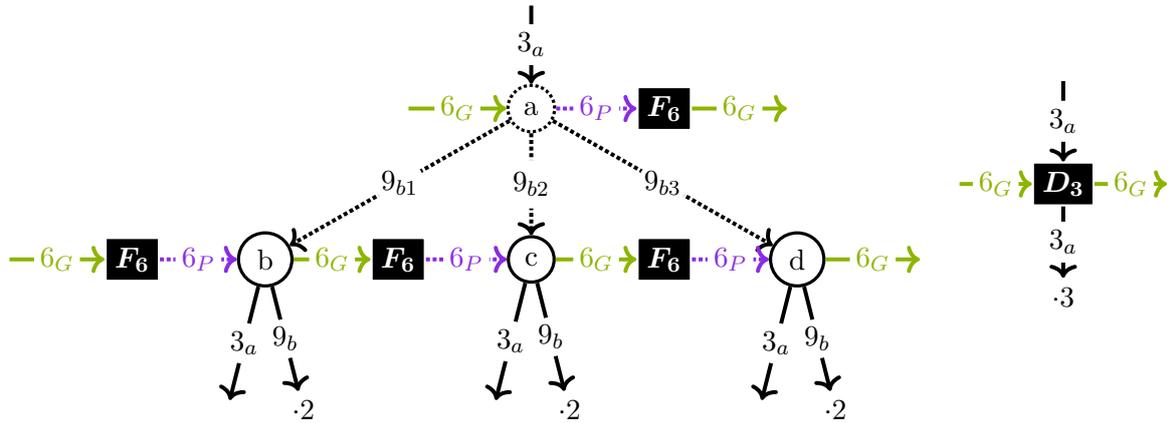
\begin{figure}[bth] 
    \centering
    \begin{tikzpicture}[scale=2]
        \def\LtwoY{-1}
        \def\LoutY{-2}
        \def\LtwoXsep{1.75}
        
        \node[invisible node] (top) at (0, 0.75) {};
        
        \node[invisible node] (top in) at (-0.5*\LtwoXsep, 0) {};
        \node[shorthand] (top short) at (0.5*\LtwoXsep, 0) {$F_6$};
        \node[invisible node] (top out) at (\LtwoXsep, 0) {};
        \node[graph node B] (D1) at (0, 0) {a};

        \path[->] (top) edge[std, ourBlack] node[edge descriptor] {$3_a$} (D1);
        \path[->] (top in) edge[std, ourGreen] node[edge descriptor] {\color{applegreen}$6_G$} (D1);
        \path[->] (D1) edge[std B, ourPurple] node[edge descriptor] {\color{blue-violet}$6_P$} (top short);
        \path[->] (top short) edge[std, ourGreen] node[edge descriptor] {\color{applegreen}$6_G$} (top out);

        \node[invisible node] (in) at (-2*\LtwoXsep, \LtwoY) {};
        \node[shorthand] (flipper b)  at (-1.5*\LtwoXsep, \LtwoY) {$F_6$};        
        \node[graph node] (D2) at (-1*\LtwoXsep, \LtwoY) {b};
        \node[shorthand] (flipper c)  at (-0.5*\LtwoXsep, \LtwoY) {$F_6$};
        \node[graph node] (D3) at (0, \LtwoY) {c};
        \node[shorthand] (flipper d)  at (0.5*\LtwoXsep, \LtwoY) {$F_6$};
        \node[graph node] (D4) at (\LtwoXsep, \LtwoY) {d};
        \node[invisible node] (out)  at (1.5*\LtwoXsep, \LtwoY) {};

        \path[->] (D1) edge[std B, ourBlack] node[edge descriptor] {$9_{b1}$} (D2);
        \path[->] (D1) edge[std B, ourBlack] node[edge descriptor] {$9_{b2}$} (D3);
        \path[->] (D1) edge[std B, ourBlack] node[edge descriptor] {$9_{b3}$} (D4);

        \path[->] (flipper b) edge[std B, ourPurple] node[edge descriptor] {\color{blue-violet}$6_P$} (D2);
        \path[->] (flipper c) edge[std B, ourPurple] node[edge descriptor] {\color{blue-violet}$6_P$} (D3);
        \path[->] (flipper d) edge[std B, ourPurple] node[edge descriptor] {\color{blue-violet}$6_P$} (D4);

        \path[->] (in) edge[std, ourGreen] node[edge descriptor] {\color{applegreen}$6_G$} (flipper b);
        \path[->] (D2) edge[std, ourGreen] node[edge descriptor] {\color{applegreen}$6_G$} (flipper c);
        \path[->] (D3) edge[std, ourGreen] node[edge descriptor] {\color{applegreen}$6_G$} (flipper d);
        \path[->] (D4) edge[std, ourGreen] node[edge descriptor] {\color{applegreen}$6_G$} (out);

        \node[invisible node] (output 1) at (-1*\LtwoXsep-0.25, \LoutY) {};
        \node[invisible node] (repeat 1) at (-1*\LtwoXsep+0.25, \LoutY) {$\cdot2$};
        
        \node[invisible node] (output 2) at (-0.25, \LoutY) {};
        \node[invisible node] (repeat 2) at (0.25, \LoutY) {$\cdot2$};
        
        \node[invisible node] (output 3) at (1*\LtwoXsep-0.25, \LoutY) {};
        \node[invisible node] (repeat 3) at (1*\LtwoXsep+0.25, \LoutY) {$\cdot2$};

        \path[->] (D2) edge[std, ourBlack] node[edge descriptor] {$3_a$} (output 1);
        \path[->] (D2) edge[std, ourBlack] node[edge descriptor] {$9_b$} (repeat 1);
         
        \path[->] (D3) edge[std, ourBlack] node[edge descriptor] {$3_a$} (output 2);
        \path[->] (D3) edge[std, ourBlack] node[edge descriptor] {$9_b$} (repeat 2);
        
        \path[->] (D4) edge[std, ourBlack] node[edge descriptor] {$3_a$} (output 3);
        \path[->] (D4) edge[std, ourBlack] node[edge descriptor] {$9_b$} (repeat 3);

        \def\yadj{2}
        \def\xadj{1}

        \node[shorthand] (D3 short)             at (\xadj+2.5, -2.5+\yadj) {$D_{3}$};
        \node[invisible node] (short input a)   at (\xadj+2.5, -1.75+\yadj) {};
        \node[invisible node] (short input G)   at (\xadj+1.75, -2.5+\yadj) {};
        \node[invisible node] (short output a)  at (\xadj+2.5, -3.25+\yadj) {$\cdot 3$};
        \node[invisible node] (short output G)  at (\xadj+3.25, -2.5+\yadj) {};    
        
        \path[->] (short input a) edge[std, ourBlack] node[edge descriptor] {$3_a$} (D3 short);
        \path[->] (short input G) edge[std, ourGreen] node[edge descriptor] {\color{applegreen}$6_G$} (D3 short);
        \path[->] (D3 short) edge[std, ourBlack] node[edge descriptor] {$3_a$} (short output a);
        \path[->] (D3 short) edge[std, ourGreen] node[edge descriptor] {\color{applegreen}$6_G$} (short output G);
        
    \end{tikzpicture}
\caption{
A gadget for duplicating input edges with frequency 3, with a shorthand version on the right. Note that the input can be duplicated indefinitely many times by adding further layers which each use the $9_{b}$ edges from the previous layer, but the output will alternate between male and female $3_a$ edges. Odd layers will have female main nodes with $F_6$ flipper nodes on the right, as shown in the top layer, while even layers will have male main nodes with $F_6$ flipper nodes on the left, as shown on the bottom layer. $\color{applegreen}6_P$ edges can be connected continuously from layer to layer, with the final edge released to another gadget.
}
\label{fig:3-splitter}
\end{figure} 

\begin{lemma}[3-Duplicator gadget schedules]
    \label{lem:D3}
    The non-flipper nodes in each 3-duplicator gadget $D_3$ have valid local schedules which must be of the form of either
    \begin{align*}
    &[
        \mathunderline{red}{3_a},
        \mathunderline{blue}{9_{b}},
        \color{applegreen}6_G\color{black},
        \mathunderline{red}{3_a},
        \mathunderline{blue}{9_{b'}},
        \color{blue-violet}6_P\color{black},
        \mathunderline{red}{3_a},
        \mathunderline{blue}{9_{b''}},
        \color{applegreen}6_G\color{black},
        \mathunderline{red}{3_a},
        \mathunderline{blue}{9_{b}},
        \color{blue-violet}6_P\color{black},
        \mathunderline{red}{3_a},
        \mathunderline{blue}{9_{b'}},
        \color{applegreen}6_G\color{black},
        \mathunderline{red}{3_a},
        \mathunderline{blue}{9_{b''}},
        \color{blue-violet}6_P\color{black}
    ]
    \ \text{or}
    \\
    &[
        \mathunderline{red}{9_{b}},
        \mathunderline{blue}{3_a},
        \color{applegreen}6_G\color{black},
        \mathunderline{red}{9_{b'}},
        \mathunderline{blue}{3_a},
        \color{blue-violet}6_P\color{black},
        \mathunderline{red}{9_{b''}},
        \mathunderline{blue}{3_a},
        \color{applegreen}6_G\color{black},
        \mathunderline{red}{9_{b}},
        \mathunderline{blue}{3_a},
        \color{blue-violet}6_P\color{black},
        \mathunderline{red}{9_{b'}},
        \mathunderline{blue}{3_a},
        \color{applegreen}6_G\color{black},
        \mathunderline{red}{9_{b''}},
        \mathunderline{blue}{3_a},
        \color{blue-violet}6_P\color{black}
    ]
    \end{align*}
    in any \slotgood global schedule.
\end{lemma}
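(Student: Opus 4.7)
The plan is a local-to-global argument in two steps: first, I would analyze each non-flipper node of $D_3$ in isolation via a density argument; then I would propagate the remaining degree of freedom along the shared $9$-edges to show the duplicator is color-preserving.

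For the local step, I would observe that every non-flipper node ($a$, $b$, $c$, $d$) has exactly six incident edges of frequencies $(3,6,6,9,9,9)$, giving local density
\[
\tfrac13 + \tfrac16 + \tfrac16 + \tfrac19 + \tfrac19 + \tfrac19 \wrel= 1,
\]
so in any feasible schedule every day of the $\mathrm{LCM}(3,6,9)=18$-day period is used by exactly one incident edge. In a \slotgood schedule, by \wref{def:slots} and \wref{lem:flipper6} applied to the adjacent flippers, the incident $\color{applegreen}6_G$ edge lies in green slots and the $\color{blue-violet}6_P$ edge in purple slots; together these account for all green and purple slots of the window. The $3_a$ edge must then occupy an arithmetic progression of common difference $3$ among the remaining slots, for which there are precisely two options: the red slots or the blue slots. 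Whichever is chosen, the six slots of the opposite color must be filled by the three $9$-edges, and since each has recurrence $9$ and the remaining slots are spaced $3$ apart, the only possible distribution is a round-robin $9_b, 9_{b'}, 9_{b''}, 9_b, 9_{b'}, 9_{b''}$ through those six positions, matching the displayed forms up to the arbitrary labeling of $\{9_b, 9_{b'}, 9_{b''}\}$.

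Next, I would resolve the red/blue dichotomy by propagating it across the four non-flipper nodes. The central node $a$ shares exactly one $9$-edge with each branch node $b$, $c$, $d$, and this single edge must be scheduled on identical days at both endpoints. From the local analysis at $a$, the $9$-edges sit in slots of the color \emph{opposite} to $a$'s $3_a$ input; hence the corresponding slots at each branch node are already committed to that same $9$-edge, which forces the branch node's $3_a$ output into slots of the \emph{same} color as $a$'s input. Existence of a valid schedule is witnessed directly by verifying that the 18-day pattern in the lemma statement satisfies all constraints at all four nodes simultaneously, so the claimed shape and uniqueness both follow.

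The main obstacle is the propagation step: the local density argument allows each of the four nodes an independent red/blue choice for its $3_a$ edge, and one must carefully exploit that the $9_{bi}$ edges are shared (not just parallel) to couple these four choices. Once this coupling is articulated, the only two global possibilities are exactly the two forms stated in the lemma, and the color of the gadget's external $3_a$ input pins down which applies.
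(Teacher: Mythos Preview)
Your proposal is correct and follows essentially the same density-then-propagation structure as the paper: use $D=1$ to force tight packing over an $18$-day window, pin the $6$-edges to green/purple slots, show $3_a$ must occupy all red or all blue slots, fill the remainder with a forced round-robin of the three $9$-edges, and then couple the four nodes via the shared $9_{bi}$ edges. The one notable difference is that you invoke the slot-respecting hypothesis directly to place the outgoing $6_P$ edge in purple slots, whereas the paper only assumes this for the incoming $6_G$ and then \emph{derives} that $6_P$ must be purple via a contradiction with the $9$-edge constraints; your shortcut is legitimate under the lemma's stated hypothesis and streamlines the argument, though the paper's longer route has the side benefit of showing that the colour labels on output edges are forced rather than merely assumed.
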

\begin{proof}
    Each non-flipper node in $D_3$ has tasks 
    [$3_a,$
    $\color{blue-violet}6_P$, $\color{applegreen}6_G$,
    $9_{b1}, 9_{b2}, 9_{b3}$]
    and has local
    density $D=1$, so each task with frequency $f$ must appear exactly once every $f$ days. 
    Further, in any \slotgood schedule, the $\color{applegreen}6_G$ edge must be scheduled in green slots, forcing partial schedules of the form
        $[{\color{red}\SLOT},
        {\color{blue}\SLOT},
        {\color{applegreen}6_G},
        {\color{red}\SLOT},
        {\color{blue}\SLOT},
        {\color{blue-violet}\SLOT}]$.

    Considering node $a$, note that if incident edge $3_a$ is scheduled on a combination of red and blue days then either it will appear more than once in some 3-day period or its constraint will be violated.
    This demonstrates that schedules must be of the form
    \begin{align*}
    &[\mathunderline{red}{3_a},
        \color{blue}\SLOT\color{black},
        \color{applegreen}6_G\color{black},
        \mathunderline{red}{3_a},
        \color{blue}\SLOT\color{black},
        \color{blue-violet}\SLOT\color{black},
        \mathunderline{red}{3_a},
        \color{blue}\SLOT\color{black},
        \color{applegreen}6_G\color{black},
        \mathunderline{red}{3_a},
        \color{blue}\SLOT\color{black},
        \color{blue-violet}\SLOT\color{black},
        \mathunderline{red}{3_a},
        \color{blue}\SLOT\color{black},
        \color{applegreen}6_G\color{black},
        \mathunderline{red}{3_a},
        \color{blue}\SLOT\color{black},
        \color{blue-violet}\SLOT\color{black}]
        \ \text{or}
        \\
    &[\color{red}\SLOT\color{black},
        \mathunderline{blue}{3_a},
        \color{applegreen}6_G\color{black},
        \color{red}\SLOT\color{black},
        \mathunderline{blue}{3_a},
        \color{blue-violet}\SLOT\color{black},
        \color{red}\SLOT\color{black},
        \mathunderline{blue}{3_a},
        \color{applegreen}6_G\color{black},
        \color{red}\SLOT\color{black},
        \mathunderline{blue}{3_a},
        \color{blue-violet}\SLOT\color{black},
        \color{red}\SLOT\color{black},
        \mathunderline{blue}{3_a},
        \color{applegreen}6_G\color{black},
        \color{red}\SLOT\color{black},
        \mathunderline{blue}{3_a},
        \color{blue-violet}\SLOT\color{black}],
    \end{align*}
    depending on the colour of the incident $3_a$ edge.

    Assume towards a contradiction that the $\color{blue-violet}6_P$ edge is not consistently scheduled in purple slots. This produces the schedules:
    \begin{align*}
    &[\mathunderline{red}{3_a},
        \mathunderline{blue}{6_P},
        \color{applegreen}6_G\color{black},
        \mathunderline{red}{3_a},
        \color{blue}\SLOT\color{black},
        \color{blue-violet}\SLOT\color{black},
        \mathunderline{red}{3_a},
        \mathunderline{blue}{6_P},
        \color{applegreen}6_G\color{black},
        \mathunderline{red}{3_a},
        \color{blue}\SLOT\color{black},
        \color{blue-violet}\SLOT\color{black},
        \mathunderline{red}{3_a},
       \mathunderline{blue}{6_P},
        \color{applegreen}6_G\color{black},
        \mathunderline{red}{3_a},
        \color{blue}\SLOT\color{black},
        \color{blue-violet}\SLOT\color{black}]
        \ \text{and}
        \\
    &[\mathunderline{red}{6_P},
        \mathunderline{blue}{3_a},
        \color{applegreen}6_G\color{black},
        \color{red}\SLOT\color{black},
        \mathunderline{blue}{3_a},
        \color{blue-violet}\SLOT\color{black},
        \mathunderline{red}{6_P},
        \mathunderline{blue}{3_a},
        \color{applegreen}6_G\color{black},
        \color{red}\SLOT\color{black},
        \mathunderline{blue}{3_a},
        \color{blue-violet}\SLOT\color{black},
        \mathunderline{red}{6_P},
        \mathunderline{blue}{3_a},
        \color{applegreen}6_G\color{black},
        \color{red}\SLOT\color{black},
        \mathunderline{blue}{3_a},
        \color{blue-violet}\SLOT\color{black}],
    \end{align*}
    and their cyclic permutations. Both schedules have two free slots in the first 9 day period, and four free slots in the second 9 day period - forcing a contradiction when the $9_{b1}$, $9_{b2}$ and $9_{b3}$ edges are added.

    Thus, two partial schedules remain:
    \begin{align*}
    &[\mathunderline{red}{3_a},
        \mathunderline{blue}{6_P},
        \color{applegreen}6_G\color{black},
        \mathunderline{red}{3_a},
        \color{blue}\SLOT\color{black},
        \color{blue-violet}6_P\color{black},
        \mathunderline{red}{3_a},
        \mathunderline{blue}{6_P},
        \color{applegreen}6_G\color{black},
        \mathunderline{red}{3_a},
        \color{blue}\SLOT\color{black},
        \color{blue-violet}6_P\color{black},
        \mathunderline{red}{3_a},
       \mathunderline{blue}{6_P},
        \color{applegreen}6_G\color{black},
        \mathunderline{red}{3_a},
        \color{blue}\SLOT\color{black},
        \color{blue-violet}6_P\color{black}]
        \ \text{and}
        \\
    &[\mathunderline{red}{6_P},
        \mathunderline{blue}{3_a},
        \color{applegreen}6_G\color{black},
        \color{red}\SLOT\color{black},
        \mathunderline{blue}{3_a},
        \color{blue-violet}6_P\color{black},
        \mathunderline{red}{6_P},
        \mathunderline{blue}{3_a},
        \color{applegreen}6_G\color{black},
        \color{red}\SLOT\color{black},
        \mathunderline{blue}{3_a},
        \color{blue-violet}6_P\color{black},
        \mathunderline{red}{6_P},
        \mathunderline{blue}{3_a},
        \color{applegreen}6_G\color{black},
        \color{red}\SLOT\color{black},
        \mathunderline{blue}{3_a},
        \color{blue-violet}6_P\color{black}],
    \end{align*}
    In either case, $9_{b1}$, $9_{b2}$ and $9_{b3}$ must occupy the remaining slots, which match the schedules shown in the Lemma for some mapping of $9_{b1}$, $9_{b2}$ and $9_{b3}$ to $9_{b}$, $9_{b'}$ and $9_{b''}$.

    Now consider an arbitrary non-flipper node $p\neq a$, with an incident $9_{b}$ node. If the $3_a$ edge incident to $a$ is red, the schedule for $n$ must be of the form

    \begin{align*}
    &[\color{red}\SLOT\color{black},
        \mathunderline{blue}{9_{b}},
        \color{applegreen}6_G\color{black},
        \color{red}\SLOT\color{black},
        \color{blue}\SLOT\color{black},
        \color{blue-violet}\SLOT\color{black},
        \color{red}\SLOT\color{black},
        \color{blue}\SLOT\color{black},
        \color{applegreen}6_G\color{black},
        \color{red}\SLOT\color{black},
        \mathunderline{blue}{9_{b}},
        \color{blue-violet}\SLOT\color{black},
        \color{red}\SLOT\color{black},
        \color{blue}\SLOT\color{black},
        \color{applegreen}6_G\color{black},
        \color{red}\SLOT\color{black},
        \color{blue}\SLOT\color{black},
        \color{blue-violet}\SLOT\color{black}],
    \\
    &[\color{red}\SLOT\color{black},
        \color{blue}\SLOT\color{black},
        \color{applegreen}6_G\color{black},
        \color{red}\SLOT\color{black},
        \mathunderline{blue}{9_{b}},
        \color{blue-violet}\SLOT\color{black},
        \color{red}\SLOT\color{black},
        \color{blue}\SLOT\color{black},
        \color{applegreen}6_G\color{black},
        \color{red}\SLOT\color{black},
        \color{blue}\SLOT\color{black},
        \color{blue-violet}\SLOT\color{black},
        \color{red}\SLOT\color{black},
        \mathunderline{blue}{9_{b}},
        \color{applegreen}6_G\color{black},
        \color{red}\SLOT\color{black},
        \color{blue}\SLOT\color{black},
        \color{blue-violet}\SLOT\color{black}],\ \text{or}
    \\
    &[\color{red}\SLOT\color{black},
        \color{blue}\SLOT\color{black},
        \color{applegreen}6_G\color{black},
        \color{red}\SLOT\color{black},
        \color{blue}\SLOT\color{black},
        \color{blue-violet}\SLOT\color{black},
        \color{red}\SLOT\color{black},
        \mathunderline{blue}{9_{b}},
        \color{applegreen}6_G\color{black},
        \color{red}\SLOT\color{black},
        \color{blue}\SLOT\color{black},
        \color{blue-violet}\SLOT\color{black},
        \color{red}\SLOT\color{black},
        \color{blue}\SLOT\color{black},
        \color{applegreen}6_G\color{black},
        \color{red}\SLOT\color{black},
        \mathunderline{blue}{9_{b}},
        \color{blue-violet}\SLOT\color{black}].
    \end{align*}
    In any case, if the $3_a$ edge is scheduled in either blue or purple slots, its constraint will be violated, so it must be scheduled in the remaining red slots, leading to the following partial schedules:
    \begin{align*}
    &[\mathunderline{red}{3_{a}},
        \mathunderline{blue}{9_{b}},
        \color{applegreen}6_G\color{black},
        \mathunderline{red}{3_{a}},
        \color{blue}\SLOT\color{black},
        \color{blue-violet}\SLOT\color{black},
        \mathunderline{red}{3_{a}},
        \color{blue}\SLOT\color{black},
        \color{applegreen}6_G\color{black},
        \mathunderline{red}{3_{a}},
        \mathunderline{blue}{9_{b}},
        \color{blue-violet}\SLOT\color{black},
        \mathunderline{red}{3_{a}},
        \color{blue}\SLOT\color{black},
        \color{applegreen}6_G\color{black},
        \mathunderline{red}{3_{a}},
        \color{blue}\SLOT\color{black},
        \color{blue-violet}\SLOT\color{black}],
    \\
    &[\mathunderline{red}{3_{a}},
        \color{blue}\SLOT\color{black},
        \color{applegreen}6_G\color{black},
        \mathunderline{red}{3_{a}},
        \mathunderline{blue}{9_{b}},
        \color{blue-violet}\SLOT\color{black},
        \mathunderline{red}{3_{a}},
        \color{blue}\SLOT\color{black},
        \color{applegreen}6_G\color{black},
        \mathunderline{red}{3_{a}},
        \color{blue}\SLOT\color{black},
        \color{blue-violet}\SLOT\color{black},
        \mathunderline{red}{3_{a}},
        \mathunderline{blue}{9_{b}},
        \color{applegreen}6_G\color{black},
        \mathunderline{red}{3_{a}},
        \color{blue}\SLOT\color{black},
        \color{blue-violet}\SLOT\color{black}],\ \text{or}
    \\
    &[\mathunderline{red}{3_{a}},
        \color{blue}\SLOT\color{black},
        \color{applegreen}6_G\color{black},
        \mathunderline{red}{3_{a}},
        \color{blue}\SLOT\color{black},
        \color{blue-violet}\SLOT\color{black},
        \mathunderline{red}{3_{a}},
        \mathunderline{blue}{9_{b}},
        \color{applegreen}6_G\color{black},
        \mathunderline{red}{3_{a}},
        \color{blue}\SLOT\color{black},
        \color{blue-violet}\SLOT\color{black},
        \mathunderline{red}{3_{a}},
        \color{blue}\SLOT\color{black},
        \color{applegreen}6_G\color{black},
        \mathunderline{red}{3_{a}},
        \mathunderline{blue}{9_{b}},
        \color{blue-violet}\SLOT\color{black}].
    \end{align*}
    If either of the $9_{b'}$ or $9_{b''}$ edges are scheduled in the remaining purple slots, they will conflict with the $\color{applegreen}6_G$ edge, so the $9_{b'}$ and $9_{b''}$ edges must be scheduled in the remaining blue slots, and the $\color{blue-violet}6_P$ edge must be purple.
    This leads to full schedules of the form 
    \[[
        \mathunderline{red}{3_a},
        \mathunderline{blue}{9_{b}},
        \color{applegreen}6_G\color{black},
        \mathunderline{red}{3_a},
        \mathunderline{blue}{9_{b'}},
        \color{blue-violet}6_P\color{black},
        \mathunderline{red}{3_a},
        \mathunderline{blue}{9_{b''}},
        \color{applegreen}6_G\color{black},
        \mathunderline{red}{3_a},
        \mathunderline{blue}{9_{b}},
        \color{blue-violet}6_P\color{black},
        \mathunderline{red}{3_a},
        \mathunderline{blue}{9_{b'}},
        \color{applegreen}6_G\color{black},
        \mathunderline{red}{3_a},
        \mathunderline{blue}{9_{b''}},
        \color{blue-violet}6_P\color{black}
    ],\] 
    which again matches the Lemma.
    If the $3_a$ edge incident to $a$ is blue instead, the same logic applies, with all $3_a$ edges also being blue and all $9_b$ nodes being red.
\end{proof}

\subsubsection{6-Duplicators}
\label{sec:D6}
\wref{fig:6-splitter} and \wref{lem:D6} introduce a gadget which duplicates incident $\color{applegreen}6_G$ or $\color{blue-violet}6_P$ edges.

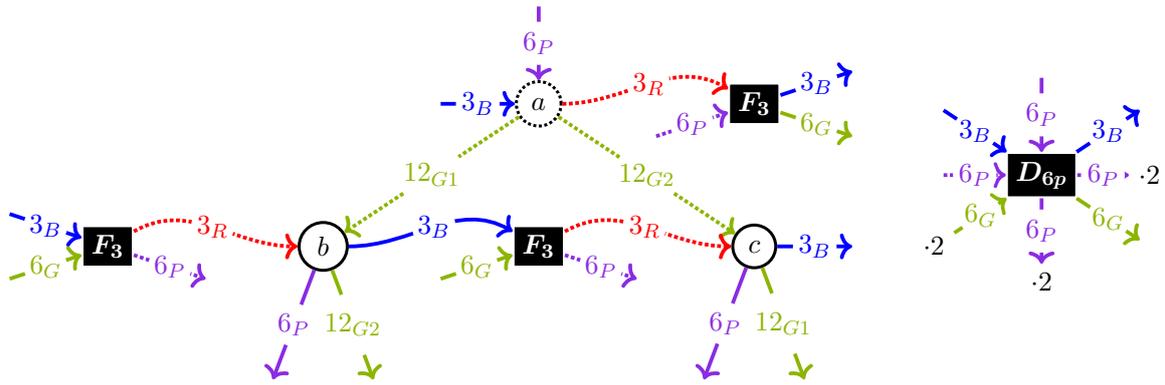
\begin{figure}[hbt]
    \centering
    \resizebox{\textwidth}{!}{    
    \begin{tikzpicture}[scale=2]

        \node[invisible node] (top) at (0, 0.75) {};
        
        \node[invisible node] (left1)           at (-0.75,      0) {};
        \node[graph node B] (D1)                at (0,          0) {$a$};
        \node[invisible node] (F3a in)          at (0.75,       -0.25) {};
        \node[shorthand] (F3a)                  at (1.5,        0) {$F_3$};
        \node[invisible node] (F3a out top)     at (2.25,       0.25) {};
        \node[invisible node] (F3a out bottom)  at (2.25,       -0.25) {};

        \node[invisible node] (F3b in top)      at (-3.75,      -0.75) {};
        \node[invisible node] (F3b in bottom)   at (-3.75,      -1.25) {};
        \node[shorthand] (F3b)                  at (-3,         -1) {$F_3$};
        \node[invisible node] (F3b out)         at (-2.25,      -1.25) {};

        \node[graph node] (D2)                  at (-1.5,       -1) {$b$};
        \node[shorthand] (F3c)                  at (0,          -1) {$F_3$};
        \node[graph node] (D3)                  at (1.5,        -1) {$c$};
        \node[invisible node] (right2)          at (2.25,       -1) {};

        \node[invisible node] (Fc in)    at (-0.75,   -1.25) {};
        \node[invisible node] (Fc out)   at (0.75,   -1.25) {};
        
        \node[invisible node] (low left 1) at (-1.75-0.125,  -2) {};
        \node[invisible node] (low left 2) at (-1.25+0.125,  -2) {};
        \node[invisible node] (low mid 1) at (1.25-0.125,    -2) {};
        \node[invisible node] (low mid 2) at (1.75+0.125,    -2) {};
        
        \path[->] (top) edge[std, ourPurple] node[edge descriptor] {\color{blue-violet}$6_P$} (D1);
        \path[->] (left1) edge[std, ourBlue] node[edge descriptor] {\color{blue}$3_B$} (D1);
        \path[->] (D1) edge[out = 0, in = -210, std B, ourRed] node[edge descriptor] {\color{red}$3_R$} (F3a);
        \path[->] (F3a in) edge[std B, ourPurple] node[edge descriptor] {\color{blue-violet}$6_P$} (F3a);
        \path[->] (F3a) edge[std, ourBlue] node[edge descriptor] {\color{blue}$3_B$} (F3a out top);
        \path[->] (F3a) edge[std, ourGreen] node[edge descriptor] {\color{applegreen}$6_G$} (F3a out bottom);        
        \path[->] (D1) edge[std B, ourGreen] node[edge descriptor] {\color{applegreen}$12_{G1}$} (D2);
        \path[->] (D1) edge[std B, ourGreen] node[edge descriptor] {\color{applegreen}$12_{G2}$} (D3);

        \path[->] (F3b in top) edge[std, ourBlue] node[edge descriptor] {\color{blue}$3_B$} (F3b);
        \path[->] (F3b in bottom) edge[std, ourGreen] node[edge descriptor] {\color{applegreen}$6_G$} (F3b);
        \path[->] (F3b) edge[out = 30, in = 180, std B, ourRed] node[edge descriptor] {\color{red}$3_R$} (D2);
        \path[->] (F3b) edge[std B, ourPurple] node[edge descriptor] {\color{blue-violet}$6_P$} (F3b out);
        \path[->] (D2) edge[out = 0, in = -210, std, ourBlue] node[edge descriptor] {\color{blue}$3_B$} (F3c);
        \path[->] (F3c) edge[out = 30, in = 180, std B, ourRed] node[edge descriptor] {\color{red}$3_R$} (D3);
        \path[->] (D3) edge[std, ourBlue] node[edge descriptor] {\color{blue}$3_B$} (right2);

        \path[->] (Fc in) edge[std, ourGreen] node[edge descriptor] {\color{applegreen}$6_G$} (F3c);
        \path[->] (F3c) edge[std B, ourPurple] node[edge descriptor] {\color{blue-violet}$6_P$} (Fc out);
        
        \path[->] (D2) edge[std, ourPurple] node[edge descriptor] {\color{blue-violet}$6_P$} (low left 1);
        \path[->] (D2) edge[std, ourGreen] node[edge descriptor] {\color{applegreen}$12_{G2}$} (low left 2);
        \path[->] (D3) edge[std, ourPurple] node[edge descriptor] {\color{blue-violet}$6_P$} (low mid 1);
        \path[->] (D3) edge[std, ourGreen] node[edge descriptor] {\color{applegreen}$12_{G1}$} (low mid 2);

        \node[invisible node](short top)        at (3.5, 0.75-0.5){};
        \node[shorthand] (short D)              at (3.5, 0-0.5) {$D_{6p}$};
        \node[invisible node](input top)        at (2.75, 0.5-0.5){};
        \node[invisible node](input mid)        at (2.75, -0.5){};
        \node[invisible node](input bottom)     at (2.75, -0.5-0.5){$\cdot2$};
        \node[invisible node](cons out top)     at (4.25, 0.5-0.5){};
        \node[invisible node](cons out mid)     at (4.25, 0-0.5){$\cdot 2$};
        \node[invisible node](cons out bottom)  at (4.25, -0.5-0.5){};
        \node[invisible node](output)           at (3.5, -0.75-0.5){$\cdot 2$};

        \path[->] (short top) edge[std, ourPurple] node[edge descriptor] {\color{blue-violet}$6_P$} (short D);
        \path[->] (input top) edge[std, ourBlue] node[edge descriptor] {\color{blue}$3_B$} (short D);
        \path[->] (input mid) edge[std B, ourPurple] node[edge descriptor] {\color{blue-violet}$6_P$} (short D);
        \path[->] (input bottom) edge[std, ourGreen] node[edge descriptor] {\color{applegreen}$6_G$} (short D);
        \path[->] (short D) edge[std, ourBlue] node[edge descriptor] {\color{blue}$3_B$} (cons out top);
        \path[->] (short D) edge[std B, ourPurple] node[edge descriptor] {\color{blue-violet}$6_P$} (cons out mid);
        \path[->] (short D) edge[std, ourGreen] node[edge descriptor] {\color{applegreen}$6_G$} (cons out bottom);
        \path[->] (short D) edge[std, ourPurple] node[edge descriptor] {\color{blue-violet}$6_P$} (output);
    
    \end{tikzpicture}
    }
\caption{
A gadget for duplicating $\color{blue-violet}6_P$ input edges, with a shorthand version on the right. The gadget can be extended by adding further layers below to create as many $\color{blue-violet}6_P$ edges as are needed. Constant edges produced by nodes in high layers should be consumed by gadgets in lower layers, but these connections are not shown.
Odd layers will produce female $\color{blue-violet}6_P$ edges and should have $F_3$ flippers to the right of their nodes as shown in upper layer; even layers will produce male $\color{blue-violet}6_P$ edges, and should have $F_3$ flippers to the left of their nodes as shown in the lower layer. 
The figure shows one female node ($a$) and two female nodes ($b, c$) -- this creates an imbalance between the male $\color{applegreen}6_G$ and female $\color{blue-violet}6_P$ nodes consumed and produced and should be avoided by adding nodes until there are as many male nodes as female nodes.
$\color{applegreen}6_G$ edges can be duplicated with a very similar $D_{6g}$ gadget simply by replacing the topmost $\color{blue-violet}6_P$ edge with a $\color{applegreen}6_G$ edge.
}
\label{fig:6-splitter}
\end{figure} 

\begin{lemma}[6-Duplicator gadget schedules]
    \label{lem:D6}
    In any 6-duplicator gadget $D_{6P}$ with an input $\color{blue-violet}6_P$ edge, The non-flipper nodes
    have valid local schedules, which must be of the form 
    \[[\color{red}3_R\color{black}, 
        \color{blue}3_B\color{black}, 
        \color{applegreen}12_{G1}\color{black}, 
        \color{red}3_R\color{black}, 
        \color{blue}3_B\color{black}, 
        \color{blue-violet}6_P\color{black},  
        \color{red}3_R\color{black}, 
        \color{blue}3_B\color{black}, 
        \color{applegreen}12_{G2}\color{black}, 
        \color{red}3_R\color{black},
        \color{blue}3_B\color{black}, \color{blue-violet}6_P\color{black}]
    \]
    in any \slotgood global schedule.
\end{lemma}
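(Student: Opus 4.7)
The plan is to mirror the strategy used for the variable and $D_3$ gadgets: first show that each non-flipper node of $D_{6P}$ has local density exactly $1$, so that in any feasible schedule \emph{every} slot of a $12$-day window is occupied; then use the slot-respecting hypothesis to force each coloured edge into its unique available colour class, leaving the two $12_G$ edges to fill the remaining green slots.

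First I would enumerate the tasks at each of $a$, $b$, $c$. From \wref{fig:6-splitter}, node $a$ is incident to edges of frequencies $6,3,3,12,12$ (the $\color{blue-violet}6_P$ from above, the $\color{blue}3_B$ from the left, the $\color{red}3_R$ towards $F_{3a}$, and the two $\color{applegreen}12_{G1},12_{G2}$ towards $b$ and $c$), giving local density $\tfrac16+\tfrac13+\tfrac13+\tfrac1{12}+\tfrac1{12}=1$. An analogous inventory for $b$ and $c$ (each receives a $\color{red}3_R$ from its $F_3$ flipper, sends/holds a $\color{blue}3_B$, carries one $\color{applegreen}12_G$ upwards and outputs one $\color{applegreen}12_G$ downwards together with a $\color{blue-violet}6_P$ output) yields density $1$ as well. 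Hence every day must schedule exactly one incident edge at each of $a,b,c$, and in any $12$-day window the multiset of scheduled edges is forced to contain $\color{red}3_R$ four times, $\color{blue}3_B$ four times, $\color{blue-violet}6_P$ twice, and each of $12_{G1},12_{G2}$ once.

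Next I would apply the slot hypothesis. By \wref{def:slots}, within any $12$-day window the red slots are days $\{0,3,6,9\}$, the blue slots are $\{1,4,7,10\}$, the purple slots are $\{5,11\}$ and the green slots are $\{2,8\}$. Since the schedule is \slotgood, the four $\color{red}3_R$-occurrences must sit on the four red slots, the four $\color{blue}3_B$-occurrences on the four blue slots, and the two $\color{blue-violet}6_P$-occurrences on the two purple slots. This exactly fills $10$ of the $12$ slots, so the two remaining green slots (days $2$ and $8$) must accommodate the two $12_G$ edges, one in each. Relabelling so that $12_{G1}$ denotes the edge placed on day $2$ and $12_{G2}$ the one placed on day $8$ (which is consistent with the naming in \wref{fig:6-splitter} after tracing the shared incidences between $a,b,c$), the forced local schedule coincides with the one stated in the lemma.

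Finally, it remains to check that this assignment is actually \emph{feasible}, i.e.\ that the forced local schedules at $a$, $b$ and $c$ are mutually compatible (no two adjacent nodes schedule their common edge on different days) and that the internal $F_3$ flipper nodes can be scheduled consistently via \wref{lem:flipper3}. Since the forced schedule at each of $a,b,c$ is identical up to the labelling of $12_{G1}/12_{G2}$, a shared edge between two of these nodes is automatically placed on the same green day at both endpoints; the $\color{blue}3_B$ edges connecting $a$\,--\,$b$ (through $F_{3b}$) and $b$\,--\,$c$ (through $F_{3c}$) inherit their placement from \wref{lem:flipper3}, which is exactly what the stated schedule prescribes. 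I expect this last compatibility step to be the main obstacle, since it requires careful tracing of the bipartite orientation in \wref{fig:6-splitter} to make sure the flipper-converted edges land in the correct red/blue slots; the density-plus-slot argument itself is straightforward once the task multiset at each node is correctly identified.
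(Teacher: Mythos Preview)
Your density-$1$ inventory is correct, but the core step---``since the schedule is slot-respecting, the $3_R$-occurrences sit on red slots, the $3_B$-occurrences on blue slots, and the $6_P$-occurrences on purple slots''---assumes more than the hypothesis gives you. In the paper's framework, the slot-respecting hypothesis only pins down the \emph{input} (constant) edges of a gadget: at node~$a$ those are the incoming $3_B$ and $6_P$, and at nodes $b,c$ they are the $12_G$ edge arriving from~$a$ together with the $3_R$ supplied by the adjacent $F_3$ flipper (whose own inputs are external constants, so \wref{lem:flipper3} applies). The remaining incident edges---the $3_R$ leaving $a$, and the $3_B$, $6_P$ leaving $b,c$---are \emph{outputs} of the gadget; the whole point of the lemma is to deduce that these land in the slots matching their labels, so invoking slot-respecting for them is circular and trivialises the statement.

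The paper therefore does the work you skipped. At $a$, starting only from $3_B$ in blue slots and $6_P$ in purple slots, it uses density~$1$ plus the frequency-$3$ constraint to argue that among the remaining positions $\{0,2,3\}\bmod 6$ the edge called $3_R$ can meet its recurrence bound only by occupying both red positions $0,3$; the two green positions then necessarily take $12_{G1},12_{G2}$. That derived green placement is carried to $b,c$, where---using only the red input $3_R$ (from the flipper) and the green input $12_G$ (from~$a$)---a second density/frequency argument forces the outgoing $3_B$ into blue, the outgoing $6_P$ into purple, and the remaining $12_G$ into the last green slot. Your proposal needs exactly this two-step input-to-output derivation in place of the blanket appeal to slot-respecting.
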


\begin{proof}
    Each non-flipper node in $D_{6P}$ has tasks 
    [$\color{red}3_R$, $\color{blue}3_B$, $\color{blue-violet}6_P$, 
    $\color{applegreen}12_{G1}$, $\color{applegreen}12_{G2}$]. It also has
    density $D=1$, so each task with frequency $f$ must appear exactly once every $f$ days. 
    
    Consider node $a$, which has inputs $\color{blue}3_B$ and $\color{blue-violet}6_P$. In any \slotgood schedule these are scheduled in slots of their respective colours, which
    forces partial schedules for $a$ to be of the form 
    $[{\color{red}\SLOT}, 
    {\color{blue}3_B},
    {\color{applegreen}\SLOT},
    {\color{red}\SLOT}, 
    {\color{blue}3_B}, 
    {\color{blue-violet}6_P}]$.
        
    Exactly two of these slots must be filled by $\color{red}3_R$, and if these are not both red slots, the constraint of $\color{red}3_R$ will be violated. Thus, the schedule for $a$ must be of the form
    \[
        [{\color{red}3_R},
         {\color{blue}3_B,} 
         {\color{applegreen}\SLOT},
         {\color{red}3_R}, 
         {\color{blue}3_B},
         {\color{blue-violet}6_P},
         {\color{red}3_R}, 
         {\color{blue}3_B},
         {\color{applegreen}\SLOT},
         {\color{red}3_R},
         {\color{blue}3_B}, 
         {\color{blue-violet}6_P}].
    \]
    Two spaces remain, which must then contain $\color{applegreen}12_{G1}$ and $\color{applegreen}12_{G2}$, as shown in the Lemma.

    Now consider an arbitrary node which is neither $a$ nor a flipper; all such nodes will have inputs
    $\color{applegreen}12_{Ga}$ and $\color{red}3_R$, forcing their partial schedules to be of the form
    \[
       [
        {\color{red}3_R},
        {\color{blue}\SLOT},
        {\color{applegreen}12_{Ga}},
        {\color{red}3_R},
        {\color{blue}\SLOT},
        {\color{blue-violet}\SLOT},
        {\color{red}3_R},
        {\color{blue}\SLOT},
        {\color{applegreen}\SLOT},
        {\color{red}3_R},
        {\color{blue}\SLOT},
        {\color{blue-violet}\SLOT}]
    .\]
    As with $a$, exactly four of these slots must schedule $\color{blue}3_B$ edges, and these must be the blue spaces for the $\color{blue}3_B$ constraint not to be violated, forcing schedules of the form
    \[
        [
        {\color{red}3_R},
        {\color{blue}3_B},
        {\color{applegreen}12_{Ga}},
        {\color{red}3_R},
        {\color{blue}3_B},
        {\color{blue-violet}\SLOT},
        {\color{red}3_R},
        {\color{blue}3_B},
        {\color{applegreen}\SLOT},
        {\color{red}3_R},
        {\color{blue}3_B},
        {\color{blue-violet}\SLOT]}.
    \]
    One of these slots must contain the $\color{applegreen}12_{Gb}$ edge, with the other two scheduling the $\color{blue-violet}6_P$ edge. If the $\color{applegreen}12_{Gb}$ edge is not scheduled in the remaining green slot, the constraint on the $\color{blue-violet}6_P$ edge will be violated, so the schedule must be as shown in the Lemma.
\end{proof}
\begin{corollary}
    Note that by replacing the $\color{blue-violet}6_P$ input edge with a $\color{applegreen}6_G$ input edge, a $D_{6G}$ replicator will be made instead that produces $\color{applegreen}6_G$ edges according to the same logic. 
\end{corollary}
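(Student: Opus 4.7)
The plan is to adapt the proof of the preceding lemma for $D_{6P}$ by exploiting the structural symmetry between green and purple slots established in \wref{def:slots}: both colours mark a single residue class modulo 6 and are separated by 3 days from each other, so swapping their roles corresponds to a 3-day cyclic shift of the local schedules. Accordingly, in a $D_{6G}$ gadget the task sets of the non-flipper nodes become $[\color{red}3_R, \color{blue}3_B, \color{applegreen}6_G, \color{blue-violet}12_{P1}, \color{blue-violet}12_{P2}]$, still with local density $D=1$, and the statement to prove is that any slot-respecting schedule must take the form $[\color{red}3_R, \color{blue}3_B, \color{blue-violet}12_{P1}, \color{red}3_R, \color{blue}3_B, \color{applegreen}6_G, \color{red}3_R, \color{blue}3_B, \color{blue-violet}12_{P2}, \color{red}3_R, \color{blue}3_B, \color{applegreen}6_G]$ (up to relabelling of the two internal 12-edges).

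First, I would re-run the analysis for node $a$ with its new inputs $\color{applegreen}6_G$ and $\color{blue}3_B$. Slot-respecting scheduling forces the partial schedule $[\color{red}\SLOT, \color{blue}3_B, \color{applegreen}6_G, \color{red}\SLOT, \color{blue}3_B, \color{blue-violet}\SLOT]$, and the two required $\color{red}3_R$ edges must sit in the red slots to satisfy the 3-day recurrence, leaving exactly two purple slots which must accommodate $\color{blue-violet}12_{P1}$ and $\color{blue-violet}12_{P2}$.

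Next, I would treat any other non-flipper node, whose inputs are now $\color{blue-violet}12_{Pa}$ and $\color{red}3_R$ (by symmetric relabelling of the gadget's internal wiring). The identical counting argument used before forces all blue slots to carry $\color{blue}3_B$ edges; the remaining $\color{blue-violet}12_{Pb}$ edge must then land in the one free purple slot, for otherwise the 6-day constraint on the $\color{applegreen}6_G$ output would be violated, and the two $\color{applegreen}6_G$ outputs then occupy the two green slots. This yields precisely the schedule claimed above.

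The main obstacle, such as it is, is purely notational bookkeeping: consistently interchanging the colours green $\leftrightarrow$ purple and the internal edge labels $\color{applegreen}12_{G\cdot} \leftrightarrow \color{blue-violet}12_{P\cdot}$ throughout the original proof, and verifying that the duplicator's external wiring (in particular the $F_3$ flippers that convert between male/female 3-edges) remains consistent under the swap. No new combinatorial ingredient is required, since \wref{def:slots} treats the green and purple residues symmetrically within each 6-day period.
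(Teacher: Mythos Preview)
Your approach is correct and is exactly the green/purple symmetry that the paper leaves implicit; the paper offers no separate proof for this corollary, simply asserting that the $D_{6G}$ variant works ``according to the same logic''. One cosmetic slip: your verbal analysis correctly places the input $6_G$ edge in the green slots and the internal $12$-edges in the purple slots, but the target schedule you display has them the other way round; indexed from day~$0$ it should read
\[
[3_R,\,3_B,\,6_G,\,3_R,\,3_B,\,12_{P1},\,3_R,\,3_B,\,6_G,\,3_R,\,3_B,\,12_{P2}].
\]
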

\subsection{Clauses}
\label{sec:OR}

Clauses in $C$ are disjunctions of at most 3 literals, \ie{} the logical OR of at most 3 variables $x_i$, or their negations $\overline{x_i}$. 
A gadget which determines the truth value of a clause given the values of its variables is shown in \wref{fig:OR}. 
\wref{lem:OR} shows that the $12_O$ output edges of this gadget can be blue iff the corresponding clause is evaluated to be True; \wref{def:gadgets-assemble} and \wref{lem:tension} ensure that all $12_O$ edges must be blue in any valid global schedule. Thus, for $\mathcal P_{d\varphi}$ to have a valid schedule, $\varphi$ must have a satisfying assignment.

\begin{figure}[hbt] 
    \centering
    \begin{tikzpicture}[scale=2]
        \node[invisible node] (x1) at (-1, 0) {};
        \node[invisible node] (x2) at (0, 0) {};
        \node[invisible node] (x3) at (1, 0) {};

        \node[graph node B] (I1) at (-1, -1) {$I_1$};
        \node[graph node B] (I2) at (0, -1) {$I_2$};
        \node[graph node B] (I3) at (1, -1) {$I_3$};

        \path[->] (x1) edge[std, ourBlack] node[edge descriptor] {$3_{1R}$} (I1);
        \path[->] (x2) edge[std, ourBlack] node[edge descriptor] {$3_{2R}$} (I2);
        \path[->] (x3) edge[std, ourBlack] node[edge descriptor] {$3_{3R}$} (I3);

        \node[graph node] (OR) at (0, -2) {OR};

        \path[->] (I1) edge[std B, ourBlack] node[edge descriptor] {$12_{1}$} (OR);
        \path[->] (I2) edge[std B, ourBlack] node[edge descriptor] {$12_{2}$} (OR);
        \path[->] (I3) edge[std B, ourBlack] node[edge descriptor] {$12_{3}$} (OR);

        \node[invisible node] (red) at (-1, -2){};

        \path[->] (red) edge[std B, ourRed] node[edge descriptor] {\color{red}$3_R$} (OR);

        \node[graph node B] (fill1) at (1, -1.75){$f_1$};
        \node[graph node B] (fill2) at (1, -2.25){$f_2$};

        \path[->] (OR) edge[std, ourBlack] node[edge descriptor] {$6_{1}$} (fill1);
        \path[->] (OR) edge[std, ourBlack] node[edge descriptor] {$6_{2}$} (fill2);

        \node[invisible node] (out) at (0, -3){};

        \path[->] (OR) edge[std, ourBlack] node[edge descriptor] {$12_O$} (out);

        \node[invisible node] (short x1)  at (2, -0.5) {};
        \node[invisible node] (short x2)  at (3, -0.5) {};
        \node[invisible node] (short x3)  at (4, -0.5) {};
        \node[shorthand]      (short OR)  at (3, -1.5) {OR};
        \node[invisible node] (short r)   at (2, -1.5){};
        \node[invisible node] (short out) at (3, -2.25){};
        
        \path[->] (short x1) edge[std, ourBlack] node[edge descriptor] {$3_{1R}$} (short OR);
        \path[->] (short x2) edge[std, ourBlack] node[edge descriptor] {$3_{2R}$} (short OR);
        \path[->] (short x3) edge[std, ourBlack] node[edge descriptor] {$3_{3R}$} (short OR);
        \path[->] (short r)  edge[std B, ourRed] node[edge descriptor] {\color{red}$3_R$} (short OR);
        \path[->] (short OR) edge[std, ourBlack] node[edge descriptor] {$12_O$} (short out);
    \end{tikzpicture}
    \caption{
    A gadget which computes $x_1 \lor x_2 \lor x_3$ (left), along with a shorthand version (right). To compute $x_1 \lor x_2 \lor \overline{x_3}$, replace the incoming $3_{3R}$ edge with a $3_{3B}$ edge. To instead compute $x_1 \lor x_2$, replace the incoming $3_{3R}$ edge with a male $\color{blue}3_B$ edge. \wref{lem:OR} shows that the $12_O$ edge can be blue iff at least one input is assigned True, while \wref{def:gadgets-assemble} and \wref{lem:tension} show that $12_O$ edges must be blue in any valid global schedule.
    }
    \label{fig:OR}
\end{figure} 

\begin{lemma}[OR gadget schedules]
    \label{lem:OR}
	Consider an \OR gadget whose input edges $3_{1R}$, $3_{2R}$, and $3_{3R}$ are either red or blue. 
    There exists a \slotgood schedule in which the $12_O$ output edge of this \OR gadget is blue iff at least one of its input edges is red.
\end{lemma}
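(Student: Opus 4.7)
The plan is to analyse the $12$-day local schedule at the $\mathrm{OR}$ node. Its incident edges are $\color{red}3_R\color{black}$, $12_1, 12_2, 12_3, 6_1, 6_2$, and $12_O$, giving total local density $\frac{1}{3} + 3\cdot\frac{1}{12} + 2\cdot\frac{1}{6} + \frac{1}{12} = 1$. Hence, in any valid schedule, every one of the $12$ slots per period must be occupied exactly once. The filler vertices $f_1,f_2$ each have only one incident edge, so they impose no extra constraint on where $6_1, 6_2$ are placed. In any \slotgood schedule, $\color{red}3_R\color{black}$ fills all four red slots $\{0,3,6,9\}$, leaving four blue slots $\{1,4,7,10\}$, two green $\{2,8\}$, and two purple $\{5,11\}$ for the remaining eight edge occurrences.

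Two structural observations drive the proof. First, the only pairs of non-red slots that are exactly $6$ apart are $(1,7)$, $(2,8)$, $(4,10)$, $(5,11)$; crucially, each of these pairs is monochromatic. Second, at the endpoint $I_i$, the input edge $3_{iR}$ already occupies four slots: if it is blue it occupies all blue slots, so $12_i$ must be scheduled in a green or purple slot at the $\mathrm{OR}$ end; if it is red, $12_i$ may take any non-red slot.

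For the ``only if'' direction, assume all three inputs are blue. Then $12_1, 12_2, 12_3$ must occupy three of the four slots in $\{2,5,8,11\}$. A brief case check over the four possible ``missing'' slot shows that any such choice meets both $\{2,8\}$ and $\{5,11\}$, destroying both the green-green and the purple-purple $6$-pair options. Thus $6_1$ and $6_2$ must use the two blue pairs $(1,7)$ and $(4,10)$, filling every blue slot. The only slot left for $12_O$ is the unused element of $\{2,5,8,11\}$, which is green or purple; hence $12_O$ cannot be blue.

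For the ``if'' direction, assume without loss of generality that $3_{1R}$ is red. I will exhibit the following \slotgood schedule as a witness: place $12_2$ at $2$ and $12_3$ at $8$ (both green, breaking pair $(2,8)$); place $6_1$ at the purple pair $(5,11)$ and $6_2$ at the blue pair $(1,7)$; finally place $12_1$ at $4$ and $12_O$ at $10$. All frequency, colour, and matching constraints at $\mathrm{OR}$ are satisfied, and each $12_i$ avoids the slots used by $3_{iR}$ at $I_i$ (the assignment $12_1\in\{\text{blue}\}$ is admissible only because $3_{1R}$ is red, which is exactly the case assumption). In particular $12_O$ sits in a blue slot. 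The main obstacle is the counting argument in the ``only if'' direction: one must combine the density-$1$ rigidity (no slot is free) with the monochromatic structure of the $6$-apart pairs to conclude that all four blue slots are \emph{unavoidably} consumed by $6_1$ and $6_2$, leaving no blue slot available for $12_O$.
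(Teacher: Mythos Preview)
Your proof is correct and follows essentially the same approach as the paper's: both use the density-$1$ rigidity at the \OR node together with the constraint from the inverter nodes that a blue input forces the corresponding $12_i$ into a green or purple slot, and both exhibit an explicit witness schedule for the ``if'' direction. Your slot-indexed formulation via monochromatic $6$-apart pairs is a slightly crisper packaging of the paper's case analysis, and your witness schedule differs from the paper's (you put $12_2,12_3$ in green and $6_1$ in purple, whereas the paper puts $12_i,12_{i'}$ in purple and $6_a$ in green), but both are valid.
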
 

While this Lemma only considers the clause $x_1 \vee x_2 \vee x_3$, all other clauses can be handled similarly: For a literal $x_j$, we use $3_{jR}$ as input and for negated literals $\overline {x_j}$, we instead use $3_{jB}$. Clauses with fewer literals also share the same gadget and proof: here, we replace each unused $3_{jR}$ edge with a male $\color{blue}3_B$ edge, effectively replacing clauses such as $x_1 \vee x_2$ with analogous clauses in the form $x_1 \vee x_2 \vee \bot$.

\begin{proof}
    The node labelled \OR has tasks 
    [$\color{red}3_R$,
    $6_1, 6_2, 12_1, 12_2, 12_3, 12_O$] 
    and local density $D=1$, so each task with frequency $f$ must appear exactly once in every $f$ day period. 
    Any \slotgood schedule must assign the $\color{red}3_R$ edge to red slots, so schedules for the \OR node must be of the form
    [\color{red}$3_R$\color{black},
    \color{blue}$\SLOT$\color{black},
    \color{applegreen}$\SLOT$\color{black},
    \color{red}$3_R$\color{black},
    \color{blue}$\SLOT$\color{black},
    \color{blue-violet}$\SLOT$\color{black}].

    Consider an inverter node $I_{i \in \{1,2,3\}}$, which has tasks
    [$3_{iR}$, $12_i$] 
    where $3_{iR}$ is either red or blue by assumption.
    If the input edge of this node, $3_{iR}$, is red, 
    then partial schedules for $I_i$ must be of the form
    [\coloredunderline{red}{$3_{iR}$},
    \color{blue}$\SLOT$\color{black},
    \color{applegreen}$\SLOT$\color{black},
    \coloredunderline{red}{$3_{iR}$},
    \color{blue}$\SLOT$\color{black},
    \color{blue-violet}$\SLOT$\color{black}] 
    and the $12_i$ edge must be either green, purple, or blue.
    Similarly, if $3_{iR}$ is scheduled in blue slots,
    then partial schedules for $I_i$ must instead be of the form
    [\color{red}$\SLOT$\color{black},
    \coloredunderline{blue}{$3_{iR}$},
    \color{applegreen}$\SLOT$\color{black},
    \color{red}$\SLOT$\color{black},
    \coloredunderline{blue}{$3_{iR}$},
    \color{blue-violet}$\SLOT$\color{black}] 
    and the $12_i$ edge must be either green, purple, or red;
    however, this $12_i$ edge is also connected to the \OR node, which has no empty red slots, further restricting it to be either green or purple.

    Suppose that one input edge, $3_{tR}$, is scheduled in red slots, while the others, $3_{iR}$ and $3_{i'R}$, may be scheduled in either red or blue slots. As $3_{tR}$ is red, the output of the associated inverter, $12_{t}$, may be blue, green, or purple.
    Consider the schedule
    \[[\color{red}3_R\color{black},
    \mathunderline{blue}{12_t},
    \mathunderline{applegreen}{6_a},
    \color{red}3_R\color{black},
    \mathunderline{blue}{6_b},
    \mathunderline{blue-violet}{12_i},
    \color{red}3_R\color{black},
    \mathunderline{blue}{12_O},
    \mathunderline{applegreen}{6_a},
    \color{red}3_R\color{black},    
    \mathunderline{blue}{6_b},
    \mathunderline{blue-violet}{12_{i'}}],\]
    observing that $12_O$ is scheduled in blue slots and that no constraints are violated -- a valid schedule.

    Now suppose that all three inputs, $3_{1R}$, $3_{2R}$, and $3_{3R}$, are scheduled in blue slots. 
    By the reasoning above, $12_1$, $12_2$, and $12_3$ must now be scheduled in green or purple slots.
    This will force schedules for the \OR node to be of the form
    \begin{align*}
        &[\color{red}3_R\color{black},
        \color{blue}\SLOT\color{black},
        \mathunderline{applegreen}{12_a},
        \color{red}3_R\color{black},
        \color{blue}\SLOT\color{black},
        \mathunderline{blue-violet}{12_b},
        \color{red}3_R\color{black},
        \color{blue}\SLOT\color{black},
        \mathunderline{applegreen}{12_c},
        \color{red}3_R\color{black},
        \color{blue}\SLOT\color{black},
        \color{blue-violet}\SLOT\color{black}]
        \ \text{or}
        \\
        &[\color{red}3_R\color{black},
        \color{blue}\SLOT\color{black},
        \mathunderline{applegreen}{12_a},
        \color{red}3_R\color{black},
        \color{blue}\SLOT\color{black},
        \mathunderline{blue-violet}{12_b},
        \color{red}3_R\color{black},
        \color{blue}\SLOT\color{black},
        \color{applegreen}\SLOT\color{black},
        \color{red}3_R\color{black},
        \color{blue}\SLOT\color{black},
        \mathunderline{blue-violet}{12_c}],
   \end{align*} 
    for some mapping of $12_1$, $12_2$, and $12_3$ onto $12_a$, $12_b$, and $12_c$.
    Assume towards a contradiction that the $12_O$ edge is blue, observing an immediate constraint violation when scheduling $6_1$ and $6_2$. This demonstrates that the $12_O$ edge cannot be blue and the schedule for the \OR node must be of the form
    \[[\color{red}3_R\color{black},
    \mathunderline{blue}{6_a},
    \mathunderline{applegreen}{12_a},
    \color{red}3_R\color{black},
    \mathunderline{blue}{6_b},
    \mathunderline{blue-violet}{12_b},
    \color{red}3_R\color{black},
    \mathunderline{blue}{6_a},
    \mathunderline{applegreen}{12_c},
    \color{red}3_R\color{black},    
    \mathunderline{blue}{6_b},
    \mathunderline{blue-violet}{12_d}]\]
    (likewise, for some mapping of $12_1$, $12_2$, $12_3$, and $12_O$ onto $12_a$, $12_b$, $12_c$, and $12_d$).
    Thus, if all input edges are blue, then the $12_O$ edge must be either green or purple.
\end{proof} 

\subsection{Tension}
\label{sec:Tension}
$P_{d\varphi}$ is intended to have a valid schedule iff the underlying $\varphi$ is satisfied by some assignment of values to its variables. 
\wref{lem:OR} shows that the $12_O$ output edge of an \OR gadget can be blue iff the associated clause is satisfied by this assignment.
This section introduces Tension gadgets (shown in \wref{fig:Tension}), which ensure that in any global schedule for $P_{d\varphi}$ all $12_O$ edges must be blue, and hence that all clauses must be True.

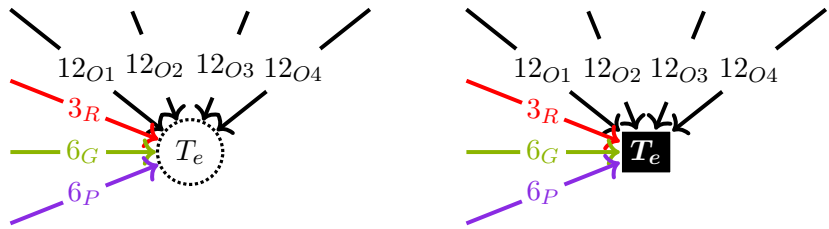
\begin{figure}[hbt] 
    \centering
    \begin{tikzpicture}[scale=2]
        \node[graph node B]   (tension)     at (1.5, -0.75){$T_e$};
        \node[invisible node] (red)         at (0.5-0.25, -0.25) {};
        \node[invisible node] (green)       at (0.5-0.25, -0.75) {};
        \node[invisible node] (purple)      at (0.5-0.25, -1.25) {};
        \node[invisible node] (in 1)        at (0.5-0.25, 0.25) {};
        \node[invisible node] (in 2)        at (1.16667-0.075, 0.25) {};
        \node[invisible node] (in 3)        at (1.83333+0.075, 0.25) {};
        \node[invisible node] (in 4)        at (2.5+0.25, 0.25) {};

        \path[<-] (tension) edge[std, ourBlack] node[edge descriptor] {$12_{O1}$} (in 1);
        \path[<-] (tension) edge[std, ourBlack] node[edge descriptor] {$12_{O2}$} (in 2);
        \path[<-] (tension) edge[std, ourBlack] node[edge descriptor] {$12_{O3}$} (in 3);
        \path[<-] (tension) edge[std, ourBlack] node[edge descriptor] {$12_{O4}$} (in 4);
        \path[->] (red) edge[std, ourRed] node[edge descriptor] {$\color{red}3_R$} (tension);
        \path[->] (green) edge[std, ourGreen] node[edge descriptor] {$\color{applegreen}6_G$} (tension);
        \path[->] (purple) edge[std, ourPurple] node[edge descriptor] {$\color{blue-violet}6_P$} (tension);
        
        \node[shorthand]      (tension)     at (4.5, -0.75){$T_e$};
        \node[invisible node] (red)         at (3.5-0.25, -0.25) {};
        \node[invisible node] (green)       at (3.5-0.25, -0.75) {};
        \node[invisible node] (purple)      at (3.5-0.25, -1.25) {};
        \node[invisible node] (in 1)        at (3.5-0.25, 0.25) {};
        \node[invisible node] (in 2)        at (4.16667-0.075, 0.25) {};
        \node[invisible node] (in 3)        at (4.83333+0.075, 0.25) {};
        \node[invisible node] (in 4)        at (5.5+0.25, 0.25) {};

        \path[<-] (tension) edge[std, ourBlack] node[edge descriptor] {$12_{O1}$} (in 1);
        \path[<-] (tension) edge[std, ourBlack] node[edge descriptor] {$12_{O2}$} (in 2);
        \path[<-] (tension) edge[std, ourBlack] node[edge descriptor] {$12_{O3}$} (in 3);
        \path[<-] (tension) edge[std, ourBlack] node[edge descriptor] {$12_{O4}$} (in 4);
        \path[->] (red) edge[std, ourRed] node[edge descriptor] {$\color{red}3_R$} (tension);
        \path[->] (green) edge[std, ourGreen] node[edge descriptor] {$\color{applegreen}6_G$} (tension);
        \path[->] (purple) edge[std, ourPurple] node[edge descriptor] {$\color{blue-violet}6_P$} (tension);
    \end{tikzpicture}
    \caption{
    A gadget (left) which applies tension to four inputs, ensuring that either $12_{O1}$, $12_{O2}$, $12_{O3}$, and $12_{O4}$ are scheduled in blue slots or no schedule can be found for $T_e$. To apply tension to less than 4 inputs, connect any unneeded inputs to its own male pendant node.
    }
    \label{fig:Tension}
\end{figure} 

\begin{lemma}[Tension gadget schedules]
	\label{lem:tension}
	Each Tension gadget $T_e$ has a valid local schedule iff all $12_O$ input edges are scheduled in blue slots.
\end{lemma}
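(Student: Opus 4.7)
The plan is a straightforward counting argument based on the local density of $T_e$ being exactly $1$, together with the colour-slot constraints inherited from the True Clock.

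First I would note that $T_e$ is incident to exactly seven edges with frequencies $[3,6,6,12,12,12,12]$, so its local density is
\[
    \frac{1}{3} + \frac{1}{6} + \frac{1}{6} + 4\cdot\frac{1}{12} \wrel= 1.
\]
This forces every day in the local schedule of $T_e$ to be occupied, and each edge of frequency $f$ to appear exactly once every $f$ days. Next, I would fix a window of twelve consecutive days (the least common multiple of all incident frequencies) and count available slots: by \wref{def:slots} this window contains $4$ red, $4$ blue, $2$ green and $2$ purple slots. In any \slotgood global schedule, the constant input $\color{red}3_R$ must occupy a red slot on each of its appearances, so it fills all four red slots; likewise $\color{applegreen}6_G$ fills both green slots and $\color{blue-violet}6_P$ fills both purple slots. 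This accounts for $4+2+2 = 8$ of the twelve days and leaves precisely the four blue slots unused.

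The four remaining inputs $12_{O1},\ldots,12_{O4}$ each have frequency $12$ and so must appear exactly once in the twelve-day window. Since the only free days are the four blue slots, each $12_{Oi}$ must be assigned to a distinct blue slot. Conversely, if all four $12_{Oi}$ are scheduled in the (distinct) blue slots, the resulting assignment
\[
    [\color{red}3_R\color{black},\,12_{Oa},\,\color{applegreen}6_G\color{black},\,\color{red}3_R\color{black},\,12_{Ob},\,\color{blue-violet}6_P\color{black},\,\color{red}3_R\color{black},\,12_{Oc},\,\color{applegreen}6_G\color{black},\,\color{red}3_R\color{black},\,12_{Od},\,\color{blue-violet}6_P\color{black}]
\]
(for some permutation $a,b,c,d$ of $1,2,3,4$) is a valid local schedule: every frequency constraint is met with equality. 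This establishes both directions of the equivalence.

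There is no real obstacle here; the only thing to check carefully is that the density-$1$ condition genuinely forces the red, green and purple slots to be used by the corresponding coloured constants (as opposed to by the $12_O$ inputs), but this is immediate because the constants must lie in their designated colour-slots by hypothesis on a \slotgood schedule, and their multiplicities saturate all such slots. Finally, the remark about fewer than four inputs is handled by observing that replacing a $12_{Oi}$ by an edge to a fresh pendant produces exactly the same counting and the argument goes through verbatim.
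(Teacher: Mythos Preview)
Your proposal is correct and follows essentially the same argument as the paper: use density $1$ to force a full schedule, pin the coloured constants $\color{red}3_R$, $\color{applegreen}6_G$, $\color{blue-violet}6_P$ to their slots via the \slotgood hypothesis, and conclude that only the four blue slots remain for the $12_{Oi}$, then exhibit the obvious valid schedule. The paper phrases the deduction via a $6$-day partial-schedule template rather than an explicit $12$-day slot count, but the content is identical.
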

\begin{proof}
    The single node of each tension gadget $T_e$, as shown in \wref{fig:Tension}, has tasks
    [$\color{red}3_R$,
    $\color{applegreen}6_G$,
    $\color{blue-violet}6_P$,
    $12_{O1}, 12_{O2}, 12_{O3}, 12_{O4}$] 
    and density $D=1$, so each task with frequency $f$ must appear exactly once in every $f$ day period. 
    Further, the $\color{red}3_R$, $\color{applegreen}6_G$, and $\color{blue-violet}6_P$ edges are indirectly connected to the True Clock such that they must be scheduled in slots of their respective colours. Partial schedules for $T_e$ nodes must therefore be of the form
    [$\color{red}3_R$,
    $\color{blue}\SLOT$,
    $\color{applegreen}6_G$,
    $\color{red}3_R$,
    $\color{blue}\SLOT$,
    $\color{blue-violet}6_P$
    ].
    All empty slots in schedules of this form are blue, so the remaining $12_{O1}, 12_{O2}, 12_{O3}, 12_{O4}$ edges must be scheduled in blue slots, creating valid schedules of the form
    \[[\color{red}3_R\color{black},
    \mathunderline{blue}{12_{Oa}}\color{black},
    \color{applegreen}6_G\color{black},
    \color{red}3_R\color{black},
    \mathunderline{blue}{12_{Ob}}\color{black},
    \color{blue-violet}6_P\color{black},
    \color{red}3_R\color{black},
    \mathunderline{blue}{12_{Oc}}\color{black},
    \color{applegreen}6_G\color{black},
    \color{red}3_R\color{black},
    \mathunderline{blue}{12_{Od}}\color{black},
    \color{blue-violet}6_P\color{black}
    ],\]
    for any mapping of $12_{O1}$, $12_{O2}$, $12_{O3}$, and $12_{O4}$ onto $12_{Oa}$, $12_{Ob}$, $12_{Oc}$, and $12_{Od}$.
\end{proof}

\subsection{Proof of Reduction}

With these preparations we can, at long last, prove \wref{lem:dps=3sat}, and hence complete the proof of \wref{thm:sat-inapprox}.

\begin{proofof}{\wref{lem:dps=3sat} on page \pageref{lem:dps=3sat}}
	Consider a polycule $\mathcal P_{d\varphi}$ built from some 3-CNF formula $\varphi = c_1\land \cdots\land c_m$ over variables $X = \{x_1,\ldots, x_{n}\}$ using the algorithm defined by \wref{def:gadgets-assemble}. 
	
    First, suppose that there is a variable assignment $v:X\to\{\mathrm{True},\mathrm{False}\}$ which satisfies all clauses $c_i\in C$.
	We construct a \slotgood schedule $S$ for $\mathcal P_{d\varphi}$ as follows:
\begin{thmenumerate}{lem:dps=3sat}
    \item At the variable gadget for each $x_i\in X$, schedule $3_{iR}$ in red slots if $v(x_i) = \mathrm{True}$ and in the blue slots otherwise.
	This fixes a schedule for all edges in the variable layer.
    \item \wref{lem:D3} and \wref{lem:D6} give schedules for duplication gadgets, whose number should be sufficient to cover the needs of all other gadgets.
    \item As $v$ satisfies all clauses $c_i\in C$, each clause has at least one literal ($x_i$ or $\overline{x_i}$) which evaluates to True. Hence, each clause has one input $3_{iR}$ or $3_{iB}$ edge which will be scheduled in red slots.
    By \wref{lem:OR}, there therefore exists a valid schedule for each \OR gadget in the clause layer such that the $12_O$ output edge of that gadget is scheduled in blue slots.
    \item \wref{lem:tension} shows that this implies a valid schedule for each Tension gadget.
\end{thmenumerate}
    These valid schedules for each layer of gadgets combine to form a global schedule $S$ for $\mathcal P_{d\varphi}$.
	
	Now, assume that we are given a schedule $S$ for $\mathcal P_{d\varphi}$.
	By applying \wref{lem:variable-6-color} to the True Clock, we can assign coloured slots to $S$. It then follows from the construction of $\mathcal P_{d\varphi}$ that $S$ must be \slotgood.
	Set $v(x_i) = \mathrm{True}$ if $S$ schedules $3_{iR}$ in red slots and $v(x_i) = \mathrm{False}$ otherwise.
    \wref{lem:OR} shows that for each $c_{i}\in C$, at least one input edge must be red for the output edge to be blue; \wref{lem:tension} shows that all such output edges must be blue given the existence of $S$. Thus, all clauses $c_{i}\in C$ evaluate to True under $v$, and the claim follows.

	It remains to argue that our reduction can be realised in polynomial time.
	The size of the polycule $\mathcal P_{d\varphi}$ is clearly polynomial in the size of the formula, leaving the gadgets themselves. Variable, \OR, and Tension gadgets have both constant size and constant number. D6 duplicator gadgets are linear in size with respect to the number of $\color{blue-violet}6_P$ and $\color{applegreen}6_G$ edges consumed by the system as a whole, bounded by the number of Tension gadgets, hence by the number of clauses. Blue D3 duplicator gadgets are linear with respect to the number of $\color{blue}3_B$ edges consumed by \OR gadgets, \ie{} the number of clauses. $\color{red}3_R$ edges are consumed by both Tension gadgets and \OR gadgets, but this still leaves the size of their D3 duplicator gadgets bounded by the number of clauses and hence by the size of the formula.
    Thus, it is easy to implement our reduction in polynomial time.
\end{proofof}

\plaincenter{{\bfseries%
	\textcolor{red}{*}%
	\qquad%
	\textcolor{blue}{*}%
	\qquad%
	\textcolor{applegreen}{*}%
	\qquad%
	\textcolor{red}{*}%
	\qquad%
	\textcolor{blue}{*}%
	\qquad%
	\textcolor{blue-violet}{*}%
}}

\section{Conclusion and Open Problems}
In this paper, we revisited the Polyamorous Scheduling problem. Via the observation that for an approximation, the hardest local Bamboo Garden Trimming instances around a single person dominate the solution, we found simple generalizations of existing algorithms for Bamboo Garden Trimming that yield constant-factor approximations for Poly Scheduling.

This paper opens up several avenues for future work. 
The most obvious open problem concerns algorithms with better approximation ratios and tighter lower bounds.
We now know that a $5.24$-approximation is possible and that $(\frac43-\varepsilon)$-approximations are NP-hard, leaving a sizable gap.
Apart from progress on this general question, a natural research direction is to look at restricted classes of polycules. For example, can we obtain better approximation guarantees on graphs with bounded maximum degree~$\Delta$ (beyond the simple results from~\cite{GasieniecSmithWild2024})?
Can we exploit bipartite polycules towards a better approximation?

For the poly density of Decision Poly Scheduling, we showed that $d\le \frac14$ implies feasibility;
as for Pinwheel Scheduling, there are infeasible instances with density $d = \frac56 + \varepsilon$ for any $\varepsilon >0$.
Can we say anything for densities between $\frac14$ and $\frac56$?
In particular, we pose the following question.

\begin{question}[5/6-Poly-Density Question]
	Does every Decision Poly Scheduling instance with poly density $d\le \frac56$ admit a schedule?
\end{question}

Given the tight connection between the problems, it is natural to ask whether Bamboo Garden Trimming can be arbitrarily well approximated, or whether there, too, a hardness-of-approximation result can be shown.

There are also several generalizations of Poly scheduling introduced by \cite{GasieniecSmithWild2024} to which our results could be extended: Fungible Polyamorous Scheduling allows for each person $P$ to have to $s$ meetings per day, while Secure Polyamorous Scheduling allows for meetings to contain cliques of any size instead of only pairs. Can poly density be generalized to these problems? Can it identify classes of instances which do or do not permit solutions? And finally, how well do Reduce-Fastest and other known heuristic perform on them?

	\myacknowledgements
\bibliography{refs}

\clearpage
\appendix

\input{}

\end{document}